\theoremstyle{plain}%
\newtheorem{theorem}{Theorem}[]
\newtheorem{lemma}{Lemma}[]
\newtheorem{corollary}{Corollary}[]
\theoremstyle{definition}%
\newcommand{\1}{{\textrm{1} \kern -.41em \textrm{1} }}
\numberwithin{equation}{section}
\newcommand{\vja}{v_j^{(1)}}
\newcommand{\vjb}{v_j^{(2)}}
\newcommand{\vjc}{v_j^{(3)}}
\begin{document}
\title{On double-resolution imaging and discrete tomography}
\author{Andreas Alpers and Peter Gritzmann}
\address{Zentrum Mathematik, Technische Universit\"at M\"unchen, D-85747 Garching bei M\"unchen, Germany}
\email{alpers@ma.tum.de, gritzman@tum.de}


\maketitle
\begin{abstract}
Super-resolution imaging aims at improving the resolution of an image by enhancing it with other images or data that might have been acquired using different imaging techniques or modalities. In this paper we consider the task of doubling, in each dimension, the resolution of grayscale images of binary objects by fusion  with double-resolution tomographic data that have been acquired from two viewing angles. We show that this task is polynomial-time solvable if the gray levels have been reliably determined. The problem becomes $\mathbb{N}\mathbb{P}$-hard if the gray levels of some pixels come with an error of $\pm1$ or larger. The  $\mathbb{N}\mathbb{P}$-hardness persists for any larger resolution enhancement factor. This means that noise does not only affect the quality of a reconstructed image but, less expectedly, also the algorithmic tractability of the inverse problem itself.
\end{abstract}



\section{Introduction} 
Different imaging techniques in tomography have different characteristics that strongly depend on the specific data acquisition setup and the imaged tissue/material. Hence it is a major issue (and at the heart of current research, see, e.g., \cite{modal6, modal4, breaklimit1, modal7, modal3, modal5, modal2}) to improve the resolution of an image by combining different imaging techniques. In general it is, however, not clear, how this can actually be performed efficiently. The present paper addresses this multimodal approach from a basic algorithmic point of view in two different ways using a discrete model of binary image reconstruction. First, we show that, indeed, the resolution of binary objects can be efficiently improved if the data are reliable. We derive a polynomial-time algorithm for double-resolution that works correctly for exact data (and can be extended to a heuristic for the more general case). On the other hand, we show that noisy data in the low-resolution image do not simply reduce the quality of the reconstruction but add additional complexity. Hence, intuitively speaking, and in a sense that will be made mathematically precise later, it might not be possible to compensate for the faultiness of low-resolution imaging efficiently by incorporating other additional higher-resolution information (even if the latter is noise-free).

More specifically, the results of the present paper are motivated by the task of enhancing the resolution of reconstructed tomographic images obtained from binary objects representing, for instance, crystalline structures, nanoparticles or two-phase samples \cite{batenburgnature, alpersgardner13, agms-15, DTnature2,  glidingarc-15}. The reconstructed tomographic image might contain several gray levels, which, depending on the accuracy of the reconstruction, result from the fact that low-resolution pixels may cover different numbers of black high-resolution pixels. For turning the grayscale image into a high-resolution binary image we utilize the gray levels and two additional high-resolution projections, which may have been acquired by different imaging techniques or modalities (e.g., via scanning transmission electron microscopy \cite{DTnature2}). More precisely, we study the task of reconstructing binary~$m\times n$-images from row and column sums and additional constraints, so-called \emph{block constraints}, on the number of black pixels to be contained in the $k\times k$-blocks resulting from a subdivision of each pixel in the $m/k\times n/k$ low-resolution image.  We remark that we do not require that the X-ray data are taken from orthogonal directions. In our context it suffices that the X-ray data have been taken with high-resolution according to the discretization of the low-resolution image.

Figure~\ref{fig:process} illustrates the process. The example given in this figure also shows that the block-constraints can help to narrow down the solution space. In fact, the solution shown in this figure is uniquely determined by the input (the row and column sums and block constraints). The row and column sums alone do not determine the solution uniquely and, of course, neither do the block constraints.

\begin{figure}[htb]
\centering
\includegraphics[width=0.37\textwidth]{}
\caption{The super-resolution imaging task \textsc{DR}. (a) Original (unknown) high-resolution image, (b) the corresponding low-resolution grayscale image, (c) gray levels converted into block constraints, (d) taken in combination with high-resolution row and column sum data. The task is to reconstruct from~(d) the original binary image shown in~(a). } \label{fig:process}
\end{figure}

Apart from super-resolution imaging \cite{superresolutionCT2, siam, superresolutionCT3, superresolutionbook, superresolution1, superresolutionCT4}, and its particular applications \cite{apkh-06, superresolution2, rpkh07, superresolutionCT1} in discrete tomography \cite{batenburgnature, gg97, gritzmann97, kubaherman1, kubaherman2, irvingjerrum94, ksbsko-95, kahkrp-09, sksbko-93}, the problems discussed in this paper are also relevant in other contexts. 

From a combinatorial point of view, they can be viewed as reconstructing binary matrices from given row and column sums and some additional constraints.  Unexpected complexity jumps based on the results of the present paper are discussed in~\cite{agwindowconstraints} and, to some extent, at the end of this section. Background information on such problems involving different kinds of additional constraints can be found in \cite[Sect.~4]{brualdi}. 

Other applications belong to the realm of \emph{dynamic discrete tomography}~\cite{agdynamic, agwindowconstraints, agms-15, glidingarc-15}. For instance, in plasma particle tracking, some particles are reconstructed at time~$t.$ Between~$t$ and the next time step~$t+1$ the particles may have moved to other positions. The task is to reconstruct their new positions, again from few projections. One way of incorporating additional prior knowledge about the movement leads to block constraints of the kind discussed in this paper; see \cite{agdynamic}.  

As the task of reconstructing a binary matrix from its row and column sums can be formulated as the task of finding a $b$-matching in a bipartite graph, we remark that our theorems can also be seen as results on finding $b$-matchings that are subject to specific additional constraints. Related (but intrinsically quite different) results on matchings subject to so-called budget constraints can be found, e.g., in \cite{budget2}.

For super-resolution imaging the two main contributions of this paper are as follows. On the one hand, we show that reliable bimodal tomographic data can be utilized efficiently, i.e., the resolution, in each dimension, can be doubled in polynomial time if the gray levels of the low-resolution image have been determined precisely (Thm.~\ref{thm:main1}). On the other hand, the task becomes already intractable (unless $\mathbb{P}=\mathbb{N}\mathbb{P}$) if the gray levels of some pixels come with some small error of $\pm1$ (Thm.~\ref{thm:main2} and Cor.~\ref{cor:largebox}). This proves that noise does not only affect the quality of the reconstructed image but also the algorithmic tractability of the inverse problem itself. Hence the possibility of compensating noisy imaging by including bimodal information may in practice be jeopardized by its algorithmic complexity. While this is not the focus of the present paper let us point out that, even in the presence of (a reasonably restricted degree of) noise, the approach leading to Thm.~\ref{thm:main1} can, in principle, be extended to a fast heuristic for doubling the resolution of such images in each dimension. At present, however, we do not know how such an approach would compare with other heuristics on real-world data.

From the perspective of discrete tomography, the contributions of this paper can be interpreted as follows. It is well known that the problem of reconstructing binary images from X-ray data taken from two directions can be solved in polynomial time~\cite{gale57, ryser57}. Typically, this information does not determine the image uniquely (see, e.g., \cite{glw11} and the papers quoted there). Hence, one would like to take and utilize additional measurements. If, however, we add additional constraints that enforce that the solutions satisfy the X-ray data taken from a third direction, then the problem becomes $\mathbb{N}\mathbb{P}$-hard, and it remains $\mathbb{N}\mathbb{P}$-hard if X-ray data from even more directions are given \cite{ggp-99} (see also \cite{duerr-Guinez-matamala-12} for results on a polyatomic version). 

As it turns out, the case of block constraints behaves somewhat differently. Thm.~\ref{thm:main2} and Cor.~\ref{cor:largebox} show that the problem of reconstructing a binary image from X-ray data taken from two directions is again $\mathbb{N}\mathbb{P}$-hard if we add \emph{several} (but not all) block constraints (which need to be satisfied with equality).  
However, and possibly less expectedly, if we include \emph{all} block constraints, then the problem becomes polynomial-time solvable (Thm.~\ref{thm:main1}). If, on the other hand, from \emph{all} block constraints \emph{some} of the data come with \emph{noise} at most~$\pm1,$ then the problem becomes again $\mathbb{N}\mathbb{P}$-hard (Thm.~\ref{thm:main2} and Cor.~\ref{cor:largebox}). And yet again, if from \emph{all} block constraints \emph{all} of the data are \emph{sufficiently noisy}, then the problem is in~$\mathbb{P}$ (as this is again the problem of reconstructing binary images from X-ray data taken from two directions). An overview of these complexity jumps is given in Fig.~\ref{fig:complexityjumps}.

\begin{figure}[htb]
\centering
\includegraphics[width=0.5\textwidth]{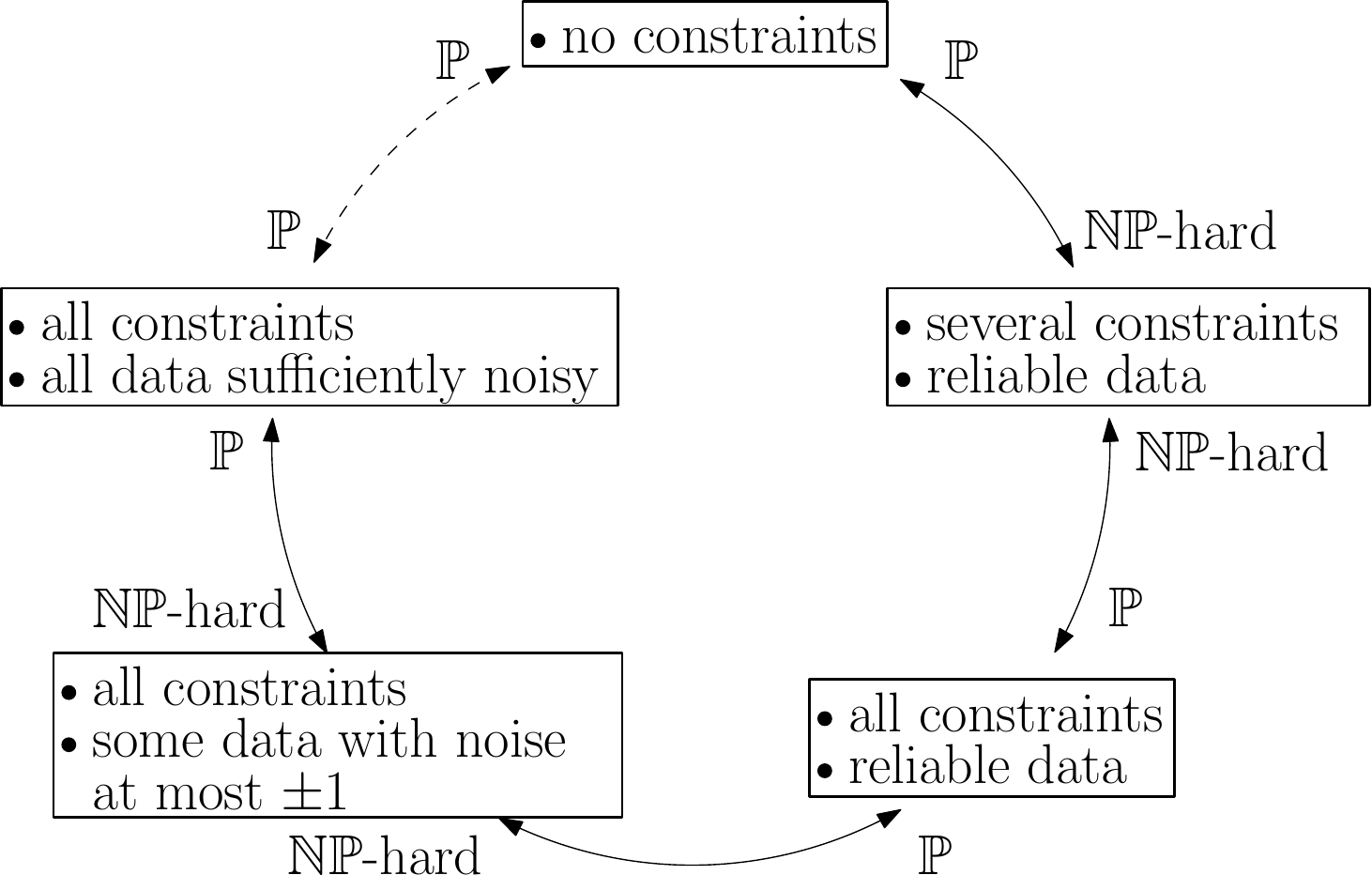}
\caption{Overview of complexity jumps for the problem of reconstructing a binary image from row and column sums and additional $2\times 2$-block constraints.} \label{fig:complexityjumps}
\end{figure}

\section{Notation and Main Results}\label{sect:notation}

Let $\mathbb{Z},$ $\mathbb{N},$ and $\mathbb{N}_0$ denote the set of integers, natural numbers, and non-negative integers, respectively. For $k \in \mathbb{N}$ set $k\mathbb{N}_0:=\{ki:i\in \mathbb{N}_0\},$ $[k]:=\{1,\dots,k\},$ and $[k]_0:=\{0,\dots,k\}.$ 
With $\1$ we denote the all-ones vector of the corresponding dimension. The cardinality of a finite set $F\subseteq \mathbb{Z}^d$ is denoted by~$|F|.$

In this paper we use Cartesian coordinates (rather than matrix notation) to represent pixels in an image. In particular, the two numbers $i$ and $j$ in a pair $(i,j)$ denote the $x$- and $y$-coordinate of a point, respectively. In the following we consider grids~$[m]\times[n].$
The set $\{i\}\times[n]$ and $[m]\times\{j\}$ is called \emph{column~$i$} and \emph{row~$j,$} respectively.  Let, in the following, $k\in \mathbb{N}.$ Any set $\{(i-1)k+1,\dots,ik\}\times[n]$ is a \emph{vertical strip (of width~$k$)}. A \emph{horizontal strip (of width~$k$)} is a set of the form $[m]\times\{(j-1)k+1,\dots,jk\}.$

Sets of the form $([a,b]\times[c,d])\cap\mathbb{Z}^2,$ with $a,b,c,d\in\mathbb{Z}$ and $a\leq b,$ $c\leq d,$ are called \emph{boxes.}  For $i,j\in\mathbb{N}$ let $B_k(i,j):=B(i,j):=(i,j)+[k-1]_0^2.$  Defining for any $k\in\mathbb{N}$ and $m,n\in k\mathbb{N}$ the set of \emph{(lower-left) corner points} $C(m,n,k):=([m]\times [n])\cap(k\mathbb{N}_0+1)^2,$ we call any box $B_k(i,j)$ with $(i,j)\in C(m,n,k)$ a \emph{block}. The blocks form a partition of $[m]\times[n],$ i.e., $\dot{\bigcup}_{(i,j)\in C(m,n,k)}B_k(i,j)=[m]\times[n].$  

Let us remark that, as defined, all our images consist of lattice points. However, the figures will be given in terms of pixels. Accordingly, the block $B_k(i,j)$ will be depicted as $(i,j)+[0,k-1]^2;$ see Fig.~\ref{fig:gridimage}. The individual pixels are $(p,q)+[0,1]^2,$ $p,q\in[2],$ i.e., are identified with the lattice point at their lower left corner. Of course, the fact that a point is or is not present in a solution is indicated by a black or a white pixel, respectively.

\begin{figure}[htb]
\centering
\includegraphics[width=0.4\textwidth]{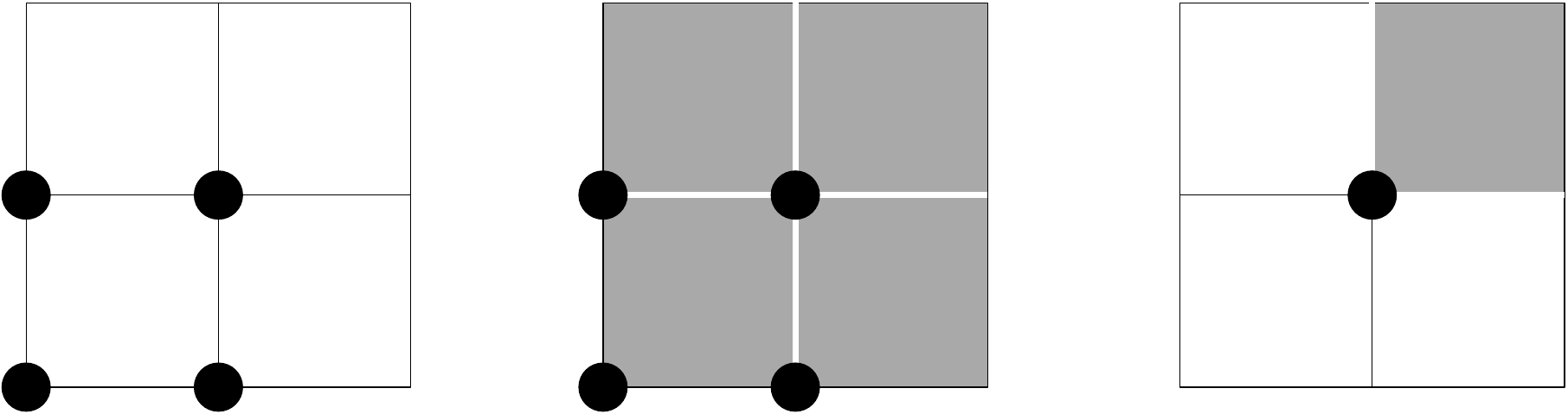}
\caption{Lattice points (left) and pixels (middle); right: pixel associated with its lattice point.}\label{fig:gridimage}
\end{figure}

For $\varepsilon,k\in\mathbb{N}_0$ with $k\geq2$ we define the task of (noisy) super-resolution\\ 

\hspace*{-4ex}
\begin{minipage}{.9\textwidth}
{\small
\textsc{nSR}$(k,\varepsilon)$\\[-5ex]

\hspace*{1ex}\begin{minipage}{0.99\textwidth}
\begin{alignat*}{8}
&\textnormal{Instance:}\:&& \omit \rlap{$\displaystyle m,n \in k\mathbb{N},$}\\
&&&\omit \rlap{$\displaystyle r_1,\dots,r_n \in \mathbb{N}_0,$}&&&&&& \textnormal{(row sum measurements)} \\
&&&\omit \rlap{$\displaystyle c_1,\dots,c_m \in \mathbb{N}_0,$}&&&&&& \textnormal{(column sum measurem.)} \\
&&&\omit \rlap{$\displaystyle R\subseteq C(m,n,k),$}&&&&&& \textnormal{(corner points of reliable}\\[-1ex]
&&&&&&&&& \textnormal{\hspace*{1ex}gray value measurem.)}\\
&&&\omit \rlap{$\displaystyle v(i,j)\in[k^2]_0,$} &&&&(i,j)\in C(m,n,k), &&  \textnormal{(gray value measurem.)}\\[1.2ex]
&\textnormal{Task:}\:&& \omit \rlap{Find  $\xi_{p,q}\in\{0,1\},$ \:$(p,q)\in[m]\times[n],$ with }\\
\displaystyle &&&\sum_{\mathclap{p\in[m]}} \xi_{p,q}&&=r_q,&&  q\in[n], &&\textnormal{(row sums)} \\ 
\displaystyle &&&\sum_{q\in[n]} \xi_{p,q}&&=c_p,&& p \in [m], &&\textnormal{(column sums)}\\  
\displaystyle &&&\sum_{\mathclap{(p,q)\in B_k(i,j)}}\xi_{p,q}&&=v(i,j),&& (i,j)\in R, &&\textnormal{(block constraints)}\\
\displaystyle &&&\sum_{\mathclap{(p,q)\in B_k(i,j)}}\xi_{p,q}&&\in v(i,j)+[-\varepsilon,\varepsilon], &\hspace*{3ex}& (i,j)\in C(m,n,k)\setminus R,\:&&\textnormal{(noisy block constraints)}\\
&&& \omit \rlap{or decide that no such solution exists.}\\
\end{alignat*}
\end{minipage}
}
\end{minipage}

The numbers $r_1,\dots,r_n$ and $c_1,\dots,c_m$ are the row and column sum measurements of the high-resolution binary $m\times n$ image, $v(i,j)\in[k^2]_0$ corresponds to the gray value of the low-resolution $k\times k$-pixel at~$(i,j)$ of the low-resolution $m/k\times n/k$ grayscale image, and $R$ is the set of low-resolution pixel locations for which we assume that the gray values have been determined reliably, i.e., without error. The number $\varepsilon$ is an error bound for the remaining blocks. (While it may seem unusual to denote a non-negative integer by~$\varepsilon$ we chose this notation to indicate the specific role of~$\varepsilon$ of quantifying an error.) The task is to find a binary high-resolution image satisfying the row and column sums such that the number of black pixels in each block is the gray value for the corresponding $k\times k$-pixel low-resolution image. Clearly, a necessary (and easily verified) condition for feasibility is that \[\sum_{q\in[n]}r_q=\sum_{p\in[m]}c_p.\] In the following we will assume without loss of generality that this is always the case.

Our special focus is on double-resolution imaging, i.e., on the case $k=2.$ For~$\varepsilon>0$ we define \textsc{nDR}$(\varepsilon):=$\textsc{nSR}$(2,\varepsilon).$ In the reliable situation, i.e., for $\varepsilon=0,$ we simply speak of \emph{double-resolution} and set  \textsc{DR}$:=$\textsc{nSR}$(2,0).$ (Then, of course, the set~$R$ can be omitted from the input.)

In the present paper, we show that the resolution of any reliable grayscale image can in fact be doubled in each dimension in polynomial-time if X-ray data are provided from two viewing angles at double-resolution. In other words, we show
\begin{theorem}  \label{thm:main1}
$\textsc{DR}\in\mathbb{P}.$ 
\end{theorem}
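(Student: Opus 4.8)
The plan is to reduce \textsc{DR} to a handful of elementary consistency checks together with a single maximum‑flow computation. Write $m=2\bar m$, $n=2\bar n$ and index the blocks by $(a,b)\in[\bar m]\times[\bar n]$, where block $(a,b)=B_2(2a-1,2b-1)$ consists of the pixels $(2a-1,2b-1),(2a,2b-1),(2a-1,2b),(2a,2b)$. For a candidate image $\xi$, let $x_{a,b}$ be the number of black pixels of block $(a,b)$ lying in image‑row $2b-1$ (the lower block‑row) and $y_{a,b}$ the number lying in image‑column $2a-1$ (the left block‑column). Since the block constraint forces block $(a,b)$ to contain exactly $v(a,b)$ black pixels, its lower and upper rows contain $x_{a,b}$ and $v(a,b)-x_{a,b}$ black pixels and its left and right columns contain $y_{a,b}$ and $v(a,b)-y_{a,b}$. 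Hence the row sum conditions become $\sum_a x_{a,b}=r_{2b-1}$ and $\sum_a\bigl(v(a,b)-x_{a,b}\bigr)=r_{2b}$ for every $b$, and the column sum conditions become $\sum_b y_{a,b}=c_{2a-1}$ and $\sum_b\bigl(v(a,b)-y_{a,b}\bigr)=c_{2a}$ for every $a$. The two conditions in a horizontal (resp. vertical) strip are compatible only if $r_{2b-1}+r_{2b}=\sum_a v(a,b)$ (resp. $c_{2a-1}+c_{2a}=\sum_b v(a,b)$); these are necessary and checkable up front, so I would assume them, after which it suffices to impose $\sum_a x_{a,b}=r_{2b-1}$ for all $b$ and $\sum_b y_{a,b}=c_{2a-1}$ for all $a$.

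Next I would record, by a short case distinction on $v:=v(a,b)\in[4]_0$, which pairs $(x_{a,b},y_{a,b})$ are realizable by an actual $0$–$1$ filling of the block; a filling with lower‑row count $x$, left‑column count $y$, and total $v$ exists iff there is a $2\times2$ binary matrix with row sums $(x,v-x)$ and column sums $(y,v-y)$. For $v\in\{0,4\}$ the pair is forced; for $v=1$ all $(x,y)\in\{0,1\}^2$ occur; for $v=3$ all $(x,y)\in\{1,2\}^2$ occur; and for $v=2$ exactly the pairs with $x=1$ or $y=1$ occur (equivalently all of $\{0,1,2\}^2$ except $(0,0),(2,0),(0,2),(2,2)$). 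Thus for $v\neq2$ the admissible set of $(x_{a,b},y_{a,b})$ is a product of intervals, so there the $x$‑family and the $y$‑family are completely independent; only the blocks with $v(a,b)=2$ couple them, and only through the clause ``$x_{a,b}=1$ or $y_{a,b}=1$''. To resolve this coupling I would branch each $v=2$ block into an \emph{X‑mode} (set $y_{a,b}:=1$, allow $x_{a,b}\in\{0,1,2\}$) or a \emph{Y‑mode} (set $x_{a,b}:=1$, allow $y_{a,b}\in\{0,1,2\}$); together the two modes realize precisely the admissible $v=2$ pairs. Once every $v=2$ block has been assigned a mode, the $x$‑problem decouples across horizontal strips and the $y$‑problem across vertical strips: in horizontal strip $b$, $\sum_a x_{a,b}$ attains exactly the integers in an interval $[\mathrm{lo}_b(g_b),\mathrm{hi}_b(g_b)]$, where $g_b$ is the number of $v=2$ blocks of that strip put in X‑mode and $\mathrm{lo}_b,\mathrm{hi}_b$ are explicit affine functions of $g_b$ whose gap only widens as $g_b$ grows (an X‑mode block contributes $\{0,1,2\}$ instead of $\{1\}$). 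Hence strip $b$ is satisfiable iff $g_b$ is at least an explicit threshold $\lambda_b$ (and at most the number of $v=2$ blocks of the strip), and symmetrically vertical strip $a$ is satisfiable iff the number of its $v=2$ blocks in Y‑mode is at least an explicit threshold $\mu_a$.

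It remains to decide whether modes can be chosen so that all thresholds hold simultaneously. Putting a $v=2$ block in X‑mode helps its horizontal strip but consumes a unit of Y‑mode budget in its vertical strip, so this is exactly a degree‑constrained subgraph problem on the bipartite graph whose vertices are the horizontal and vertical strips and whose edges are the $v=2$ blocks: select a set $F$ of edges (the X‑mode blocks) with $\deg_F$ of each horizontal strip $b$ at least $\lambda_b$ and $\deg_F$ of each vertical strip $a$ at most $(\textrm{number of }v{=}2\textrm{ blocks in strip }a)-\mu_a$. This is a standard maximum‑flow feasibility question — source to vertical strips with those capacities, unit‑capacity edges for the $v=2$ blocks, horizontal strips to sink with capacities $\lambda_b$, feasible iff the sink edges can be saturated — and is solvable in polynomial time. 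If it is feasible, an integral optimal flow fixes the modes; then greedily choosing $x_{a,b}$ in each horizontal strip to sum to $r_{2b-1}$ and $y_{a,b}$ in each vertical strip to sum to $c_{2a-1}$ within their (contiguous) ranges, and finally filling each block by a $2\times2$ matrix with the chosen margins — which exists since $(x_{a,b},y_{a,b})$ is admissible by construction — yields a solution to \textsc{DR}; if any consistency check, threshold, or the flow is infeasible, \textsc{DR} has no solution. All steps run in time polynomial in $m,n$, so $\textsc{DR}\in\mathbb{P}$. The step I expect to be the crux is recognizing that the row/column constraints split cleanly into an $x$‑part and a $y$‑part that interact only through the single clause attached to gray value $2$, and that the resulting selection of clauses is itself a tractable bipartite‑flow problem rather than an exponential search; the per‑value realizability analysis and the threshold formulas are then routine.
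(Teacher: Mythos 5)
Your proposal is correct, and it takes a genuinely different route from the paper. The paper introduces local switches and \emph{reduced} solutions, proves (via a case analysis on strip-wise block-type counts) that the row and column sums split uniquely among the five single-graylevel subproblems $\textsc{DR}(\nu)$, and then solves $\textsc{DR}(2)$ through a totally unimodular LP. You instead parametrize each block by its lower-row and left-column margins $(x_{a,b},y_{a,b})$, observe that the row constraints involve only the $x$'s and the column constraints only the $y$'s, and that the per-block admissible sets are products of intervals except at gray value $2$, where the coupling is exactly the clause ``$x=1$ or $y=1$''; your realizability table for $v\in[4]_0$ and the interval/threshold computation ($\lambda_b$, $\mu_a$, with monotone widening in the number of X-mode blocks) are all correct, and the mode-selection problem is indeed a degree-constrained bipartite subgraph question decidable by one max-flow computation, with the converse direction handled correctly by pruning an arbitrary feasible mode assignment down to one saturating the sink capacities. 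In effect, your flow on the $v=2$ blocks plays the role of the paper's Lem.~\ref{lem:unimod2}, while your interval thresholds absorb the work the paper does via reduction and Lem.~\ref{lem:cases} to pin down how the sums distribute over gray levels; your route is more direct and avoids the switching machinery entirely, whereas the paper's reduced-solution framework buys extra structural information that it reuses for the uniqueness test (Thm.~\ref{main:unique}), which your argument does not immediately provide.
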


Figures~\ref{fig:drexample2} and~\ref{fig:drexample1} illustrate the performance of the algorithm for \textsc{DR} applied to two phantoms. The results have been obtained within a fraction of a second on a standard PC.

In Fig.~\ref{fig:drexample2}, the original phantom is a binary $200\times200$ image of a crystalline sample taken from~\cite{philmag}. It is assumed that the low-resolution grayscale image shown in Fig.~\ref{fig:drexample2}(a) has been obtained by some imaging method and that double-resolution X-ray information in the two standard directions is also available. Application of the algorithm (see Algorithm~\ref{alg:alg1}) yields the binary image shown in Fig.~\ref{fig:drexample2}(c).

\begin{figure}[htb]
\centering
\subfigure[]{\includegraphics[width=0.27\textwidth]{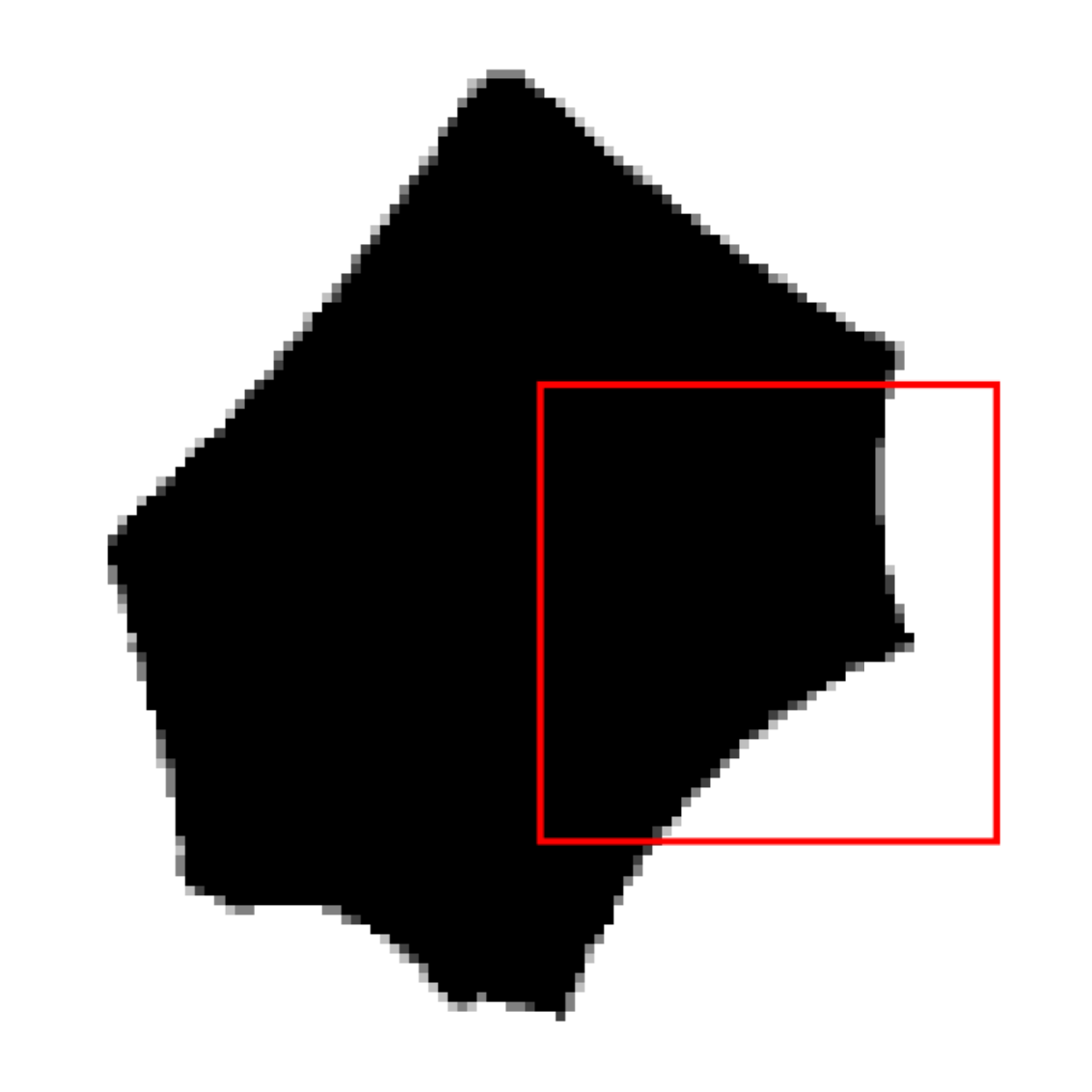}}\hspace*{0ex}
\subfigure[]{\includegraphics[width=0.27\textwidth]{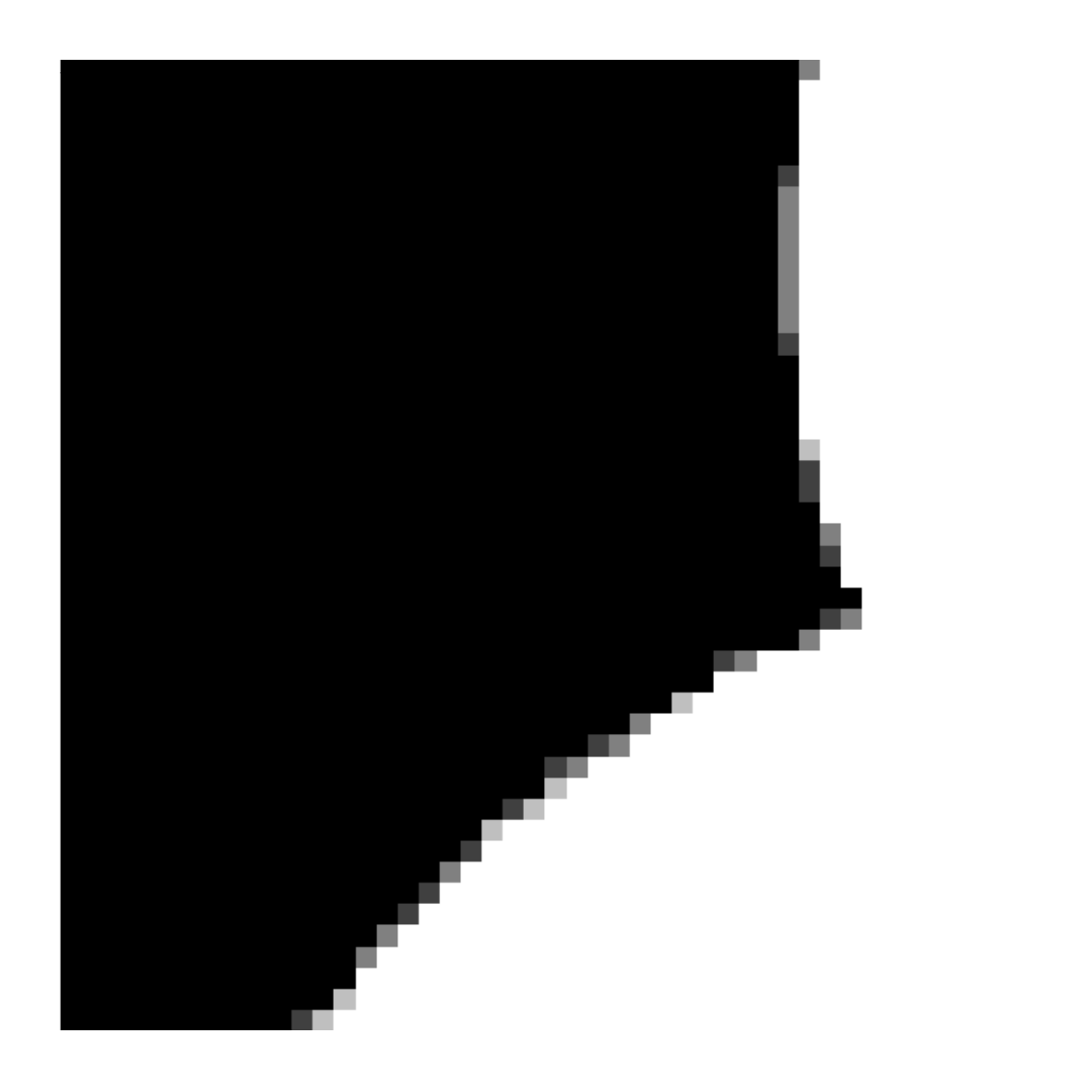}}\hspace*{0ex}
\subfigure[]{\includegraphics[width=0.27\textwidth]{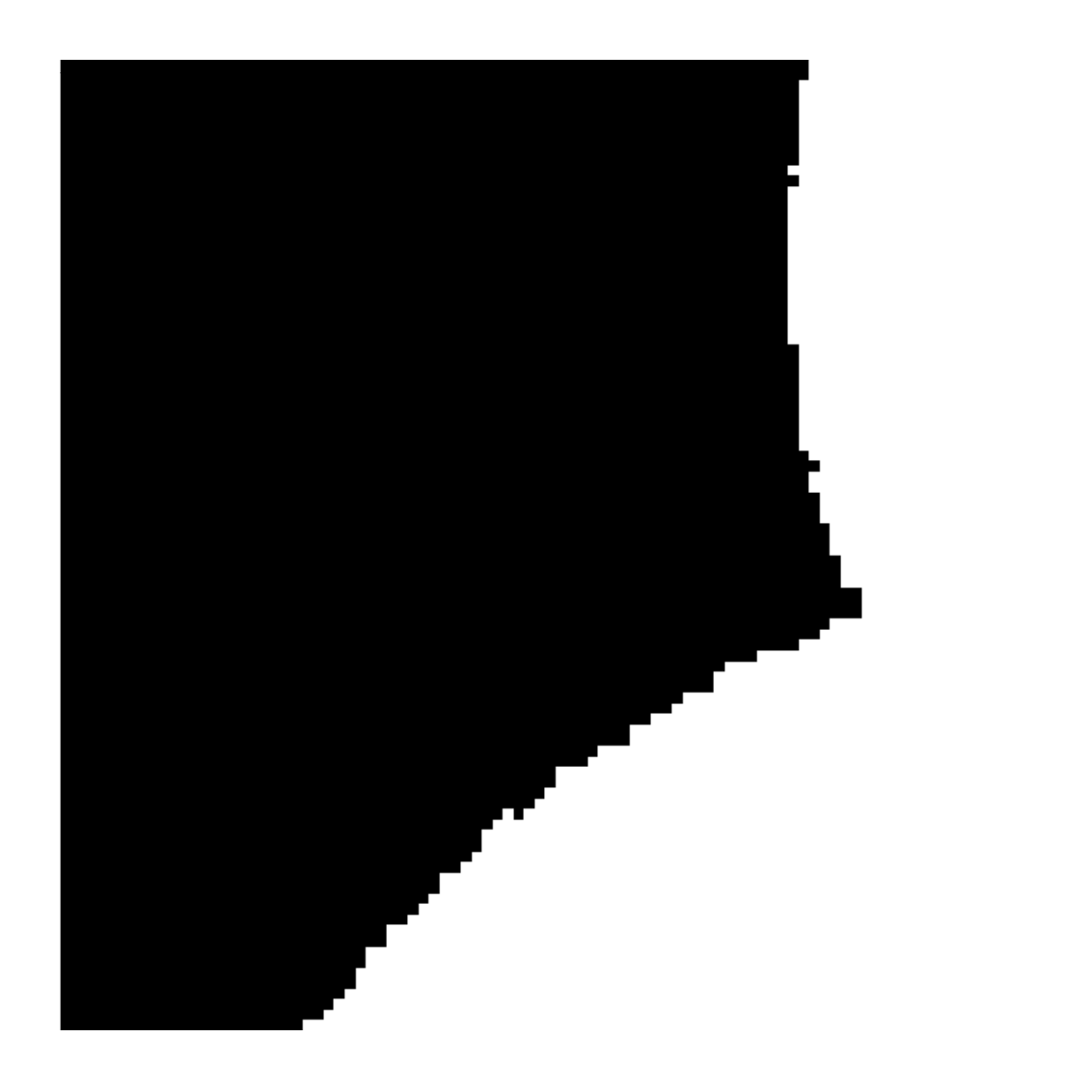}}
\caption{Algorithm~\ref{alg:alg1} applied to a phantom. (a)~Low-resolution $100\times100$ grayscale image, (b)~magnification of the part contained in the red rectangle, (c)~magnification of the corresponding part of the~$200\times200$ high-resolution binary image obtained by the algorithm.} \label{fig:drexample2}
\end{figure}

In Fig.~\ref{fig:drexample1}, the original phantom is a binary $100\times100$ image, for which the low-resolution grayscale image shown in Fig.~\ref{fig:drexample1}(a) is available. 
From this image and the row and column sums (counting the black pixels) of the original phantom the algorithm returns the binary image shown in Fig.~\ref{fig:drexample1}(b). 

\begin{figure}[htb]
\centering
\subfigure[]{\includegraphics[width=0.27\textwidth]{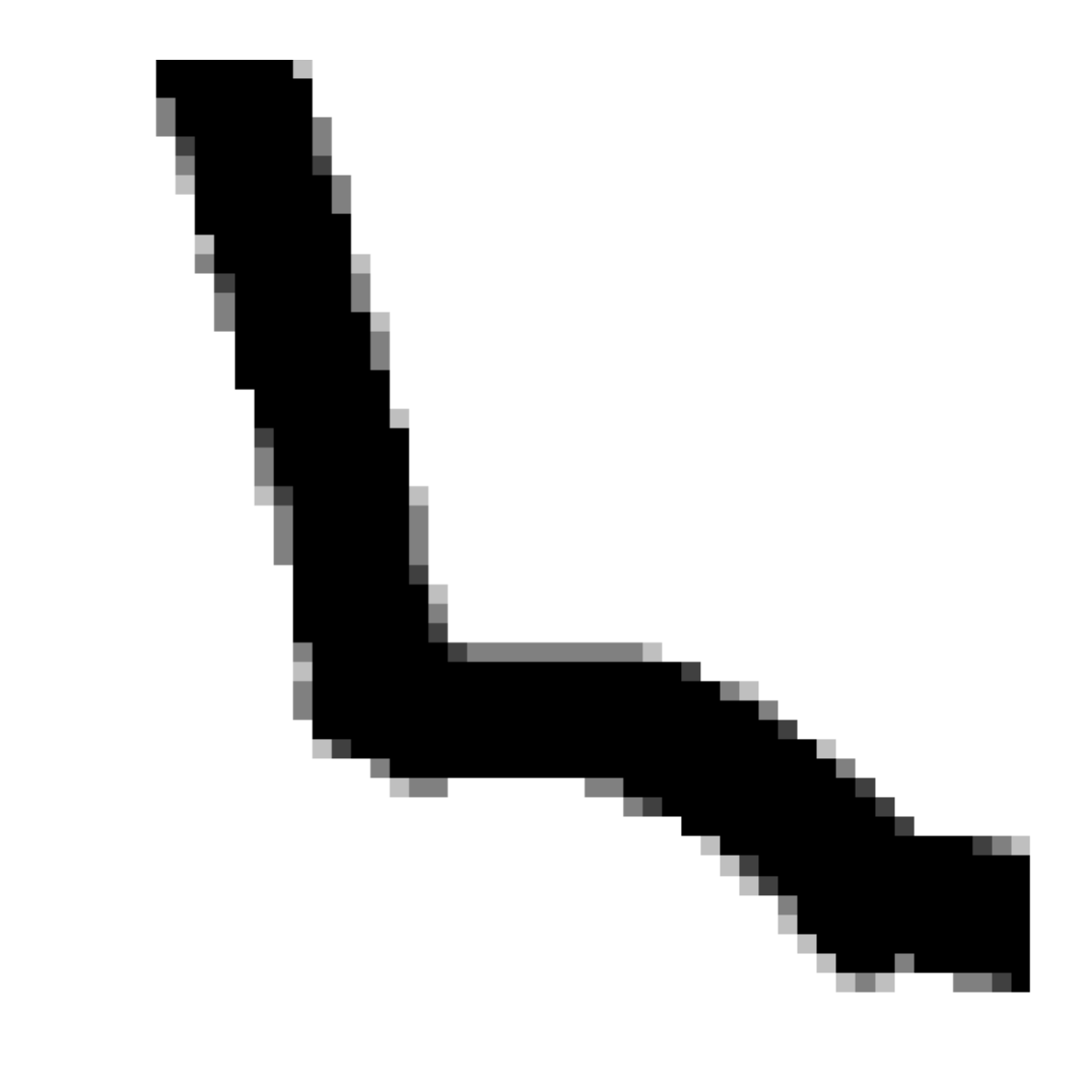}}\hspace*{2ex}
\subfigure[]{\includegraphics[width=0.27\textwidth]{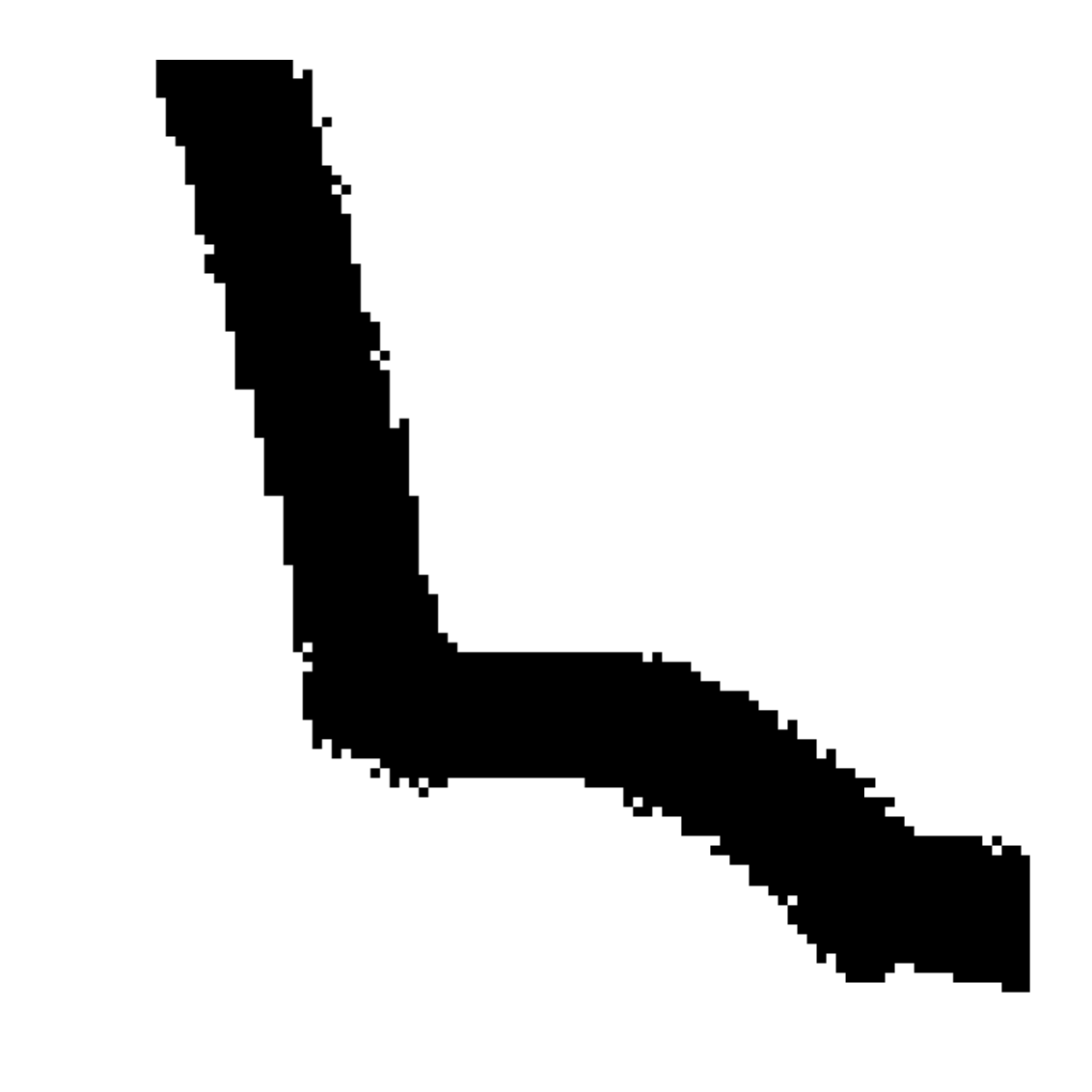}}\hspace*{2ex}
\subfigure[]{\includegraphics[width=0.27\textwidth]{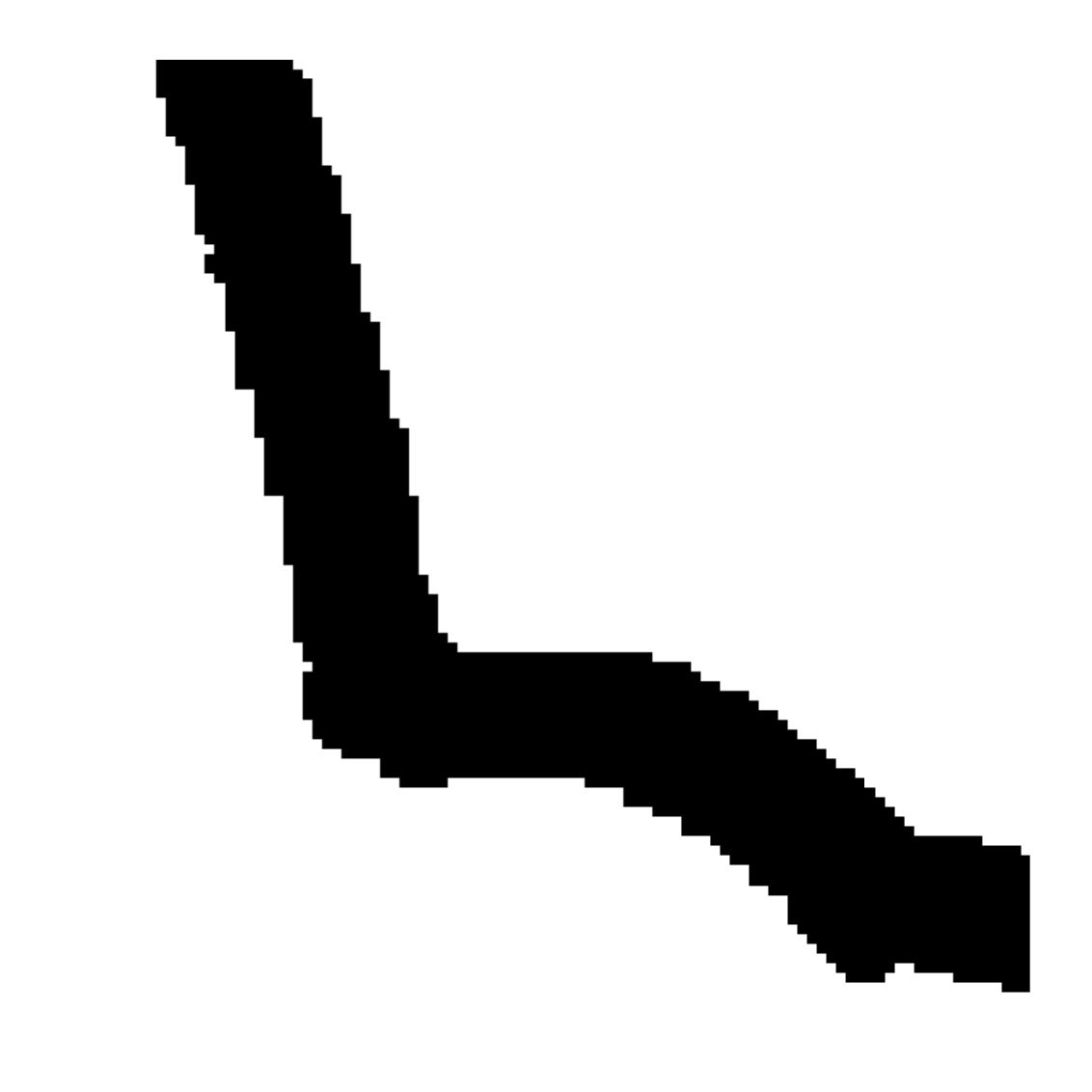}}
\caption{Algorithm~\ref{alg:alg1} applied to a phantom. (a)~Low-resolution $50\times50$ grayscale image,  (b)~high-resolution $100\times100$ binary image obtained by the algorithm, (c)~high-resolution $100\times100$ binary image minimizing the total variation.} \label{fig:drexample1}
\end{figure}

The results shown in Figs.~\ref{fig:drexample2} and~\ref{fig:drexample1} satisfy, of course, the row and column sums and block constraints. However, in both cases the solutions are still not unique. This can also be detected in polynomial time.

\begin{theorem}\label{main:unique}
For every instance of \textsc{DR} it can be decided in polynomial time whether the instance admits a unique solution.
\end{theorem}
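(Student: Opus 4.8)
\medskip

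The plan is to settle uniqueness by first producing one solution and then testing whether a second one can be obtained from it by a single local change. By Theorem~\ref{thm:main1} we can, in polynomial time, either compute a solution $\xi=(\xi_{p,q})_{(p,q)\in[m]\times[n]}$ of the given \textsc{DR} instance or certify that none exists; in the latter case the instance does not admit a unique solution and we are done. So assume $\xi$ is available. Two distinct solutions $\xi$ and $\xi'$ are linked by the vector $d:=\xi'-\xi$, which belongs to $\{-1,0,1\}^{[m]\times[n]}$, is nonzero, has vanishing row, column and block sums, and is \emph{sign-compatible} with $\xi$ in the sense that $d_{p,q}\ge 0$ wherever $\xi_{p,q}=0$ and $d_{p,q}\le 0$ wherever $\xi_{p,q}=1$; conversely, every such $d$ turns $\xi$ into a second solution $\xi+d$. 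Hence the instance has a unique solution if and only if no such difference vector $d$ exists.

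Next, if such a $d$ exists, then $d_{p_0,q_0}\neq 0$ for some pixel $(p_0,q_0)$, and $\xi+d$ is a solution assigning the value $1-\xi_{p_0,q_0}$ to $(p_0,q_0)$. Therefore the instance fails to be unique if and only if, for at least one of the $mn$ pixels $(p_0,q_0)$, the \textsc{DR} instance stays feasible once one additionally requires pixel $(p_0,q_0)$ to take the value $1-\xi_{p_0,q_0}$. It thus suffices to decide, for each pixel in turn, the feasibility of \textsc{DR} with one pinned pixel value, and running over all $mn$ pixels keeps the total effort polynomial. (One easy sufficient condition for non-uniqueness can even be checked directly: whenever some $2\times 2$-block has prescribed value $2$ and $\xi$ fills it in a diagonal pattern, switching to the opposite diagonal changes none of the row, column or block sums, and hence yields a second solution.)

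It remains to solve the pinned feasibility problem in polynomial time, and this is where the real work lies. If, as is natural, the algorithm behind Theorem~\ref{thm:main1} reduces \textsc{DR} to a constrained matching or flow problem, then prescribing one pixel value merely fixes the corresponding edge and the same machinery delivers the answer; equivalently, once one feasible flow for the unpinned instance has been computed, a second solution -- in particular one flipping a prescribed pixel -- exists precisely when the associated residual network contains an augmenting cycle, which is a plain reachability test. The main obstacle is exactly this step: it forces one to re-enter the construction proving Theorem~\ref{thm:main1} rather than invoke it as a black box, because the constraint matrix built from the row, column and $2\times 2$-block sums is not evidently totally unimodular, so one cannot simply maximize $d_{p_0,q_0}$ by linear programming and read off an integral witness.
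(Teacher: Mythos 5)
Your reduction of uniqueness to pinned feasibility is logically sound (the instance is non-unique iff for some pixel $(p_0,q_0)$ the instance remains feasible with $\xi_{p_0,q_0}$ forced to $1-\xi^*_{p_0,q_0}$), and your diagonal-block observation is a correct sufficient criterion (it is exactly the reversed class-$\langle7\rangle$ switch of the paper). But the proof has a genuine gap, and you have named it yourself: you never solve the pinned feasibility problem. Theorem~\ref{thm:main1} cannot be invoked for it, because a \textsc{DR} instance with one pixel value prescribed is not a \textsc{DR} instance, and the algorithm behind Theorem~\ref{thm:main1} is not a flow or $b$-matching formulation of the whole problem to which one could "fix an edge." It is a decomposition into the single-graylevel subproblems $\textsc{DR}(\nu)$ whose row and column sums can be deduced from the data \emph{only for reduced solutions}; a solution realizing a pinned pixel may be forced to be non-reduced (reducing it by local switches can destroy the pin), so the decomposition gives no handle on pinned feasibility. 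Likewise, the residual-network/augmenting-cycle test presupposes a network structure that the block constraints destroy; the constraint system of row, column and $2\times2$-block sums is not totally unimodular in general, and the example in the paper's final section of an instance whose only two solutions differ in $20$ pixels (a large irreducible switching component) shows that second solutions need not be reachable by any local or cycle-type modification of the first. So the step "it suffices to decide pinned feasibility, which the machinery of Theorem~\ref{thm:main1} delivers" is precisely the unproved part, and it is not a routine one.

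The paper avoids pinned feasibility altogether. It characterizes uniqueness as: (1) the reduced solution is unique, and (2) no \emph{reversed} local switch applies to it. Point (1) is checkable because any two reduced solutions induce the same subproblem data, so it amounts to uniqueness for $\textsc{DR}(1)$ and $\textsc{DR}(3)$ (explicit arithmetic conditions) and for $\textsc{DR}(2)$ (tested by one extra linear program over the totally unimodular system of Lemma~\ref{lem:unimod2}, minimizing $x^T v^*$ at the computed vertex $v^*$); point (2) is an $O(m^2n^2)$ check over pairs of blocks. If you want to rescue your approach, you would need an independent polynomial-time algorithm for \textsc{DR} with a pinned pixel, which is not established in the paper and would require re-doing, not citing, the analysis behind Theorem~\ref{thm:main1}.
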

 
While the additional X-ray information (at double-resolution) does in general reduce the ambiguity, typically it will still not lead to uniqueness. Of course, a standard way of dealing with non-uniqueness in practice is regularization \cite{apkh-06,bertero, hansen, siltanen}. To illustrate the possibility of adapting a regularization scheme to our context, we show in Fig.~\ref{fig:drexample1}(c) a binary solution minimizing the \emph{total variation (TV).} In fact, the solution is obtained from Fig.~\ref{fig:drexample1}(a) by applying~$15$ so-called local switches, each of which is strictly decreasing the value of the total variation functional 
\[J(x):=\sum_{p\in [m]}\sum_{q\in[n]}||(\nabla x_{p,q}^{(1)},\nabla x_{p,q}^{(2)})^T||_2,\] where
\begin{eqnarray*}
(\nabla x)_{p,q}^{(1)}&:=\left\{\begin{array}{lll}\xi_{p+1,q}-\xi_{p,q} &:& p<m,\\ 0&:&\textnormal{otherwise,}\end{array}\right.\\
(\nabla x)_{p,q}^{(2)}&:=\left\{\begin{array}{lll}\xi_{p,q+1}-\xi_{p,q} &:& q<n,\\ 0&:&\textnormal{otherwise,}\end{array}\right.
\end{eqnarray*} and $||\cdot||_2$ denotes the Euclidean norm. See, e.g., \cite{chambolle} for some background information. Of course, we obtain the same results for other~$p$-norms (including the $1$-norm), because our images are binary.

We now turn to the task $n\textsc{DR}(\varepsilon)$ where small ``occasional'' uncertainties in the gray levels are allowed. First observe that a constant number of uncertainties does not increase the complexity. However, if we allow a (as we will see, comparably small but) non-constant number of uncertainties the problem becomes hard.

\begin{theorem}\label{thm:main2}
\textsc{nDR}$(\varepsilon)$ is $\mathbb{N}\mathbb{P}$-hard for any $\varepsilon>0.$
\end{theorem}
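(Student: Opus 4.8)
## Proof Proposal for Theorem~\ref{thm:main2}

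The plan is to establish $\mathbb{N}\mathbb{P}$-hardness by a polynomial-time reduction from a known $\mathbb{N}\mathbb{P}$-hard problem in discrete tomography. The natural candidate is the reconstruction of binary images from X-ray data in \emph{three} directions, which is known to be $\mathbb{N}\mathbb{P}$-hard by \cite{ggp-99}; alternatively, one could reduce directly from a more elementary problem such as a suitable variant of $3$-dimensional matching or one-in-three SAT, but piggybacking on the three-direction result is likely to keep the gadgetry lighter. The guiding idea is that a block constraint on a $2\times2$-block, when combined with the freedom granted by the noise parameter $\varepsilon\ge 1$, can be made to encode a one-dimensional (e.g.\ diagonal) sum over the four pixels of the block, and hence a collection of such blocks, suitably arranged, can simulate a third family of line sums. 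Concretely, I would start from an instance of the three-direction reconstruction problem on an $m'\times n'$ grid with prescribed row sums, column sums, and diagonal sums, embed this grid into a larger $m\times n$ grid (with $m,n\in 2\mathbb{N}$) in such a way that each relevant diagonal line of the original instance is routed through a controlled sequence of $2\times2$-blocks, and then choose the block values $v(i,j)$ and the set $R$ so that exactly the blocks carrying ``diagonal information'' are placed in $C(m,n,k)\setminus R$ (the noisy ones), while all remaining blocks are in $R$ and are forced to a fixed pattern.

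The key steps, in order, are: (1) fix $\varepsilon=1$ first (the case $\varepsilon>1$ will follow by a padding/scaling argument, or simply because a larger error tolerance only relaxes constraints and the hard core can be re-embedded with dummy pixels absorbing the slack); (2) design a \emph{wire gadget} — a horizontal or diagonal chain of blocks — in which forcing row/column sums and reliable ($R$-type) block constraints leaves exactly one binary degree of freedom per block, so that a bit propagates along the chain; (3) design a \emph{diagonal-sum gadget} that uses a noisy block (with $\varepsilon=1$ slack) to read off the parity or the exact count of black pixels along a short diagonal segment, thereby transferring the constraint ``the $j$-th diagonal sum equals $d_j$'' into the language of $\mathrm{nDR}$; (4) assemble these gadgets so that the global row and column sums of the large grid are satisfiable if and only if the original three-direction instance is feasible; (5) verify that the construction is polynomial in the size of the input instance and that all block values lie in $[k^2]_0=[4]_0$.

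The main obstacle I anticipate is step (3) together with the interaction between the noisy block constraints and the \emph{global} row and column sums. A single noisy block with $\varepsilon=1$ only gives a three-valued relaxation ($v-1$, $v$, or $v+1$) of a four-pixel sum, which is quite weak on its own; the challenge is to cascade many such weak constraints so that their \emph{conjunction} pins down something as rigid as an exact diagonal sum, while simultaneously ensuring that the unavoidable $\pm1$ ``leaks'' in the noisy blocks cannot be exploited to cheat elsewhere — in particular, that the errors in different noisy blocks cannot silently cancel against each other through shared rows or columns. I would handle this by a careful \emph{balancing} layout: pair up noisy blocks so that an error of $+1$ in one forces a compensating $-1$ in its partner (via a common row or column whose sum is fixed), and arrange the pairing so that the only globally consistent choice of errors is the all-zero one except along the designated diagonal-encoding channels. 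Making this balancing argument airtight — essentially proving that the feasible region of the $\mathrm{nDR}$ instance projects exactly onto the feasible region of the three-direction instance — is the crux of the proof; the rest is bookkeeping. Finally, for $\varepsilon>1$, I would note that inflating each pixel of the construction into an $\varepsilon\times\varepsilon$ sub-block (or adding $\varepsilon-1$ rows/columns of forced-black and forced-white padding adjacent to each noisy block) scales the available slack proportionally, so the same separation between feasible and infeasible instances is preserved; this also foreshadows Corollary~\ref{cor:largebox}, where the analogous statement for $k>2$ is obtained by embedding the $k=2$ gadgets into the top-left $2\times2$ corner of each $k\times k$-block and forcing the rest of the block to a constant pattern.
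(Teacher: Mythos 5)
Your plan hinges on step~(3), and that is exactly where it breaks down. A block constraint, noisy or not, only bounds the \emph{total} number of ones in an axis-aligned $2\times2$-block; it carries no positional information whatsoever about where inside the block those ones sit, and adding the slack $\pm\varepsilon$ makes the constraint strictly weaker, not more expressive. So a noisy block cannot ``encode a diagonal sum over the four pixels of the block'': the only way to extract positional (hence directional) information is through the interplay of block constraints with the row and column sums, and a diagonal line of length $L$ in the original three-direction instance meets on the order of $L$ distinct blocks, each also containing pixels belonging to other diagonals. To make your wire and balancing gadgets force a bit to propagate and to prevent $\pm1$ errors in different noisy blocks from cancelling through shared rows or columns, you would have to build precisely the kind of long-range control machinery whose absence is the known obstruction here --- the paper itself stresses that block constraints ``do not give a direct way of controlling points over large distances'' and that eliminating unwanted solutions inside individual blocks is problematic, which is why its proof deviates from \cite{ggp-99} rather than simulating a third direction. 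As it stands, your step~(3) and the balancing argument are asserted, not constructed, and there is no evidence they can be constructed along the lines you describe.

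A second, independent gap is the reduction of general $\varepsilon$ to $\varepsilon=1$. In \textsc{nDR}$(\varepsilon)$ the block size is fixed at $k=2$, so a noisy block with $v=1$ and $\varepsilon\geq3$ admits every value in $\{0,\dots,4\}$, i.e., it is vacuous; neither inflating pixels into $\varepsilon\times\varepsilon$ sub-blocks (which destroys the $2\times2$ block structure) nor padding with forced rows or columns changes the per-block slack, so whatever exact information your gadgets extract from noisy blocks at $\varepsilon=1$ simply disappears for larger $\varepsilon$. The paper avoids both problems by reducing from \textsc{1-In-3-SAT} instead: it builds a circuit board (initializer, connectors, clause chips) in which Boolean values are encoded by the \emph{pattern types} of designated $2\times2$-chips, consistency is enforced by horizontal and vertical transmission arguments driven by the row and column sums, clause satisfaction is checked by two verifier lines per clause, and the noisy blocks are used only at a few places where one merely needs to \emph{permit} $0$, $1$, or $2$ ones --- which is why that argument works uniformly for every $\varepsilon>0$ without any scaling step.
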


(For background material on complexity theory, see, e.g., \cite{gareyjohnson}.)

Further, the $\mathbb{N}\mathbb{P}$-hardness extends to the task of checking uniqueness.

\begin{corollary} \label{cor:uniqueness}
The problem of deciding whether a given solution of an instance of \textsc{nDR}$(\varepsilon)$ with $\varepsilon>0$ has a non-unique solution is $\mathbb{N}\mathbb{P}$-complete.
\end{corollary}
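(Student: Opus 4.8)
The plan is to establish the two directions separately. Membership in $\mathbb{N}\mathbb{P}$ is immediate: given an instance $J$ of $\textsc{nDR}(\varepsilon)$ together with a solution $\xi$, a second solution $\xi'\neq\xi$ serves as a certificate, and checking that $\xi'$ obeys the row sums, column sums, reliable block constraints, and $\pm\varepsilon$-noisy block constraints takes polynomial time.

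For $\mathbb{N}\mathbb{P}$-hardness I would revisit the reduction that proves Theorem~\ref{thm:main2}. That reduction turns an instance $w$ of some $\mathbb{N}\mathbb{P}$-complete problem $\Pi$ into an instance $J_w$ of $\textsc{nDR}(\varepsilon)$; what I want is a refinement that additionally yields, in polynomial time and independently of whether $w$ is a yes-instance, one fixed ``reference'' solution $\xi_w$ of $J_w$, such that $J_w$ has a solution distinct from $\xi_w$ if and only if $w$ is a yes-instance of $\Pi$. (Equivalently: if $\Pi$ is taken in its another-solution form -- given $w$ and one solution, is there a second? -- it suffices that the reduction be parsimonious enough to carry a given solution of $w$ to a corresponding solution $\xi_w$ of $J_w$ and back.) Handing $(J_w,\xi_w)$ to a decision oracle for the uniqueness problem then decides $\Pi$, so that problem is $\mathbb{N}\mathbb{P}$-hard; together with the first paragraph this gives $\mathbb{N}\mathbb{P}$-completeness.

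Concretely, the steps are: (i) in the gadget construction underlying Theorem~\ref{thm:main2}, designate a ``ground state'' of every gadget and choose the gadget data (row/column sums, reliable and noisy block values) so that the image $\xi_w$ obtained by putting every gadget into its ground state is always admissible, while the logical content of $w$ is encoded in the remaining admissible images; (ii) insert, if needed, $O(1)$-size spacer blocks so that every solution of $J_w$ splits into legitimate gadget states, i.e., so that no spurious solution is introduced by the interaction of neighbouring gadgets; (iii) read off that ``$J_w$ admits a solution $\neq\xi_w$'' is equivalent to ``$w$ is a yes-instance'' (resp. ``$w$ has a further solution''), which is $\mathbb{N}\mathbb{P}$-hard.

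I expect step (ii) to be the main obstacle: the $[-\varepsilon,\varepsilon]$ slack in the noisy block constraints is exactly what makes the problem hard, but it can also admit images that do not arise from any combination of the intended gadget states, so one must verify -- or force by local padding -- that the only freedom in a solution of $J_w$ is the freedom already present in $w$, and simultaneously that the ground-state image $\xi_w$ stays feasible. Once the reduction from the proof of Theorem~\ref{thm:main2} is checked (or patched) to have this faithfulness together with an always-available reference solution, the corollary follows.
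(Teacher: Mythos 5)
Your NP-membership argument and the general shape of the hardness argument are fine, but as written the proposal stops short of a proof precisely at the point where the paper's argument is already complete. The paper does not need any ``patching'': Claim~5 in the proof of Thm.~\ref{thm:main2} establishes that the reduction from \textsc{1-In-3-SAT} is \emph{parsimonious} (satisfying truth assignments and solutions of $\mathcal{I}'$ are in 1-to-1 correspondence, and the correspondence is constructive in both directions via the transmission arguments of Claims~3 and~4). Hence the corollary follows by reducing the another-solution version of \textsc{1-In-3-SAT} to the another-solution version of \textsc{nDR}$(\varepsilon)$: given an instance together with a satisfying assignment, one maps the assignment to the corresponding solution $\xi_w$ of $\mathcal{I}'$, and $\mathcal{I}'$ has a second solution if and only if the formula has a second satisfying assignment. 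The one external ingredient is the $\mathbb{N}\mathbb{P}$-hardness of \textsc{Unique-1-in-3-SAT}, which the paper cites from \cite[Lem.~4.2]{ggp-99}; in your proposal this fact is asserted (``which is $\mathbb{N}\mathbb{P}$-hard'') without justification or reference, and it is not something you can take for granted, since the plain another-solution variant of an $\mathbb{N}\mathbb{P}$-complete problem is not automatically $\mathbb{N}\mathbb{P}$-hard.

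Your \emph{main} route --- steps (i)--(iii), designating a ground state so that a reference solution $\xi_w$ is always feasible and other solutions exist iff $w$ is a yes-instance --- cannot be carried out on the construction of Thm.~\ref{thm:main2} without redesigning it: there, if $w$ is unsatisfiable the instance $\mathcal{I}'$ has \emph{no} solution at all (the verifier sums exclude every non-satisfying assignment), so no always-admissible ground state exists, and inventing gadgets that admit one while still encoding the clause logic in the remaining solutions is exactly the nontrivial work you defer to step~(ii) (``checked or patched''). So the proposal, as it stands, has a genuine gap: the route you develop needs a new construction you do not give, while the route that actually works --- the parenthetical parsimonious-reduction remark --- is mentioned but neither verified against the Thm.~\ref{thm:main2} construction nor supported by the needed hardness result for the unique/another-solution SAT variant.
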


As it turns out, the problem \textsc{nDR}$(\varepsilon)$ with $\varepsilon>0$ remains $\mathbb{N}\mathbb{P}$-hard for larger block sizes.
\begin{corollary}\label{cor:largebox}
\textsc{nSR}$(k,\varepsilon)$ is $\mathbb{N}\mathbb{P}$-hard for any $k\geq2$ and $\varepsilon>0.$
\end{corollary}
In other words, noise does not only affect the reconstruction quality. It also affects the algorithmic tractability of the inverse problem.

The present paper is organized as follows. We deal with the case of reliable data in section~\ref{sect:3}. Results on \textsc{nDR}$(\varepsilon)$ and \textsc{nSR}$(k,\varepsilon)$ involving noisy data are contained in section~\ref{sect:4}. Section~\ref{sect:5} concludes with some additional remarks on certain extensions and an open problem.


\section{Reliable Data}\label{sect:3}
In this section we discuss the task \textsc{DR}, i.e., $k=2$ and $\varepsilon=0.$ 
We remark that, in the following, our emphasis is on providing brief and concise arguments for polynomial-time solvability rather than to focus on computationally or practically most efficient algorithms. Recall, however, our comments on the computational performance of our algorithms and the effect of the additional information for the example images depicted in Figs.~\ref{fig:drexample2} and~\ref{fig:drexample1}.

The main result of this section is~Thm.~\ref{thm:main1}. In its proof we show that \textsc{DR} decomposes into five problems that can be solved independently. The five problems are restricted single-graylevel versions  of \textsc{DR} where each block is required to contain the same number $\nu\in[4]_0$ of ones. In fact, we allow for restricted problem instances (and in particular row and column sums) to be defined only on subsets 
\begin{equation}
G(I):=\bigcup_{(i,j)\in I}B(i,j)\subseteq [m]\times[n] \label{eq:Gdef}
\end{equation} of the grid $[m]\times[n]$ given by means of some $I\subseteq C(m,n,2).$  Let 
\[
\Pi_x(I):=\{i\in[m]:\exists j\in[n]:(i,j)\in I\}
\] and
\[
\Pi_y(I):=\{j \in [n]:\exists i \in [m]:(i,j)\in I\}
\] denote the projection of~$I$ onto the first and second coordinate, respectively. 

Then, we define for $\nu\in[4]_0:$ \\

\begin{center}
\begin{minipage}{0.95\textwidth}
{\small
\textsc{DR}$(\nu)$\\[-3ex]

\hspace*{2ex}\begin{minipage}{0.8\textwidth}
\begin{alignat*}{6}
&\textnormal{Instance:}\quad&& \omit \rlap{$\displaystyle m,n \in 2\mathbb{N},$}\\
&&&\omit \rlap{$\displaystyle I\subseteq C(m,n,2),$}&&\hspace*{15ex}&&&\hspace*{5ex}& \textnormal{(a set of corner points)}\\
&&&\omit \rlap{$\displaystyle r_{j+l}\in \mathbb{N}_0,$}&&&& j \in \Pi_y(I), \:\: l\in \{0,1\}&& \textnormal{(row sum measurem.)}  \\
&&&\omit \rlap{$\displaystyle c_{i+l}\in \mathbb{N}_0,$}&&&& i \in \Pi_x(I), \:\: l\in \{0,1\}&& \textnormal{(column sum measurem.)}\\[1.2ex]
&\textnormal{Task:}\quad&& \omit \rlap{Find  $\xi_{p,q}\in\{0,1\}, \:(p,q)\in G(I)$ with }\\
\displaystyle &&&\sum_{\mathclap{p:(p,j)\in G(I)}}\xi_{p,j+l}&&=r_{j+l},&&  j\in \Pi_y(I), \:\: l\in \{0,1\}&&\textnormal{(row sums)} \\ 
\displaystyle &&&\sum_{\mathclap{q:(i,q)\in G(I)}}\xi_{i+l,q}&&=c_{i+l},&&  i\in \Pi_x(I), \:\: l\in \{0,1\}&&\textnormal{(column sums)}\\  \displaystyle &&&\sum_{\mathclap{(p,q)\in B(i,j)}}\xi_{p,q}&&=\nu,&\:\:& (i,j)\in I, &\:\:&\textnormal{(block constraints)}\\
&&& \omit \rlap{or decide that no such solution exists.}
\end{alignat*}
\end{minipage}
}
\end{minipage}
\end{center}

Of course, the tasks \textsc{DR}$(\nu),$ $\nu\in\{0,4\},$ are trivial. In fact, the only potential solution~$x^*$ satisfying the block constraints is given by $\xi^*_{p,q}=\nu/4,$ $(p,q)\in G(I),$ and it is checked easily and in polynomial time whether this satisfies the row and column sums. 

The next lemma and corollary deal with $\textsc{DR}(\nu)$, $\nu\in\{1,3\}.$ Their statements and proofs use the notation $\sigma_i(j)$ and $\rho_j(i)$ for the number of blocks in the same vertical strip as but below $B(i,j)$ or in the same horizontal strip as but left of $B(i,j),$ respectively. More precisely, let
\[
\sigma_i(j):=|(\{i\}\times[j])\cap I|\qquad \textnormal{and}\qquad \rho_j(i):=|([i]\times\{j\})\cap I|
\] for $(i,j)\in I\subseteq C(m,n,2).$

\begin{lemma} \label{lemma:mono}\hfill
\begin{enumerate}[(i)]
\item \label{mono2} An instance $\mathcal{I}$ of \textsc{DR}$(1)$ is feasible if, and only if, for every~$(i,j) \in I$ we have
\[r_j+r_{j+1}=\rho_j(m) \quad \textnormal{ and } \quad c_i+c_{i+1}=\sigma_i(n).\]
\item \label{mono3} The solution of a feasible instance of \textsc{DR}$(1)$ is unique if, and only if, for every~$(i,j) \in I$ we have 
\[r_j\cdot r_{j+1}=0 \qquad \textnormal{and}\qquad c_i\cdot c_{i+1}=0.\]
\item \label{mono1} \textsc{DR}$(1)\in\mathbb{P}.$ 
\end{enumerate}
\end{lemma}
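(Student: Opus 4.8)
The plan is to exploit that a single black pixel inside a $2\times2$ block has a position that factors as a ``left/right'' choice times a ``top/bottom'' choice, and that these two choices are governed by disjoint groups of constraints: the column sums of a vertical strip $\{i,i+1\}\times[n]$ are sensitive only to the left/right choices of the blocks in that strip, while the row sums of a horizontal strip $[m]\times\{j,j+1\}$ are sensitive only to the top/bottom choices. I would first record the elementary count behind this: each block in a given horizontal strip contributes exactly one black pixel, lying in one of its two rows, so $r_j+r_{j+1}$ must equal the number $\rho_j(m)$ of blocks in that strip, and symmetrically $c_i+c_{i+1}=\sigma_i(n)$. That settles the ``only if'' direction of (i).

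For the ``if'' direction of (i) — and with it (iii) — I would give the direct construction. Since $r_j,r_{j+1}\ge0$ sum to $\rho_j(m)$, one has $0\le r_j\le\rho_j(m)$, so in each horizontal strip we may pick an arbitrary $r_j$-element subset of its blocks, route their black pixel into row $j$, and route the other $r_{j+1}$ blocks' pixels into row $j+1$; independently, in each vertical strip we route $c_i$ of the blocks' pixels into column~$i$ and the rest into column $i+1$. Every block of $G(I)$ thereby inherits one row-label and one column-label, which jointly pin down its unique black pixel. By construction the block constraints hold; the row sums hold because the black pixels in a fixed row are exactly the blocks of the corresponding strip carrying the matching label; the column sums hold symmetrically. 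All of this is clearly checkable and constructible in polynomial time, which gives (iii).

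For (ii) I would run the same decoupling on a feasible instance. If $r_j r_{j+1}>0$ for some $(i,j)\in I$, then the horizontal strip through $j$ has $\rho_j(m)=r_j+r_{j+1}\ge2$ blocks and any solution contains a block with its pixel in row $j$ and a distinct block with its pixel in row $j+1$; interchanging these two pixels between the rows while leaving their columns fixed preserves every row sum, column sum and block constraint yet yields a different solution, so uniqueness fails — symmetrically if $c_i c_{i+1}>0$. Conversely, if $r_j r_{j+1}=0$ and $c_i c_{i+1}=0$ for every $(i,j)\in I$, then in each strip one of the two sums vanishes, forcing the common row, respectively column, of all pixels in that strip, so every pixel position is determined and the solution is unique. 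The one point that deserves genuine (if easy) care is exactly the non-interference claim underlying all three parts — that changing a block's row-label alters no column sum and vice versa — which is nothing more than the product structure of the $2\times2$ block; I do not anticipate any real obstacle beyond isolating that observation.
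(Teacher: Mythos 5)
Your proposal is correct and follows essentially the same route as the paper: the paper's construction \eqref{eq:sol1} is just a canonical instance of your independent row-label/column-label assignment (it routes the first $r_j$ blocks of each horizontal strip into row $j$ and the first $c_i$ blocks of each vertical strip into column $i$), verified by the same decoupled counting, and your switch argument and forcing argument for part~(\ref{mono3}) match the paper's proof. No gaps.
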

\begin{proof}
Clearly, for the feasibility of a given instance of \textsc{DR}$(1)$ the conditions 
\[r_j+r_{j+1}=\rho_j(m) \quad \textnormal{ and } \quad c_i+c_{i+1}=\sigma_i(n),\]
are necessary. 

Now, suppose that the conditions are satisfied. For every $(i,j) \in I$ and $(p,q)\in B(i,j)$ we set
\begin{equation} \label{eq:sol1}
\begin{aligned}
a_{i,j}&:=i+\min\{l\in\{0,1\}:\sigma_{i}(j)\leq \sum_{h=0}^{l}c_{i+h}\},\\
b_{i,j}&:=j+\min\{l\in\{0,1\}:\rho_{j}(i)\leq \sum_{h=0}^{l}r_{j+h}\},\\
\xi^*_{p,q}&:=\left\{\begin{array}{lll}1&:& (p,q)=(a_{i,j},b_{i,j}),\\ 0&:&\textnormal{otherwise.} \end{array}\right.
\end{aligned} 
\end{equation}
Figure~\ref{fig:Example1} gives an illustration. 

\begin{figure}[htb]
\centering
\includegraphics[width=0.45\textwidth]{}
\caption{Illustration of \textsc{DR}(1). (Left) Row and column sums and blocks $B(i,j)$ with $(i,j)\in I$ in gray color. (Right) Solution defined by~\eqref{eq:sol1}.}\label{fig:Example1}
\end{figure}

By this definition we have satisfied all block constraints.
A simple counting argument shows that the row and columns sums are also as required. In fact, for $j\in \Pi_y(I)$ and $i \in \Pi_x(I)$ we have
\[
\begin{aligned}
&\sum_{\mathclap{p:(p,j)\in G(I)}}\xi^*_{p,j}&=&  &\left|\left\{p:\left((p,j)\in I\right)\wedge\left(\rho_j(p)\leq r_j\right)\right\} \right|& &=& &&r_{j},& &&&\\
&\sum_{\mathclap{p:(p,j)\in G(I)}}\xi^*_{p,j+1}&=& &\left|\left\{p:\left((p,j)\in I\right)\wedge\left(\rho_j(p)> r_j\right)\right\} \right|& &=& &&\rho_j(m)-r_j&=& &&r_{j+1},\\
&\sum_{\mathclap{q:(i,q)\in G(I)}}\xi^*_{i,q}&=& &\left|\left\{q:\left((i,q)\in I\right)\wedge\left(\sigma_i(q)\leq c_i\right)\right\} \right|& &=& &&c_{i},& &&&\\
&\sum_{\mathclap{q:(i,q)\in G(I)}}\xi^*_{i+1,q}&=& &\left|\left\{q:\left((i,q)\in I\right)\wedge\left(\sigma_i(q)> c_i\right)\right\} \right|& &=& &&\sigma_i(n)-c_i&=& &&c_{i+1}.
\end{aligned}
\] 
This concludes the proof of~(\ref{mono2}).

To show~(\ref{mono3}), we consider a solution $x^*$ to $\textsc{DR}(1).$ Suppose that~$r_j>0$ and $r_{j+1}>0$ for some $j\in \Pi_y(I).$ Then, there exist $i,l\in[m]$ with $\xi^*_{i,j}=\xi^*_{l,j+1}=1.$ By the block constraints we have $\xi^*_{i,j+1}=\xi^*_{l,j}=0.$ Applying a switch to $\xi^*_{i,j},$ $\xi^*_{l,j+1},$ $\xi^*_{i,j+1},$ and $\xi^*_{l,j},$ i.e., setting $\xi'_{i,j}:=\xi'_{l,j+1}:=0,$ $\xi'_{i,j+1}:=\xi'_{l,j}:=1,$ and $\xi'_{p,q}:=\xi^*_{p,q}$ otherwise, yields thus another solution. As the same argument holds also for the column sums we have shown that the condition in~(\ref{mono3}) is necessary. The condition is, on the other hand, also sufficient since the 0-values of the row and column sums leave only a single position $(p,q)$ in each block $B(i,j),$ $(i,j)\in I,$ for which $\xi_{p,q}$ can take on a non-zero value (in fact, the value needs to be~1 to satisfy the block constraints).

We now turn to~(\ref{mono1}). The algorithm checks the condition~(\ref{mono2}). If it is violated, it reports infeasibility; otherwise, a solution is constructed through~\eqref{eq:sol1}. Clearly, all steps can be performed in polynomial time.
\end{proof}

\begin{corollary} \label{cor:mono3}\hfill
\begin{enumerate}[(i)]
\item \label{cormono2} An instance $\mathcal{I}$ of \textsc{DR}$(3)$ is feasible if, and only if, for every $(i,j) \in I$ we have
\[r_j+r_{j+1}=3\rho_j(m) \quad \textnormal{ and } \quad c_i+c_{i+1}=3\sigma_i(n).\]
\item \label{cormono3} The solution of a feasible instance of \textsc{DR}$(3)$ is unique if, and only if, for every~$(i,j) \in I$ we have 
\[(2\rho_j(m)-r_j)\cdot (2\rho_j(m)-r_{j+1})=0 \qquad \textnormal{and}\qquad (2\sigma_i(n)-c_i)\cdot (2\sigma_i(n)-c_{i+1})=0.\]
\item \label{cormono1} \textsc{DR}$(3)\in\mathbb{P}.$ 
\end{enumerate}
\end{corollary}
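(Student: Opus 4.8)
The plan is to obtain Corollary~\ref{cor:mono3} from Lemma~\ref{lemma:mono} by a simple complementation trick: in any block $B(i,j)$ with $(i,j)\in I$, replacing every $\xi_{p,q}$ by $1-\xi_{p,q}$ turns a block containing exactly~$3$ ones into a block containing exactly~$1$ one, and vice versa. So I would set $\eta_{p,q}:=1-\xi_{p,q}$ for $(p,q)\in G(I)$ and observe that $x=(\xi_{p,q})$ is a solution of the given \textsc{DR}$(3)$ instance if, and only if, $y=(\eta_{p,q})$ is a solution of the \textsc{DR}$(1)$ instance with the same $m,n,I$ but with row sums $\tilde r_{j+l}:=\rho_j(m)-r_{j+l}$ (for $j\in\Pi_y(I)$, $l\in\{0,1\}$; note $|\{p:(p,j+l)\in G(I)\}|$ equals the number of blocks in the horizontal strip containing $B(i,j)$, which is $\rho_j(m)$ for any such $(i,j)$) and column sums $\tilde c_{i+l}:=\sigma_i(n)-c_{i+l}$. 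One should double-check that these are well-defined, i.e.\ that $\tilde r_{j+l}$ depends only on $j+l$ and not on the chosen $(i,j)\in I$ with that second coordinate; this is immediate since $\rho_j(m)$ is the length of the horizontal strip through row $j$ intersected with $I$, the same for every block in that strip.

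Given this bijection, parts~(\ref{cormono2})--(\ref{cormono1}) follow directly. For~(\ref{cormono2}): by Lemma~\ref{lemma:mono}(\ref{mono2}) the transformed instance is feasible iff for every $(i,j)\in I$ one has $\tilde r_j+\tilde r_{j+1}=\rho_j(m)$ and $\tilde c_i+\tilde c_{i+1}=\sigma_i(n)$, which upon substituting $\tilde r_{j+l}=\rho_j(m)-r_{j+l}$ and $\tilde c_{i+l}=\sigma_i(n)-c_{i+l}$ rearranges to $r_j+r_{j+1}=2\rho_j(m)-\rho_j(m)+\rho_j(m)$; more carefully, $\tilde r_j+\tilde r_{j+1}=2\rho_j(m)-(r_j+r_{j+1})$, so the condition becomes $2\rho_j(m)-(r_j+r_{j+1})=\rho_j(m)$, i.e.\ $r_j+r_{j+1}=\rho_j(m)$.

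Wait — that gives $r_j+r_{j+1}=\rho_j(m)$, not $3\rho_j(m)$. Let me reconsider. The row-sum measurement in \textsc{DR}$(\nu)$ is over a single row $j+l$, which meets $\rho_j(m)$ blocks, and in \textsc{DR}$(3)$ each such block contributes either $1$ or $2$ ones to that row, so $r_j$ ranges in $[\rho_j(m),2\rho_j(m)]$ and $r_j+r_{j+1}=3\rho_j(m)$ (each block contributes a total of $3$ to the two rows it spans). Under complementation each of the two rows of a block now contributes $2$ or $1$, totalling $1$, so indeed $\tilde r_j+\tilde r_{j+1}=\rho_j(m)$ and $\tilde r_{j+l}=2\rho_j(m)-r_{j+l}$ — not $\rho_j(m)-r_{j+l}$. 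The correct substitution is $\tilde r_{j+l}=2\rho_j(m)-r_{j+l}$ (a block of $2$ ones in a row becomes $0$ after complementation, a block of $1$ becomes $1$; summing over the $\rho_j(m)$ blocks of that strip, the complemented row sum is $2\rho_j(m)-r_{j+l}$). Then $\tilde r_j+\tilde r_{j+1}=\rho_j(m)$ is equivalent to $r_j+r_{j+1}=4\rho_j(m)-\rho_j(m)=3\rho_j(m)$, matching~(\ref{cormono2}). Similarly $\tilde c_{i+l}=2\sigma_i(n)-c_{i+l}$ and the column condition matches. For~(\ref{cormono3}): uniqueness of the \textsc{DR}$(3)$ solution is equivalent to uniqueness of the \textsc{DR}$(1)$ solution of the complemented instance (the bijection $\xi\leftrightarrow 1-\xi$ is an involution on the solution sets), which by Lemma~\ref{lemma:mono}(\ref{mono3}) holds iff $\tilde r_j\tilde r_{j+1}=0$ and $\tilde c_i\tilde c_{i+1}=0$ for all $(i,j)\in I$; substituting gives exactly $(2\rho_j(m)-r_j)(2\rho_j(m)-r_{j+1})=0$ and $(2\sigma_i(n)-c_i)(2\sigma_i(n)-c_{i+1})=0$. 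Finally~(\ref{cormono1}): the reduction is computable in polynomial time and, by Lemma~\ref{lemma:mono}(\ref{mono1}), \textsc{DR}$(1)\in\mathbb{P}$; hence \textsc{DR}$(3)\in\mathbb{P}$. The only thing that needs care is getting the complementation bookkeeping right (the factor $2\rho_j(m)$, not $\rho_j(m)$, from the fact that a block occupies two rows and two columns), which is exactly the pitfall illustrated above; once that is fixed there is no real obstacle.
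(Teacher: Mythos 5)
Your final argument is correct and is exactly the paper's proof: complementation $\xi\mapsto 1-\xi$ turns the \textsc{DR}$(3)$ instance into the \textsc{DR}$(1)$ instance with row sums $2\rho_j(m)-r_{j+l}$ and column sums $2\sigma_i(n)-c_{i+l}$, after which all three parts follow from Lemma~\ref{lemma:mono}. The initial bookkeeping slip (writing $\rho_j(m)-r_{j+l}$ instead of $2\rho_j(m)-r_{j+l}$) is caught and corrected within your own write-up, so nothing further is needed.
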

\begin{proof}
The results follow directly from the results for $\textsc{DR}(1)$ in Lem.~\ref{lemma:mono} as the former are obtained from the latter by inversion and vice versa, i.e., by replacing $r_{j+l}$ by $2\rho_j(m)-r_{j+l}$ and $c_{i+l}$ by $2\sigma_i(n)-c_{i+l},$ $(i,j)\in I,$ $l\in\{0,1\},$ and, in the solutions, all $1$'s by $0$'s and $0$'s by $1$'s.
\end{proof}

Next we prove polynomial-time solvability of $\textsc{DR}(2).$ To this end we need two lemmas. The first can be viewed as dealing with a certain ``two-color version'' of discrete tomography. The second  deals with specific switches or interchanges (in the sense of \cite{ryser57}). 

\begin{lemma} \label{lem:unimod2}
The problem, given $m,n\in2\mathbb{N},$ $I\subseteq C(m,n,2),$ and $r'_j,c'_i\in\mathbb{N}_0,$ $(i,j)\in I,$ decide whether there exist non-negative integers $\zeta_{i,j},$ $\eta_{i,j},$ $(i,j) \in I,$ such that
\begin{alignat*}{4}
&\sum_{p:(p,j)\in I} \zeta_{p,j}&&=r'_j, &\hspace*{5ex}& j \in \Pi_y(I),\\
&\sum_{q:(i,q)\in I} \eta_{i,q}&&=c'_i, &&i\in \Pi_x(I),\\
&\zeta_{i,j}+\eta_{i,j}&&\leq 1, &&(i,j) \in I,
\end{alignat*} and, if so, determine a solution, can be solved in polynomial time.
\end{lemma}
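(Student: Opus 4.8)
The plan is to model this as a flow or matching feasibility problem and invoke polynomial-time solvability of such problems. First I would observe that the constraint $\zeta_{i,j}+\eta_{i,j}\le 1$ with $\zeta,\eta\in\mathbb N_0$ forces, at each $(i,j)\in I$, one of three mutually exclusive local states: both zero, or $\zeta_{i,j}=1,\eta_{i,j}=0$, or $\zeta_{i,j}=0,\eta_{i,j}=1$ (values larger than $1$ are impossible). So $I$ is being partitioned into a ``row-active'' set $P=\{(i,j):\zeta_{i,j}=1\}$, a ``column-active'' set $Q=\{(i,j):\eta_{i,j}=1\}$, and an inactive remainder, subject to: in each horizontal strip indexed by $j\in\Pi_y(I)$, exactly $r'_j$ of the cells of $I$ lie in $P$; and in each vertical strip indexed by $i\in\Pi_x(I)$, exactly $c'_i$ of the cells of $I$ lie in $Q$. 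Equivalently we must choose disjoint subsets $P,Q\subseteq I$ so that $P$ has prescribed intersection sizes with the rows of $I$ and $Q$ has prescribed intersection sizes with the columns of $I$.

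Next I would encode this as a bipartite $b$-matching / integral flow instance. Build a network with a source $s$, a sink $t$, a node for each row index $j\in\Pi_y(I)$, a node for each column index $i\in\Pi_x(I)$, and a node for each cell $(i,j)\in I$. Put an arc $s\to$ (row $j$) with capacity exactly $r'_j$ (i.e. lower bound $=$ upper bound $=r'_j$), an arc (row $j$) $\to (i,j)$ with capacity $1$ for each $(i,j)\in I$, an arc (column $i$) $\to t$ with capacity exactly $c'_i$, an arc $(i,j)\to$ (column $i$) with capacity $1$, and — crucially — a bypass arc from each cell node allowing it to be ``inactive'': concretely, give each cell node $(i,j)$ an extra arc of capacity $1$ back to $t$ (or to $s$, suitably oriented) so that flow entering it from its row need not be forced onward through its column. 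A cleaner variant: make the cell node $(i,j)$ a unit-capacity ``pass-through'' with capacity $1$ on the node (split into $(i,j)_{\mathrm{in}}\to(i,j)_{\mathrm{out}}$ with capacity $1$), receiving from either its row or directly from $s$, and sending to either its column or directly to $t$. Then a feasible integral flow of the right value corresponds exactly to a valid choice of $P$ (cells whose unit flow came from the row) and $Q$ (cells whose unit flow goes to the column), with the node capacity $1$ enforcing $\zeta_{i,j}+\eta_{i,j}\le 1$. Feasibility of integral flows with lower and upper bounds is polynomial-time decidable, and an integral optimal solution can be found (e.g. by a max-flow after the standard lower-bound transformation), so the whole problem is in $\mathbb P$.

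Alternatively — and perhaps more in the spirit of the surrounding Gale–Ryser material — I would phrase it as two independent transportation-type feasibility questions coupled only through the packing constraint, and note that the constraint matrix is an interval/network matrix and hence totally unimodular: the system consists of the row-sum equations on the $\zeta$-variables, the column-sum equations on the $\eta$-variables, and the inequalities $\zeta_{i,j}+\eta_{i,j}\le1$, all with $0/1$ coefficients forming a network matrix, so its LP relaxation has integral vertices and linear programming solves it in polynomial time. Either route (flow or LP with TU matrix) gives both the decision and a witness solution.

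The main obstacle I anticipate is getting the network gadget exactly right so that the $\zeta_{i,j}+\eta_{i,j}\le 1$ constraint is enforced without accidentally also forcing $\zeta_{i,j}=\eta_{i,j}$ to be equal or forcing every unit of row-flow to continue into a column; the node-splitting trick (unit-capacity pass-through with optional direct arcs to/from $s,t$) resolves this, but one must check that every valid $(\zeta,\eta)$ lifts to an integral flow and conversely that every integral flow projects back — in particular that ``inactive'' cells are genuinely allowed and that the prescribed exact row/column totals are matched by the exact-capacity source/sink arcs. Once that correspondence is verified, invoking polynomial-time integral max-flow (or LP over a totally unimodular system) finishes the argument; I would not belabor the flow-algorithm details.
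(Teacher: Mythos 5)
Your second, alternative route (linear programming over a totally unimodular system) is essentially the paper's proof: the paper assembles the row-sum equations, the column-sum equations and the packing inequalities into one matrix and shows the relevant submatrix is the node-edge incidence matrix of a bipartite graph (each variable $\zeta_{i,j}$ resp.\ $\eta_{i,j}$ has exactly two unit entries, one in a sum constraint and one in the packing constraint for $(i,j)$, and these two classes of rows form the bipartition), hence totally unimodular; appending the negated rows to turn equations into inequality pairs preserves total unimodularity, so the polyhedron is integral and a vertex is found by LP in polynomial time. Your version of this route is correct in outline, but you only assert total unimodularity (``interval/network matrix''); the bipartite-incidence observation is the short justification you would still need to supply.

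Your primary route, the flow gadget, has a genuine flaw: as described it does not enforce $\zeta_{i,j}+\eta_{i,j}\le 1$. In your network a single unit may travel (row $j$) $\to(i,j)\to$ (column $i$), using the unit cell capacity only once while being counted toward both the exact row quota $r'_j$ and the exact column quota $c'_i$; this corresponds to $\zeta_{i,j}=\eta_{i,j}=1$. Concretely, take $I=\{(1,1)\}$ and $r'_1=c'_1=1$: the original system is infeasible, yet your network admits the feasible integral flow $s\to\textnormal{row }1\to(1,1)\to\textnormal{column }1\to t$, so network feasibility and problem feasibility are not equivalent, and the node-splitting pass-through does not resolve the very obstacle you flagged. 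A correct flow formulation puts rows and columns on the same side and the cells next to the sink: arcs $s\to\textnormal{row }j$ carrying exactly $r'_j$, arcs $s\to\textnormal{column }i$ carrying exactly $c'_i$, unit arcs from row $j$ and from column $i$ to the cell $(i,j)$, and a unit arc $(i,j)\to t$; then the capacity of $(i,j)\to t$ genuinely forces at most one of ``row-active'' and ``column-active'' at each cell, and this bipartite $b$-matching structure is exactly what underlies the paper's total-unimodularity argument.
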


\begin{proof}
We show that the problem can be formulated as a linear program involving a totally unimodular coefficient matrix (for background material, see~\cite{schrijver}). 

We assemble the variables $\zeta_{i,j},$ $\eta_{i,j}$ for $(i,j)\in  I$ into a $2\cdot|I|$-dimensional vector and rephrase the constraints in matrix form  $Ax= r',$ $A'x= c',$ and $(E\: E)x\leq \1,$ where~$r'$ contains the $r'_j$'s,  $c'$ contains the $c'_i$'s, and $E$ denotes the $|I|\times|I|$-identity matrix. The problem is then to determine an integer solution of $\{x\::\: Mx \leq b, \: x\geq 0\},$ where 
\[
M:=\left(\begin{array}{cc}A& 0 \\ -A&0\\0 & A' \\ 0 &-A'\\ E & E\end{array}\right) \quad \textnormal{and}\quad   b:=\left(\begin{array}{c}r'\\-r'\\\ c'\\-c'\\\1 \end{array}\right).
\]
The submatrix
\[
M':=\left(\begin{array}{cc}A& 0 \\ 0 & A' \\ E & E\end{array}\right) 
\] is totally unimodular as it is the node-edge incidence matrix of a bipartite graph (one of the two parts of the partition  is the set of the last $|I|$ rows of~$M'$). But then~$M$ is also totally unimodular as it results from appending a subset of rows of~$-M'$ to the totally unimodular matrix~$M'.$ In other words, $\{x\::\: Mx \leq b,\: x\geq 0\}$ is an integral polyhedron, i.e., it coincides with the convex hull of its integral vectors. Hence, a vertex of this polyhedron can be found in polynomial time by linear programming. 
\end{proof}

Next we study the structure of solutions in terms of the patterns of the blocks. This will allow us to focus on special classes of solutions, which we will later call \emph{reduced} solutions. The problem of finding reduced solutions to a given instance of $\textsc{DR}$ will decompose into the five independent subtasks of solving instances of $\textsc{DR}(\nu),$ $\nu\in\{0,\dots,4\}.$ The concept of reduced solutions will also play a role in our proof of Lem.~\ref{lem:dr2} showing that $\textsc{DR}(2)\in\mathbb{P}.$ 

Blocks in solutions are filled with zeros and ones. There are~16 such combinations. By grouping the combinations that have the same row sums, we distinguish the nine block types $\mathfrak{A}_1,$ $\mathfrak{A}_2,$ $\mathfrak{B}_1,$ $\mathfrak{B}_2,$ $\mathfrak{B}_3,$ $\mathfrak{C}_1,$ $\mathfrak{C}_2,$ $\mathfrak{D}_1,$ $\mathfrak{D}_2$ shown in Fig.~\ref{fig:types:1}. Some of the types consist of several subtypes/combinations: type~$\mathfrak{A}_i$ and $\mathfrak{C}_i$ consist of the two subtypes $\mathfrak{A}_{i,i'}$ and $,\mathfrak{C}_{i,i'}$ $i,i'\in[2],$ respectively, while $\mathfrak{B}_3$ consists of the subtypes $\mathfrak{B}_{3,1},\dots,\mathfrak{B}_{3,4}$. For any given solution, we will later count the number of occurrences of the different block types in each horizontal and vertical strip. The corresponding variables for the horizontal strip $[m]\times \{j,j+1\}$ are indicated in Fig.~\ref{fig:types:1}. For instance, $\alpha_j$ counts the number of occurrences of block type~$\mathfrak{A}_1,$ while $\alpha_{j+1}$ counts the number of occurrences of block type~$\mathfrak{A}_2.$

\begin{figure}[htb] 
\centering
\subfigure[]{ \fbox{\includegraphics[width=0.37\textwidth]{}}}\hspace*{2ex}
\subfigure[]{ \fbox{\includegraphics[width=0.37\textwidth]{}}}\\
\subfigure[]{ \fbox{\includegraphics[width=0.37\textwidth]{}}}\hspace*{2ex}
\subfigure[]{ \fbox{\includegraphics[width=0.37\textwidth]{}}}
\caption{Block types $\mathfrak{A}_1,$ $\mathfrak{A}_2,$ $\mathfrak{B}_1,$ $\mathfrak{B}_2,$ $\mathfrak{B}_3,$ $\mathfrak{C}_1,$ $\mathfrak{C}_2,$ $\mathfrak{D}_1,$ $\mathfrak{D}_2.$  The indicated variables $\alpha_j,$ $\alpha_{j+1},$ $\beta_j,$ $\beta_{j+1},$ $\beta'_j,$ $\gamma_j,$ $\gamma_{j+1}$ count the number of occurrences of the respective block type in the  horizontal strip $[m]\times \{j,j+1\}.$ The block types in~(d) are not further considered as they can be eliminated in a preprocessing step. For referencing purposes, the figure also introduces the subtypes $\mathfrak{A}_{i_1,i_2}$ of $\mathfrak{A}_{i_1},$ $\mathfrak{B}_{3,i_3}$ of $\mathfrak{B}_{3},$ and $\mathfrak{C}_{i_1,i_2}$ of $\mathfrak{C}_{i_1},$ $i_1,i_2\in[2],$ $i_3\in[3].$}\label{fig:types:1}
\end{figure}

We introduce now the concept of local switches. It is based on the following observation. If, for instance, a horizontal strip contains a block of type $\mathfrak{A}_{1,i},$ $i\in[2],$ and another one of type $\mathfrak{B}_2$ we may replace them by a block of type $\mathfrak{A}_{2,i}$ and a block of type $\mathfrak{B}_{3,3},$ respectively, without changing the row and column constraints. Such a change will be called a \emph{horizontal local switch of class~$\langle1\rangle.$ } 
In a similar way we define \emph{horizontal local switches of class $\langle t\rangle,$ $t=2,\dots,7,$} according to Table~\ref{table:switchesexample}. Horizontal local switches of class~$\langle7\rangle$ affect only single blocks (blocks of type~$\mathfrak{B}_{3,4}$ are turned into blocks of type~$\mathfrak{B}_{3,3}$).

The (similar) switches in the vertical strips, called vertical local switches, are listed in Table~\ref{table:verticalswitches}.

\begin{table}[htb]
\begin{center}
{\small
\begin{tabular}{cp{9ex}p{1ex}p{7ex}lc}\toprule
Class &  Turns && Into &&Illustration\\\midrule
$\langle1\rangle$ &$(\mathfrak{A}_{1,i}, \mathfrak{B}_2)$&$\rightarrow$&$(\mathfrak{A}_{2,i},\mathfrak{B}_{3,3})$ &&\begin{minipage}{.3\textwidth}\includegraphics[width=\textwidth]{}\end{minipage}\\
&&&&&\\
$\langle2\rangle$ &$(\mathfrak{A}_{2,i},\mathfrak{B}_1)$&$\rightarrow$&$(\mathfrak{A}_{1,i},\mathfrak{B}_{3,3})$ &&\begin{minipage}{.3\textwidth}\includegraphics[width=\textwidth]{}\end{minipage}\\
&&&&&\\
$\langle3\rangle$ &$(\mathfrak{B}_1, \mathfrak{B}_2)$&$\rightarrow$&$(\mathfrak{B}_{3,3},\mathfrak{B}_{3,3})$ &&\begin{minipage}{.3\textwidth}\includegraphics[width=\textwidth]{}\end{minipage}\\
&&&&&\\
$\langle4\rangle$ &$(\mathfrak{C}_{2,i},\mathfrak{B}_2)$&$\rightarrow$&$(\mathfrak{C}_{1,i},\mathfrak{B}_{3,3})$ &&\begin{minipage}{.3\textwidth}\includegraphics[width=\textwidth]{}\end{minipage}\\
&&&&&\\
$\langle5\rangle$ &$(\mathfrak{C}_{1,i},\mathfrak{B}_1)$&$\rightarrow$&$(\mathfrak{C}_{2,i},\mathfrak{B}_{3,3})$ &&\begin{minipage}{.3\textwidth}\includegraphics[width=\textwidth]{}\end{minipage}\\
&&&&&\\
$\langle6\rangle$ &$(\mathfrak{A}_{1,i},\mathfrak{C}_{1,i'})$&$\rightarrow$&$(\mathfrak{A}_{2,i},\mathfrak{C}_{2,i'})$ &&\begin{minipage}{.3\textwidth}\includegraphics[width=\textwidth]{}\end{minipage}\\
&&&&&\\
$\langle7\rangle$&$(\mathfrak{B}_{3,4})$&$\rightarrow$&$(\mathfrak{B}_{3,3})$ &&\begin{minipage}{.3\textwidth}\includegraphics[width=\textwidth]{}\end{minipage}\\
\bottomrule\\[-.1ex]
\end{tabular}}
\end{center}
\caption{Horizontal local switches. The parameters $i$ and $i'$ are elements of $[2].$}\label{table:switchesexample}
\end{table}

\begin{table}[htb]
\begin{center}
{\small
\begin{tabular}{clp{1ex}lc}\toprule
Class & Turns && Into& Illustration\\\midrule
$\langle1\rangle$ & $(\mathfrak{B}_{3,2},\mathfrak{A}_{i,1})$ &$\rightarrow$& $(\mathfrak{B}_{3,3},\mathfrak{A}_{i,2})$ & \begin{minipage}{.15\textwidth}\includegraphics[width=\textwidth]{}\end{minipage}\\
&&&&\\
$\langle2\rangle$ & $(\mathfrak{B}_{3,1},\mathfrak{A}_{i,2})$ &$\rightarrow$& $(\mathfrak{B}_{3,3},\mathfrak{A}_{i,1})$ & \begin{minipage}{.15\textwidth}\includegraphics[width=\textwidth]{}\end{minipage}\\
&&&&\\
$\langle3\rangle$ & $(\mathfrak{B}_{3,2},\mathfrak{B}_{3,1})$ &$\rightarrow$& $(\mathfrak{B}_{3,3},\mathfrak{B}_{3,3})$ & \begin{minipage}{.15\textwidth}\includegraphics[width=\textwidth]{}\end{minipage}\\
&&&&\\
$\langle4\rangle$ & $(\mathfrak{B}_{3,2},\mathfrak{C}_{i,2})$ &$\rightarrow$& $(\mathfrak{B}_{3,3},\mathfrak{C}_{i,1})$ & \begin{minipage}{.15\textwidth}\includegraphics[width=\textwidth]{}\end{minipage}\\
&&&&\\
$\langle5\rangle$ & $(\mathfrak{B}_{3,1},\mathfrak{C}_{i,1})$ &$\rightarrow$& $(\mathfrak{B}_{3,3},\mathfrak{C}_{i,2})$ & \begin{minipage}{.15\textwidth}\includegraphics[width=\textwidth]{}\end{minipage}\\
&&&&\\
$\langle6\rangle$ & $(\mathfrak{C}_{i',1},\mathfrak{A}_{i,1})$ &$\rightarrow$& $(\mathfrak{C}_{i',2},\mathfrak{A}_{i,2})$ & \begin{minipage}{.15\textwidth}\includegraphics[width=\textwidth]{}\end{minipage}\\
&&&&\\
$\langle7\rangle$ & $(\mathfrak{B}_{3,4})$ &$\rightarrow$& $(\mathfrak{B}_{3,3})$ & \begin{minipage}{.15\textwidth}\includegraphics[width=\textwidth]{}\end{minipage}\\\bottomrule\\[-.1ex]
\end{tabular}}
\end{center}
\caption{Vertical local switches. The parameters $i$ and $i'$ are elements of $[2].$}\label{table:verticalswitches}
\end{table}

A switch is called \emph{local} if it is a horizontal or vertical local switch of some class~$\langle t\rangle,$ $t\in[7].$ A solution is called \emph{reduced} if no local switch of any class~$\langle t \rangle,$ $t\in[7],$ can be applied. Note that local switches are directed in the sense that they turn the types of  blocks listed in the second column of Tables~\ref{table:switchesexample} and~\ref{table:verticalswitches} into the types given in the respective third column of the tables. Oppositely directed switches, called \emph{reversed local switches}, will be used later when we discuss questions of uniqueness.


\begin{lemma} \label{lem:reduction}
\hfill
\begin{enumerate}[(i)]
\item \label{red1} Application of a local switch to a solution of a given instance of \textsc{DR} yields again a solution of the same instance.
\item \label{red2} An instance of $\textsc{DR}$ has a solution if, and only if, there is a reduced solution.
\end{enumerate}
\end{lemma}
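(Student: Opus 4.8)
## Proof Proposal for Lemma~\ref{lem:reduction}

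The plan is to establish both parts by a careful, case-by-case inspection of the seven switch classes in Tables~\ref{table:switchesexample} and~\ref{table:verticalswitches}, combined with a termination argument based on a monotone integer potential.

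For part~(\ref{red1}), I would argue that each local switch is a specialization of a Ryser-type interchange and therefore preserves both the row/column sums and the block constraints. Concretely, for each class~$\langle t\rangle$ one checks three things: (a) the two blocks involved lie in a common horizontal strip (resp.\ vertical strip), so only that strip's row sums (resp.\ column sums) could be affected; (b) within that strip, the total number of ones in each of the two rows (resp.\ columns) crossing the blocks is unchanged — this is exactly why the switch is defined the way it is, and it is immediate from the pictures, since e.g.\ turning $(\mathfrak{A}_{1,i},\mathfrak{B}_2)$ into $(\mathfrak{A}_{2,i},\mathfrak{B}_{3,3})$ moves one black pixel from the lower row of the first block to the lower row of the second while the upper rows are untouched; and (c) the number of ones in each affected block is preserved — by construction every type in $\{\mathfrak{A}_1,\mathfrak{A}_2\}$ has two ones, every type in $\{\mathfrak{B}_1,\mathfrak{B}_2,\mathfrak{B}_3\}$ has two ones, and every type in $\{\mathfrak{C}_1,\mathfrak{C}_2\}$ has two ones, so the block constraint value $\nu$ attached to each block is respected. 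The columns (resp.\ rows) orthogonal to the strip are not affected because the switch only permutes black pixels within fixed columns (resp.\ rows). Hence a solution maps to a solution. This is essentially bookkeeping, but it must be carried out for all seven classes in both orientations; the class~$\langle7\rangle$ switch $(\mathfrak{B}_{3,4})\to(\mathfrak{B}_{3,3})$ is the one single-block case and needs to be checked separately — it rearranges two black pixels inside one block along both a row pair and a column pair simultaneously.

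For part~(\ref{red2}), one direction is trivial: a reduced solution is in particular a solution. For the converse, suppose $x$ is any solution. I would repeatedly apply local switches until none is applicable, and I must show this process terminates. The natural potential is
\[
\Phi(x):=\#\{\text{blocks of type }\mathfrak{B}_{3,4}\}\cdot N \;+\; \#\{\text{blocks of type } \mathfrak{A}_1\text{ or }\mathfrak{C}_1\text{ (horizontally)}\}\;+\;\cdots,
\]
but a cleaner choice is to track, say, the total number of blocks whose pattern is one of the ``upper/left-heavy'' types versus the number of $\mathfrak{B}_{3,4}$ blocks. Inspecting the tables, every horizontal switch of class $\langle1\rangle,\langle4\rangle$ converts an $\mathfrak{A}_{1}$ or $\mathfrak{C}_{1}$ block into an $\mathfrak{A}_2$ or $\mathfrak{C}_2$ block (and symmetrically classes $\langle2\rangle,\langle5\rangle$ go the other way), while classes $\langle3\rangle,\langle6\rangle,\langle7\rangle$ strictly decrease the number of $\mathfrak{B}_1$, $\mathfrak{B}_2$, or $\mathfrak{B}_{3,4}$ blocks — so no single monotone count works globally, and one must design a lexicographic or weighted potential that decreases strictly under every class. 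This is the main obstacle and the part requiring real care: I would assign weights to the nine block types (and the orientation-dependent subtypes) so that in every one of the fourteen rows of the two tables the total weight of the right-hand pair is strictly less than that of the left-hand pair, then take $\Phi$ to be the sum of these weights over all blocks; since $\Phi$ is a non-negative integer bounded by a polynomial in $mn$, the switching process halts, necessarily at a reduced solution. Finding a single weight vector simultaneously validating all fourteen inequalities is a small linear-feasibility check that I would verify explicitly; the pre-processing remark in Fig.~\ref{fig:types:1}(d) (eliminating types $\mathfrak{D}_1,\mathfrak{D}_2$) should be invoked first so that only the nine relevant types occur.

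Finally, I would remark that part~(\ref{red2}) does not claim the reduced solution is unique — different orders of switching may lead to different reduced solutions — only that at least one exists, which is all that is needed for the decomposition into the subtasks $\textsc{DR}(\nu)$, $\nu\in[4]_0$, used in the proof of Theorem~\ref{thm:main1}.
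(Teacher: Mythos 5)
Your strategy coincides with the paper's: part~(\ref{red1}) is exactly the paper's (one-line) observation that local switches change neither the row and column sums nor the number of ones per block, and part~(\ref{red2}) is a termination-by-monovariant argument. Two remarks. First, a small slip in your description of the class~$\langle1\rangle$ switch: the upper rows are \emph{not} untouched --- the single one of the $\mathfrak{A}$-block moves from the lower to the upper row of its column, while the left one of the $\mathfrak{B}_2$-block moves down; what is true (and what your general criteria (a)--(c) correctly capture) is that ones move only vertically within fixed columns, so all column sums and the two row sums of the strip are preserved. Second, where you defer to a linear-feasibility check for a weight vector, the paper instead exhibits an explicit hierarchical (lexicographic) monovariant: every switch of class~$\langle t\rangle,$ $t\in[7]\setminus\{6\},$ strictly increases the number of blocks of type~$\mathfrak{B}_{3,3};$ horizontal class-$\langle6\rangle$ switches leave that count unchanged and strictly increase the number of $\mathfrak{A}_2$-blocks; vertical class-$\langle6\rangle$ switches leave both unchanged and strictly increase the number of blocks of subtype $\mathfrak{A}_{1,2}$ or $\mathfrak{A}_{2,2}.$ Since all three counts are bounded by the number of blocks, the process halts at a reduced solution --- no feasibility computation needed. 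Your deferred check does succeed, so your weighted variant is also valid: for instance, weight $10$ for $\mathfrak{B}_1,\mathfrak{B}_2,\mathfrak{B}_{3,1},\mathfrak{B}_{3,2},$ weight $1$ for $\mathfrak{B}_{3,4},$ weight $0$ for $\mathfrak{B}_{3,3},$ and weights $2,1,1,0$ for $\mathfrak{A}_{1,1},\mathfrak{A}_{1,2},\mathfrak{A}_{2,1},\mathfrak{A}_{2,2}$ (likewise for the corresponding $\mathfrak{C}$-subtypes) strictly decreases under every row of Tables~\ref{table:switchesexample} and~\ref{table:verticalswitches}, giving a non-negative integer potential bounded polynomially in $mn.$ The preprocessing of the types in Fig.~\ref{fig:types:1}(d) is harmless but not needed here, since no switch involves those blocks; and your closing remark is correct --- only existence, not uniqueness, of a reduced solution is claimed or needed.
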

\begin{proof}
To prove~(i) just observe that the local switches do neither change the row and column sums nor the number of ones contained in each block.

Let us now turn to~(ii). Suppose the given instance has a solution~$x^*.$ The local switches of class~$\langle t\rangle,$ $t\in[7]\setminus\{6\},$ increase the number of blocks $B(i,j)$ of type $\mathfrak{B}_{3,3},$ i.e., \[\xi^*_{i,j}=\xi^*_{i+1,j+1}=1 \qquad \textnormal{and}\qquad  \xi^*_{i+1,j}=\xi^*_{i,j+1}=0.\] The process of applying local switches of class~$\langle t \rangle,$ $t\in[7]\setminus\{6\},$ thus needs to terminate since the number of blocks in each problem instance is finite. Further, note that local switches of class~$\langle6\rangle$ applied in horizontal strips increase the number of blocks of type~$\mathfrak{A}_2.$   The vertical switches  of that class do not decrease the number of such blocks, but increase the number of blocks which are of type $\mathfrak{A}_{1,2}$ or $\mathfrak{A}_{2,2}.$ Hence, after finitely many switches, a given solution is converted into a reduced one.
\end{proof}

\begin{lemma}\label{lem:dr2}
$\textsc{DR}(2)\in\mathbb{P}.$
\end{lemma}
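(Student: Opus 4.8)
The plan is to reduce $\textsc{DR}(2)$, after a sign normalization, to the totally unimodular ``two-color'' matching problem of Lemma~\ref{lem:unimod2}.

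First I would note that, since $\nu=2$, every block of a solution contains exactly two ones and is therefore of one of the types $\mathfrak{B}_1$, $\mathfrak{B}_2$, $\mathfrak{B}_3$ (the types $\mathfrak{A}_i$, $\mathfrak{C}_i$, $\mathfrak{D}_i$ have $1$, $3$, $0$, or $4$ ones and cannot occur). Summing over a horizontal strip with bottom row $j$ yields $r_j+r_{j+1}=2\rho_j(m)$ and summing over a vertical strip with left column $i$ yields $c_i+c_{i+1}=2\sigma_i(n)$; these conditions are necessary, checkable in linear time, and from now on assumed to hold. Next I would normalize: reflecting one horizontal strip about its horizontal midline is a bijection of $[m]\times[n]$ that fixes every block setwise, swaps $r_j$ and $r_{j+1}$, and changes no other row sum and no column sum; reflecting a vertical strip is analogous. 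Applying such reflections independently, we may assume $r_j\ge r_{j+1}$ in every horizontal and $c_i\ge c_{i+1}$ in every vertical strip, so that $r'_j:=r_j-\rho_j(m)$ and $c'_i:=c_i-\sigma_i(n)$ are non-negative integers for all $(i,j)\in I$.

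The heart of the argument is the claim that the normalized $\textsc{DR}(2)$ instance is feasible if, and only if, the instance of Lemma~\ref{lem:unimod2} with data $m,n,I$ and $r'_j,c'_i$ is feasible, and that solutions translate polynomially in both directions. For ``$\Leftarrow$'', given $\zeta_{i,j},\eta_{i,j}\in\{0,1\}$ with $\zeta_{i,j}+\eta_{i,j}\le 1$ meeting the prescribed strip sums, I would make $B(i,j)$ a block of type $\mathfrak{B}_1$ if $\zeta_{i,j}=1$, of type $\mathfrak{B}_{3,1}$ if $\eta_{i,j}=1$, and of type $\mathfrak{B}_{3,3}$ otherwise; a short count using the two counting conditions then shows that all block constraints and all row and column sums are met (the $\mathfrak{B}_1$-blocks account for the row excesses $r'_j$, the $\mathfrak{B}_{3,1}$-blocks for the column excesses $c'_i$). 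For ``$\Rightarrow$'', I would take an arbitrary solution and pass to a reduced one by the local-switch argument of Lemma~\ref{lem:reduction}. The horizontal local switch of class $\langle 3\rangle$ turns a pair $(\mathfrak{B}_1,\mathfrak{B}_2)$ lying in one horizontal strip into $(\mathfrak{B}_{3,3},\mathfrak{B}_{3,3})$, so a reduced solution has no such pair in any horizontal strip; since $r_j\ge r_{j+1}$ forces at least as many $\mathfrak{B}_1$- as $\mathfrak{B}_2$-blocks there, strip $j$ actually contains no $\mathfrak{B}_2$-block and exactly $r'_j$ blocks of type $\mathfrak{B}_1$. The vertical local switch of class $\langle 3\rangle$ analogously forces each vertical strip $i$ to contain no $\mathfrak{B}_{3,2}$-block and exactly $c'_i$ blocks of type $\mathfrak{B}_{3,1}$. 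Putting $\zeta_{i,j}:=1$ exactly for the $\mathfrak{B}_1$-blocks and $\eta_{i,j}:=1$ exactly for the $\mathfrak{B}_{3,1}$-blocks then yields a feasible solution of the Lemma~\ref{lem:unimod2} instance.

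Since Lemma~\ref{lem:unimod2} solves the matching problem in polynomial time and the counting check, the reflections, and the two translations of solutions are all obviously polynomial, $\textsc{DR}(2)\in\mathbb{P}$ follows. I expect the only real obstacle to be the ``$\Rightarrow$'' direction: one has to see that, after the sign normalization, the class-$\langle 3\rangle$ local switches are exactly what is needed to remove the blocks (the $\mathfrak{B}_2$'s and $\mathfrak{B}_{3,2}$'s) that would otherwise clash with the constraint $\zeta_{i,j}+\eta_{i,j}\le 1$, so that every reduced solution automatically has the rigid ``at most one $\mathfrak{B}$-direction per block'' shape matched by Lemma~\ref{lem:unimod2}; the rest is bookkeeping.
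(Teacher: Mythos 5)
Your proposal is correct and takes essentially the same route as the paper's proof: check the counting conditions, normalize so that $r_j\ge r_{j+1}$ and $c_i\ge c_{i+1}$, pass to a reduced solution via Lemma~\ref{lem:reduction}, use the class-$\langle 3\rangle$ switches to force exactly $(r_j-r_{j+1})/2$ blocks of type $\mathfrak{B}_1$ per horizontal strip and $(c_i-c_{i+1})/2$ blocks of type $\mathfrak{B}_{3,1}$ per vertical strip, and solve the resulting two-color system by Lemma~\ref{lem:unimod2}. The only cosmetic difference is that you set up the two-variable system $(\zeta,\eta)$ directly (the paper first introduces a third variable $\lambda$ for $\mathfrak{B}_{3,3}$ and eliminates it), which changes nothing essential.
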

\begin{proof}
Our general strategy is as follows. First we show that a given instance is feasible if, and only if, there exists a specific reduced solution that contains only three block types. Then we give a polynomial-time algorithm that finds such a solution or determines infeasibility.


By rearranging the rows and columns, if necessary, we may assume that
\[ 
r_{j}\geq r_{j+1} \qquad \textnormal{and}\qquad c_{i}\geq c_{i+1}, \quad (i,j)\in I.
\]

Now, first suppose that the given instance has a solution. Then, by Lem.~\ref{lem:reduction} there exists a reduced one. In the following, we consider such a reduced solution. 

The potential block types, and corresponding variables $\beta_j,$ $\beta_{j+1},$ and $\beta'_j$ counting the number of occurrences of the respective block types in the horizontal strip $[m]\times\{j,j+1\}$ of a reduced solution, are shown in Fig.~\ref{fig:types:1}(b). Block types counted by the same variable have equal row sums.

Consider now for fixed $(i,j)\in I$ the blocks in $G(I)\cap ([m]\times\{j,j+1\}).$  Counting the block types, any solution needs to satisfy
\[
\begin{array}{cccccc}
2\beta_j  &            &+& \beta'_j &=& r_j,\\
          &2\beta_{j+1}&+& \beta'_j &=& r_{j+1},
\end{array}
\]
which implies 
\begin{align}
\beta_{j}-\beta_{j+1}&=(r_j-r_{j+1})/2. \label{eq:beta1s}
\end{align}
As the solution is reduced no horizontal local switch of class~$\langle3\rangle$ can be applied, hence it is impossible that both $\beta_j>0$ and $\beta_{j+1}>0.$ 
This, together with $r_j\geq r_{j+1}$ and~\eqref{eq:beta1s}, implies \[\beta_{j+1}=0, \qquad \beta_j=(r_j-r_{j+1})/2,\quad \textnormal{and} \quad \beta'_j=r_{j+1}.\] 
(Note that by $\beta_{j+1}=0$ there is no block of type~$\mathfrak{B}_2.$ As no horizontal local switch of class~$\langle7\rangle$ can be applied, there is neither a block of type~$\mathfrak{B}_{3,4}$.)

The same argument applies to the other horizontal strips, and a similar argument holds for the vertical strips (which, in particular, rules out blocks of type $\mathfrak{B}_{3,2}$). Hence, we conclude that, due to reduction, $\textsc{DR}(2)$ amounts to the task of assigning  one of the three block types $\mathfrak{B}_1,\mathfrak{B}_{3,1},\mathfrak{B}_{3,3}$ shown in Fig.~\ref{fig:types:2} to each of the blocks $B(i,j),$ $(i,j)\in I,$ such that, in particular, the number of type~$\mathfrak{B}_1$ and type~$\mathfrak{B}_{3,1}$ blocks in each horizontal and vertical strip, respectively, equals some prescribed cardinality.

\begin{figure}[htb] 
\centering
\includegraphics[width=0.35\textwidth]{}
\caption{The three potential block types in solutions of $\textsc{DR}(2)$ instances considered in the proof of Lem.~\ref{lem:dr2}. Value~$1$ and $0$ are indicated in black and white, respectively.}\label{fig:types:2}
\end{figure}

Now, we introduce $0/1$-variables $\zeta_{i,j},$ $\eta_{i,j},$ $\lambda_{i,j},$ $(i,j)\in I,$ that are one if, and only if,~$B(i,j)$ is of type~$\mathfrak{B}_1,$ $\mathfrak{B}_{3,1},$ $\mathfrak{B}_{3,3},$  respectively. Then we ask for $0/1$-solutions that satisfy

\[
\begin{array}{lllll}
\sum\limits_{p:(p,j)\in I} \zeta_{p,j}&=&(r_j-r_{j+1})/2,&&j \in \Pi_y(I),\\[2.8ex]
\sum\limits_{p:(p,j)\in I} (\eta_{p,j}+\lambda_{p,j})&=&\rho_j(m)-(r_j-r_{j+1})/2,&&j \in \Pi_y(I),\\[2.8ex]
\sum\limits_{q:(i,q)\in I} \eta_{i,q}&=&(c_i-c_{i+1})/2, &&i \in \Pi_x(I),\\[2.8ex]
\sum\limits_{q:(i,q)\in I} (\zeta_{i,q}+\lambda_{i,q})&=&\sigma_i(n)-(c_i-c_{i+1})/2, &&i \in \Pi_x(I),\\[2.8ex]
\zeta_{i,j}+\eta_{i,j}+\lambda_{i,j}&=& 1, &&(i,j) \in I.
\end{array} 
\] Eliminating $\lambda_{i,j}=1-\zeta_{i,j}-\eta_{i,j},$ $(i,j)\in I,$ we see that the $0/1$-solutions of this system are in one-to-one correspondence with the $0/1$ solution of
\begin{equation}
\begin{array}{lllll}
\sum\limits_{p:(p,j)\in I} \zeta_{p,j}&=&(r_j-r_{j+1})/2,&&j \in \Pi_y(I),\\[2.8ex]
\sum\limits_{q:(i,q)\in I} \eta_{i,q}&=&(c_i-c_{i+1})/2, &&i \in \Pi_x(I),\\[2.8ex]
\zeta_{i,j}+\eta_{i,j}&\leq& 1, &&(i,j) \in I.
\end{array} \label{eq:zetasol}
\end{equation}
As we assume that our $\textsc{DR}(2)$ instance is feasible (hence containing two ones in each block), we have $r_j+r_{j+1}=2\rho_j(m)$ and $c_i+c_{i+1}=2\sigma_i(n)$ for every $j \in \Pi_y(I)$ and $i \in \Pi_x(I).$ Hence, if a solution to~\eqref{eq:zetasol} is expanded, replacing the $\zeta^*_{i,j},$ $\eta^*_{i,j}, 1-\zeta^*_{i,j}-\eta^*_{i,j},$ $(i,j)\in I,$  by the respective block types, then, indeed 
\[
\begin{array}{lllll}
\sum\limits_{p:(p,j)\in G(I)}\xi^*_{p,j}&=&2\cdot(r_j-r_{j+1})/2+\rho_j(m)-(r_j-r_{j+1})/2&=&r_j,\\[2.8ex]
\sum\limits_{p:(p,j)\in G(I)}\xi^*_{p,j+1}&=&\rho_j(m)-(r_j-r_{j+1})/2&=&r_{j+1},\\[2.8ex]
\sum\limits_{q:(i,q)\in G(I)}\xi^*_{i,q}&=&2\cdot(c_i-c_{i+1})/2+\sigma_i(n)-(c_i-c_{i+1})/2&=&c_i,\\[2.8ex]
\sum\limits_{q:(i,q)\in G(I)}\xi^*_{i+1,q}&=&\sigma_i(n)-(c_i-c_{i+1})/2&=&c_{i+1},\\[2.8ex]
\end{array}
\] for every $j \in \Pi_y(I)$ and $i \in \Pi_x(I).$ Any $0/1$-solution to \eqref{eq:zetasol} provides thus a solution to the given $\textsc{DR}(2)$ instance. Since $0/1$-solutions to \eqref{eq:zetasol} can be found in polynomial time (see Lem.~\ref{lem:unimod2}), we have hence determined a solution to our instance in polynomial time. Note that this solution is reduced as it does not contain blocks of type~$\mathfrak{B}_{2},$ $\mathfrak{B}_{3,2},$ and $\mathfrak{B}_{3,4}.$

It remains to consider the case that the given \textsc{DR}$(2)$ instance is infeasible. Then one of the previous steps must fail: either~\eqref{eq:zetasol} admits no $0/1$-solution or the expanded ``solution'' does not satisfy all row and column sums. Both can be detected in polynomial time (see also Lem.~\ref{lem:unimod2}).
\end{proof}

Let us now turn to the task of solving \textsc{DR}. Suppose, we are given a feasible instance of \textsc{DR} that does not contain any block constraints of the form $v(i,j)=\nu,$ $\nu\in\{0,4\},$ and whose row and column sums satisfy \begin{equation} \label{type:eq0}
r_{j}\geq r_{j+1} \qquad \textnormal{and}\qquad c_{i}\geq c_{i+1}, \quad (i,j)\in C(m,n,2).
\end{equation} We call such an instance a \emph{proper} instance.

In the following, we consider for fixed $j\in[n]\cap(2\mathbb{N}_0+1)$ the horizontal strip $[m]\times \{j,j+1\}$. We set 
\begin{align*}
\vja:=|\{i\in [m]\cap(2\mathbb{N}_0+1):v(i,j)=1\}|,\\
\vjb:=|\{i\in [m]\cap(2\mathbb{N}_0+1):v(i,j)=2\}|,\\
\vjc:=|\{i\in [m]\cap(2\mathbb{N}_0+1):v(i,j)=3\}|.
\end{align*}  

Let us now consider a solution of the proper instance. A count of the block types of the solution in that strip (the variables for the block types are indicated in Fig.~\ref{fig:types:1}), yields
\begin{alignat}{15}
\alpha_j&& &&             &&\:\:+\:\:&&2\beta_j&& &&             &&\:\:+\:\:&&\beta'_j  &&\:\:+\:\:&& \gamma_j  &&\:\:+\:\:&&2\gamma_{j+1} &&\:\:=\:\:&&&r_j,\nonumber \\ 
        && &&\alpha_{j+1} &&\:\:+\:\:&&        && &&2\beta_{j+1} &&\:\:+\:\:&&\beta'_j  &&\:\:+\:\:&& 2\gamma_j &&\:\:+\:\:&&\gamma_{j+1}  &&\:\:=\:\:&&&r_{j+1}, \nonumber \\ 
\alpha_j&&\:\:\:+\:\:\:&&\alpha_{j+1} && &&        && &&             && &&                 && &&           && &&              &&\:\:=\:\:&&&\vja, \label{eq:together1}\\
        && &&             && &&\beta_j &&\:\:+\:\:&& \beta_{j+1} &&\:\:+\:\:&& \beta'_j && &&           && &&              &&\:\:=\:\:&&&\vjb,\label{eq:together2}\\ 
        && &&             && &&        && &&             && &&                 && &&  \gamma_j &&\:\:+\:\:&&\gamma_{j+1}  &&\:\:=\:\:&&&\vjc.\label{eq:together3}
\end{alignat}

From the system we obtain
\begin{alignat}{11}
 \alpha_j &&\:\:+\:\:&& \beta_{j} &&\:\:-\:\:&& \beta_{j+1}&&         &&        &&\:\:+\:\:&& \gamma_{j+1} &&\:\:\:=\:\:\:&&&r_j-\vjb-\vjc,\label{type:eq:main1}\\
-\alpha_j &&\:\:-\:\:&& \beta_{j} &&\:\:+\:\:&& \beta_{j+1}&&\:\:+\:\:&&\gamma_j&& &&                      &&\:\:\:=\:\:\:&&&r_{j+1}-\vja-\vjb-\vjc.\label{type:eq:main3}
\end{alignat}

We will show that for reduced solutions, the variable values of $\alpha_j,$ $\alpha_{j+1},$ $\beta_{j},$ $\beta'_j,$ $\beta_{j+1},$ $\gamma_j,$ $\gamma_{j+1}$ are uniquely determined and can, actually, be computed from the data (i.e., from $r_{j},$ $r_{j+1},$ $\vja,$ $\vjb,$ and $\vjc$). This will then allow us to split the task into instances of $\textsc{DR}(\nu),$ $\nu\in[3].$

We begin with a simple observation.

\begin{lemma}
Any reduced solution to a proper instance of \textsc{DR} satisfies $\beta_{j+1}=0.$
\end{lemma}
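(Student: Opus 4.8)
The claim to establish is that in any reduced solution to a proper instance of \textsc{DR}, the variable $\beta_{j+1}$ (counting blocks of type $\mathfrak{B}_2$ in the horizontal strip $[m]\times\{j,j+1\}$) vanishes. The plan is to argue exactly as in the corresponding step of Lem.~\ref{lem:dr2}, but now in the presence of the additional block types $\mathfrak{A}_i,\mathfrak{C}_i$ that a mixed-graylevel strip can contain. First I would isolate the two equations governing the purely ``$\beta$'' block types, namely \eqref{eq:together2} together with the two row-sum equations, and subtract them to obtain a clean relation of the form $\beta_j-\beta_{j+1}=(r_j-r_{j+1})/2$, just as \eqref{eq:beta1s} does in the single-graylevel case; the point is that the $\alpha$- and $\gamma$-contributions to $r_j$ and $r_{j+1}$ are symmetric (each of $\mathfrak{A}_1/\mathfrak{A}_2$ contributes one unit to exactly one of the two rows, and $\mathfrak{C}_1/\mathfrak{C}_2$ and the pair $\gamma_j,\gamma_{j+1}$ behave analogously), so they cancel upon subtraction and do not interfere.

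Next I would invoke the reducedness hypothesis. Since no horizontal local switch of class $\langle 3\rangle$ — which turns $(\mathfrak{B}_1,\mathfrak{B}_2)$ into $(\mathfrak{B}_{3,3},\mathfrak{B}_{3,3})$ — can be applied, it is impossible that both $\beta_j>0$ and $\beta_{j+1}>0$ in the strip. Combining this with the properness assumption $r_j\ge r_{j+1}$ (so that the right-hand side of the subtracted equation is nonnegative) and the relation $\beta_j-\beta_{j+1}=(r_j-r_{j+1})/2\ge 0$ forces $\beta_{j+1}=0$: if $\beta_{j+1}>0$ then the relation gives $\beta_j\ge\beta_{j+1}>0$ as well, contradicting the no-class-$\langle 3\rangle$-switch condition.

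The only subtlety, and the step I would be most careful about, is verifying that the row-sum bookkeeping for the strip really does decouple the $\beta$-variables from everything else in the way claimed — i.e., that equations \eqref{eq:together2} and the two row equations genuinely yield \eqref{eq:beta1s} after the $\alpha,\gamma$ terms cancel. This is a matter of reading off Fig.~\ref{fig:types:1} correctly: one must confirm that $\mathfrak{B}_1$ contributes $(2,0)$ to $(r_j,r_{j+1})$, that $\mathfrak{B}_2$ contributes $(0,2)$, that $\mathfrak{B}_{3,i}$ contributes $(1,1)$, and that the remaining types $\mathfrak{A}_1,\mathfrak{A}_2$ and $\mathfrak{C}_1,\mathfrak{C}_2$ contribute $(1,0)$ resp.\ $(0,1)$ each (and the $\gamma$-types likewise in complementary pairs), so that $2\beta_j+\beta'_j+(\text{row-}j\text{ part of }\alpha\text{'s and }\gamma\text{'s})=r_j$ and symmetrically for $r_{j+1}$, whence subtraction kills the parenthetical term. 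Once this table-reading is in place, the argument is a three-line repetition of the \textsc{DR}(2) case, and no genuinely new obstacle arises.
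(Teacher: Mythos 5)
There is a genuine gap, and it sits exactly at the step you flagged as the one to be careful about: the claimed cancellation is false. In a mixed strip the two row equations read $\alpha_j+2\beta_j+\beta'_j+\gamma_j+2\gamma_{j+1}=r_j$ and $\alpha_{j+1}+2\beta_{j+1}+\beta'_j+2\gamma_j+\gamma_{j+1}=r_{j+1}$, so subtraction gives $r_j-r_{j+1}=(\alpha_j-\alpha_{j+1})+2(\beta_j-\beta_{j+1})+(\gamma_{j+1}-\gamma_j)$, not \eqref{eq:beta1s}. The $\mathfrak{A}$- and $\mathfrak{C}$-contributions cancel only if $\alpha_j=\alpha_{j+1}$ and $\gamma_j=\gamma_{j+1}$, for which there is no reason: each $\mathfrak{A}_1$ block contributes $(1,0)$ to $(r_j,r_{j+1})$ and each $\mathfrak{A}_2$ block $(0,1)$, and these appear in unequal numbers in general (likewise $\mathfrak{C}_1$ contributes $(1,2)$ and $\mathfrak{C}_2$ contributes $(2,1)$). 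Consequently your inference ``$r_j\ge r_{j+1}$ and no class-$\langle3\rangle$ switch $\Rightarrow\beta_{j+1}=0$'' does not go through: a strip consisting of one $\mathfrak{B}_2$ block and two $\mathfrak{A}_1$ blocks has $r_j=r_{j+1}=2$, $\beta_j=0$, $\beta_{j+1}=1$, satisfies $r_j\ge r_{j+1}$, and admits no class-$\langle3\rangle$ switch — so nothing in your argument excludes it. (It is of course not reduced, but only because a class-$\langle1\rangle$ switch applies, which your argument never invokes.)

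The paper's proof uses reducedness with respect to the switch classes $\langle1\rangle$, $\langle3\rangle$, \emph{and} $\langle4\rangle$: if $\beta_{j+1}>0$, the unavailability of these switches forces $\alpha_j=\beta_j=\gamma_{j+1}=0$, hence by \eqref{eq:together1} and \eqref{eq:together3} $\alpha_{j+1}=\vja$ and $\gamma_j=\vjc$, and then \eqref{type:eq:main1} and \eqref{type:eq:main3} yield $r_j=\vjb+\vjc-\beta_{j+1}<\vjb+\vjc\le r_{j+1}$, contradicting \eqref{type:eq0}. Equivalently, with $\alpha_j=\beta_j=\gamma_{j+1}=0$ the subtracted row equation gives $r_j-r_{j+1}=-\alpha_{j+1}-2\beta_{j+1}-\gamma_j<0$. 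So the missing ingredient in your proposal is precisely the use of the $\mathfrak{A}$- and $\mathfrak{C}$-involving switches (classes $\langle1\rangle$ and $\langle4\rangle$) to control the asymmetric $\alpha$- and $\gamma$-terms that do not cancel; with only class $\langle3\rangle$ the statement cannot be derived.
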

\begin{proof} Suppose there is a reduced solution with $\beta_{j+1}>0.$ As we cannot apply any further horizontal local switch of class~$\langle t \rangle$, $t\in \{1,3,4\},$ we need to have $\alpha_j=\beta_j=\gamma_{j+1}=0,$ hence by~\eqref{eq:together1} and~\eqref{eq:together3} $\alpha_{j+1}=\vja,$ and $\gamma_j=\vjc.$ Using~\eqref{type:eq:main1} and~\eqref{type:eq:main3} we obtain 
\begin{alignat*}{3}
&r_j&&\:\:\:=\:\:\:&&\vjb+\vjc-\beta_{j+1},\\
&r_{j+1}&&\:\:\:=\:\:\:&&\vja+\vjb+2\vjc+\beta_{j+1},
\end{alignat*} which imply $r_j<\vjb+\vjc\leq r_{j+1},$ a contradiction to~\eqref{type:eq0}.
\end{proof}

Given a proper instance, we distinguish three cases.

\begin{itemize}
\item Case~1: Every reduced solution satisfies $\beta_{j}>0.$\\[-2ex]
\item Case~2: There is a reduced solution satisfying $\beta_j=0,$ and in any such solution we have $\alpha_j>0.$\\[-2ex]
\item Case~3: There is a reduced solution satisfying $\beta_j=\alpha_j=0.$
\end{itemize}

\begin{lemma} \label{lem:cases}
Given a proper instance of~\textsc{DR}. Then,
\begin{alignat*}{8}
 \textnormal{Case~1}\qquad &&\Leftrightarrow&&\qquad  &\vjc &&\leq&&\:\: r_{j+1}\:\: &&<&&\:\:\vjb+\vjc,\\
 \textnormal{Case~2}\qquad &&\Leftrightarrow&&\qquad  &\vjb+\vjc &&\leq&&\:\: r_{j+1}\:\: &&<&&\:\:\vja+\vjb+\vjc,\\
 \textnormal{Case~3}\qquad &&\Leftrightarrow&&\qquad  &\vja+\vjb+\vjc &&\leq&& \:\:r_{j+1}\:\: &&\leq&&\:\: \vja+\vjb+2\vjc.
\end{alignat*}
and in all cases the variable values of $\alpha_j,$ $\alpha_{j+1},$ $\beta_{j},$ $\beta'_j,$ $\beta_{j+1},$ $\gamma_j,$ $\gamma_{j+1}$ are uniquely determined by the data.
\end{lemma}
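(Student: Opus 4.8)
The plan is to prove the characterization of the three cases and the uniqueness of the block-type counts by exploiting that in a reduced solution very few block types survive, so the system \eqref{eq:together1}--\eqref{eq:together3} together with the switch-irreducibility conditions pins everything down. Since we already know from the preceding lemma that $\beta_{j+1}=0$, the unknowns reduce to $\alpha_j,\alpha_{j+1},\beta_j,\beta'_j,\gamma_j,\gamma_{j+1}$, constrained by \eqref{eq:together1}, \eqref{eq:together2}, \eqref{eq:together3}, one of the two row-sum equations, and the sign restrictions coming from reducedness (no class $\langle t\rangle$ switch applicable). Substituting $\beta_{j+1}=0$ into \eqref{type:eq:main1}--\eqref{type:eq:main3} gives the two clean relations $\alpha_j+\beta_j+\gamma_{j+1}=r_j-\vjb-\vjc$ and $-\alpha_j-\beta_j+\gamma_j=r_{j+1}-\vja-\vjb-\vjc$, which will be the workhorses.

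Next I would go case by case. In Case~1 ($\beta_j>0$ in every reduced solution): since no class $\langle3\rangle$ switch applies we have $\beta_{j+1}=0$ (already known), and I would argue that $\beta_j>0$ forces (via the relevant $\langle1\rangle,\langle2\rangle,\langle4\rangle,\langle5\rangle$ irreducibilities) that $\alpha_{j+1}=\gamma_j=0$ — i.e. the partners that would switch against a $\mathfrak{B}$-block are absent. Then \eqref{eq:together1} gives $\alpha_j=\vja$ and \eqref{eq:together3} gives $\gamma_{j+1}=\vjc$; feeding these into the two workhorse relations solves for $\beta_j$ and yields $r_{j+1}=\vjc+\beta_j$ and (using \eqref{eq:together2}) $\beta'_j=\vjb-\beta_j$, so $\beta_j>0$ translates exactly into $\vjc\le r_{j+1}<\vjb+\vjc$; the converse direction follows because under this inequality the forced values are consistent and positive, and no reduced solution can escape this regime. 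Case~3 ($\alpha_j=\beta_j=0$ in some reduced solution) is handled symmetrically: now $\alpha_{j+1}=\vja$, $\gamma_j$ is whatever the equations demand, and the workhorse relations give $r_{j+1}=\vja+\vjb+\gamma_j+\dots$ — working it through yields the bound $\vja+\vjb+\vjc\le r_{j+1}\le\vja+\vjb+2\vjc$ (the upper bound coming from $\gamma_{j+1}\le\vjc$, equivalently $\gamma_j\ge0$). Case~2 is the intermediate range, where $\beta_j=0$ but $\alpha_j>0$; here reducedness against class $\langle6\rangle$ (the $(\mathfrak{A}_1,\mathfrak{C}_1)\to(\mathfrak{A}_2,\mathfrak{C}_2)$ switch) plus $\langle1\rangle,\langle2\rangle$ forces $\gamma_{j+1}=0$ and then the equations give $\alpha_j=r_j-\vjb-\vjc$, $\gamma_j=\vjc$, $\beta'_j=\vjb$, $\alpha_{j+1}=\vja-\alpha_j$, and $\alpha_j>0$ becomes $r_{j+1}=\vja+\vjb+\vjc-\alpha_j+\dots$, i.e. the stated middle inequality. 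In every case all seven variable values come out as explicit expressions in $r_j,r_{j+1},\vja,\vjb,\vjc$, which is precisely the uniqueness claim; and since \eqref{type:eq0} guarantees $r_j\ge r_{j+1}$ and feasibility guarantees $r_{j+1}\le\vja+\vjb+2\vjc$ while $r_{j+1}\ge\vjc$ (as at most the single-$1$ and the $3$-blocks can be ``empty'' in row $j+1$... actually $r_{j+1}\ge0$, but the case-split shows the three ranges tile $[\vjc,\vja+\vjb+2\vjc]$), the three cases are exhaustive and mutually exclusive, so the equivalences are genuine characterizations.

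The main obstacle I expect is the careful bookkeeping of which reducedness conditions are active in each case and proving that the case hypotheses (stated as "every reduced solution satisfies …" / "there is a reduced solution with …") are in fact equivalent to the numerical inequalities rather than merely implied by them. Concretely, the delicate direction is: given the inequality, show that \emph{no} reduced solution lies outside the claimed regime. This requires checking that if one assumed, say, $\beta_j=0$ while $\vjc\le r_{j+1}<\vjb+\vjc$, the forced equations would produce a negative value for some variable (e.g. $\beta'_j$ or $\gamma_j$), contradicting reducedness or non-negativity — so the hypothesis regime is not just sufficient but necessary. I would organize this as a short lemma-internal subclaim for each case ("under inequality $X$, every reduced solution has $\beta_j>0$", etc.), each proved by plugging the alternative into \eqref{eq:together1}--\eqref{eq:together3} and the workhorse relations and deriving a sign contradiction; the arithmetic is routine once the right switch-exclusions are identified, so I would state the forced values in a small table and verify non-negativity and the row/column/block equations line by line.

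The uniqueness of $\alpha_j,\alpha_{j+1},\beta_j,\beta'_j,\beta_{j+1},\gamma_j,\gamma_{j+1}$ then follows for free: in each of the three mutually exclusive, exhaustive cases we have exhibited these as the unique solution of the linear system obtained by adjoining the active equalities $\beta_{j+1}=0$ and (depending on the case) $\alpha_{j+1}=\vja$ or $\gamma_{j+1}=0$ or $\alpha_j=\beta_j=0$ to \eqref{eq:together1}--\eqref{eq:together3} and one row equation. This is exactly what is needed downstream to decompose \textsc{DR} into the subtasks $\textsc{DR}(\nu)$, $\nu\in[3]$, since knowing the per-strip counts of each block type turns the remaining assignment problem into independent single-graylevel problems.
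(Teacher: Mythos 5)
Your overall strategy is the same as the paper's: use $\beta_{j+1}=0$, the counting equations \eqref{eq:together1}--\eqref{eq:together3} and \eqref{type:eq:main1}--\eqref{type:eq:main3}, and irreducibility of a reduced solution to pin down all seven counts in each case, and then close the equivalences by noting that the three cases are exhaustive and the three intervals disjoint. (The paper settles your ``delicate direction'' exactly by that disjointness remark, so the extra per-case contradiction subclaims you propose are unnecessary, though harmless.)

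There is, however, a concrete error in your Case~2, and as written that step fails. With the paper's conventions $\gamma_j$ counts blocks of type $\mathfrak{C}_1$ and $\gamma_{j+1}$ blocks of type $\mathfrak{C}_2$, and the class-$\langle 6\rangle$ switch consumes a pair $(\mathfrak{A}_1,\mathfrak{C}_1)$. Hence in Case~2, where $\alpha_j>0$, irreducibility forces $\gamma_j=0$ and $\gamma_{j+1}=\vjc$ --- not, as you assert, $\gamma_{j+1}=0$ and $\gamma_j=\vjc$. Your derived values $\alpha_j=r_j-\vjb-\vjc$, $\gamma_j=\vjc$, $\alpha_{j+1}=\vja-\alpha_j$ are consistent only with the swapped assignment, and feeding them into the strip count gives $r_{j+1}=\vja-\alpha_j+\vjb+2\vjc$, i.e.\ the range $\vjb+2\vjc\le r_{j+1}<\vja+\vjb+2\vjc$, which is not the stated middle interval and overlaps Case~3; so ``working it through'' with your forced values does not yield the lemma. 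The correct Case~2 values are $\gamma_j=0$, $\gamma_{j+1}=\vjc$, $\beta'_j=\vjb$, $\alpha_j=\vja+\vjb+\vjc-r_{j+1}$, $\alpha_{j+1}=r_{j+1}-\vjb-\vjc$. Two smaller slips of the same kind: in Case~1 the row count gives $r_{j+1}=\vjc+\beta'_j$ (not $\vjc+\beta_j$), and there only classes $\langle 2\rangle,\langle 5\rangle$ (which involve $\mathfrak{B}_1$) are needed to kill $\alpha_{j+1}$ and $\gamma_j$; in Case~3 the upper bound $r_{j+1}\le\vja+\vjb+2\vjc$ comes from $\gamma_j\le\vjc$, equivalently $\gamma_{j+1}\ge0$, while $\gamma_j\ge0$ gives the lower bound. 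With these indices straightened out, your argument coincides with the paper's proof.
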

\begin{proof}By Lem.~\ref{lem:reduction}, a proper instance has a reduced solution. Hence, one of the disjoint Cases~$1-3$ occurs. Once we have established the implications ``$\Rightarrow$''  the reverse implications  follow, because the stated intervals for the~$r_j$ are disjoint (thereby uniquely specifying the case).
 
Hence it suffices to prove the implications ``$\Rightarrow$.'' Let us begin with Case~1. As we cannot apply any further horizontal local switch of class~$\langle t\rangle,$ $t\in\{2,5\},$  we  must  have $\alpha_{j+1}=\gamma_j=0,$  hence $\alpha_j=\vja,$ and $\gamma_{j+1}=\vjc.$ 
Eq.~\eqref{type:eq:main3} implies $\beta_j=\vjb+\vjc-r_{j+1}.$ As $0<\beta_j\leq \vjb,$ we thus have $\vjc\leq r_{j+1}<\vjb+\vjc.$ Further,~\eqref{eq:together2} yields $\beta'_j=r_{j+1}-\vjc.$ Hence, all values of the variables are uniquely determined.

In Case~2, as $\beta_{j+1}=\beta_j=0,$ we have $\beta'_j=\vjb,$ and as we cannot apply any further horizontal local switch of class~$\langle 6\rangle,$ we need to have $\gamma_j=0,$  hence $\gamma_{j+1}=\vjc.$ Eq.~\eqref{type:eq:main3} implies $\alpha_j=\vja+\vjb+\vjc-r_{j+1}.$ As $0<\alpha_j\leq \vja,$ we thus have $\vjb+\vjc\leq r_{j+1}<\vja+\vjb+\vjc.$ Further,~\eqref{eq:together1} yields $\alpha_{j+1}=r_{j+1}-\vjb-\vjc.$ Hence, again, all variables are determined.

In Case~3, as $\beta_{j+1}=\beta_j=\alpha_j=0,$ we have $\beta'_j=\vjb$ and $\alpha_{j+1}=\vja.$ Eq.~\eqref{type:eq:main3} implies $\gamma_j=r_{j+1}-\vja-\vjb-\vjc.$  As $0\leq\gamma_j\leq \vjc$ we thus have $\vja+\vjb+\vjc\leq r_{j+1}\leq \vja+\vjb+2\vjc.$ Further,~\eqref{eq:together3} yields $\gamma_{j+1}=\vja+\vjb+2\vjc-r_{j+1},$ and all variable values are determined.
\end{proof}

We are now prepared to give a proof of Thm.~\ref{thm:main1}.

\begin{proof}[Proof of Thm.~\ref{thm:main1}]
We show that the task of finding a solution to a given problem instance can be decomposed into five independent subtasks of solving instances of $\textsc{DR}(\nu),$ $\nu\in\{0,\dots,4\}.$ The subtasks are then polynomial time solvable by Lem.~\ref{lemma:mono}, Cor.~\ref{cor:mono3} and~\ref{lem:dr2} (and trivially for $\textsc{DR}(0)$ and $\textsc{DR}(4)$). The key step in our proof is to deduce from the data the row and column sums for the particular subtasks. This deduction is possible for reduced solutions. 

We start with some preprocessing if necessary. First, if the problem instance contains block constraints of the form $v(i,j)=\nu,$ $\nu\in\{0,4\},$ we need to have $\xi_{i,j}=\xi_{i+1,j}=\xi_{i,j+1}=\xi_{i+1,j+1}=\nu/4.$ In this case, we fix the values of these variables, reduce the row and column sums accordingly, and consider only the remaining blocks. For notational convenience, we assume from now on that our instance does not contain such block constraints. 

Second, by possibly rearranging the rows and columns, we assume that~\eqref{type:eq0} holds.

Suppose, the problem instance is feasible, hence proper. (We return to the infeasible case later.) The instance has a solution and, hence by Lem.~\ref{lem:reduction}, a reduced one. Thus, considering a horizontal strip $[m]\times \{j,j+1\}$ with $j\in[n]\cap(2\mathbb{N}_0+1),$ one of the Cases~1,~2 or~3 needs to occur. Which of the cases occurs is, by Lem.~\ref{lem:cases}, determined by the data, simply by testing whether we have 
\begin{equation}
\begin{aligned}
&\vjc &&\leq&& r_{j+1} &&<&&\vjb+\vjc,&\quad \textnormal{or} \\
&\vjb+\vjc &&\leq&& r_{j+1} &&<&&\vja+\vjb+\vjc, &\quad \textnormal{or} \\ 
&\vja+\vjb+\vjc &&\leq&& r_{j+1} &&\leq&& \vja+\vjb+2\vjc.&
\end{aligned} \label{eq:intervals}
\end{equation}

In all three cases we  know the unique values of the $\alpha_j,$ $\alpha_{j+1},$ $\beta_j,$ $\beta_{j+1},$ $\beta'_j,$  $\gamma_j,$ and $\gamma_{j+1}.$ Therefore we know for each $\nu\in[3]$ the row sums 
\[\sum_{p:(p,j)\in G(I(\nu))}\xi_{p,j} \qquad \textnormal{and}\qquad \sum_{p:(p,j)\in G(I(\nu))}\xi_{p,j+1},\] with
 \[I(\nu):=\{(i,j)\in C(m,n,2):  v(i,j)=\nu\}\] for $G(I(\nu))$ according to~\eqref{eq:Gdef}. In particular, we have
\begin{equation} \label{eq:redxrays}
\begin{array}{lclllcl}
\sum\limits_{p:(p,j)\in G(I(1))}\xi_{p,j}&=&\alpha_j, &&\sum\limits_{p:(p,j)\in G(I(1))}\xi_{p,j+1}&=&\alpha_{j+1},\\[2.8ex]
\sum\limits_{p:(p,j)\in G(I(2))}\xi_{p,j}&=&2\beta_j+\beta'_j, &&\sum\limits_{p:(p,j)\in G(I(2))}\xi_{p,j+1}&=&\beta'_j,\\[2.8ex]
\sum\limits_{p:(p,j)\in G(I(3))}\xi_{p,j}&=&\gamma_j+2\gamma_{j+1}, && \sum\limits_{p:(p,j)\in G(I(3))}\xi_{p,j+1}&=&2\gamma_{j}+\gamma_{j+1}.
\end{array}
\end{equation}

Similarly, we obtain the individual vertical constraints. We can therefore decompose the reconstruction problem into the three independent subproblems $\textsc{DR}(\nu),$ $\nu\in[3],$  which, by Lem.~\ref{lemma:mono}, Cor.~\ref{cor:mono3}, and Lem.~\ref{lem:dr2}, are solvable in polynomial time. 

It remains to consider the case that the given instance of~$\textsc{DR}$ is infeasible. In this case, the decomposition into the subproblems needs to yield either an infeasible subproblem or the returned ``solution'' needs to violate one of the constraints. As both are detected in polynomial time, we have concluded the proof of this theorem.
\end{proof}

Note that the above proof is constructive. A corresponding polynomial-time algorithm for solving~\textsc{DR} is summarized in Algorithm~\ref{alg:alg1}.
 
\begin{algorithm}
\caption{Compute a solution to \textsc{DR}}
\label{alg:alg1}
\begin{algorithmic}[1]
\STATE{$I{(\nu)}:=\{(i,j)\in C(m,n,2):v(i,j)=\nu\}$, $\nu\in[4]_0.$}
\STATE{Solve \textsc{DR}(0) with $I:=I{(0)}$ by filling the $B_2(i,j),$ $(i,j)\in I,$ with zeros.}
\STATE{Solve \textsc{DR}(4) with $I:=I{(4)}$ by filling the $B_2(i,j),$ $(i,j)\in I,$ with ones.}
\STATE{Reduce the row and column sums accordingly.}
\STATE{Determine the row and column sums for the \textsc{DR}$(\nu)$ instances, $\nu\in[3]$ \\ (row sums according to~\eqref{eq:intervals}, \eqref{eq:redxrays}; column sums analogously).}
\STATE{Solve \textsc{DR}(1) with $I:=I{(1)}$ according to~\eqref{eq:sol1}.}
\STATE{Solve \textsc{DR}(2) with $I:=I{(2)}$ according to Lem.~\ref{lem:dr2}.}
\STATE{Solve \textsc{DR}(3) with $I:=I{(3)}$ according to Cor.~\ref{cor:mono3}.}
\IF{the returned $\xi^*_{p,q}, (p,q)\in [m]\times[n],$ satisfy all constraints}\STATE \textbf{return} $\xi^*_{p,q}, (p,q)\in [m]\times[n].$\ELSE \STATE \textbf{return} ``Instance is infeasible.''\ENDIF
\end{algorithmic}
\end{algorithm}

We also remark that the solutions returned by the proposed algorithm are always reduced. This is easily seen by~(i) verifying that the values $\alpha_j,$ $\alpha_{j+1},$ $\beta_j,$ $\beta_{j+1},$ $\beta'_j,$  $\gamma_j,$ and $\gamma_{j+1},$ $j\in[n],$  in each of the three cases in~\eqref{eq:intervals} (and correspondingly for the variables in the vertical strips) do not allow an application of any of the local switches of class~$\langle t\rangle,$ $t\in[6],$ because for each of the local switches there is always a corresponding variable of value zero; and~(ii) by noting that the proposed algorithm for $\textsc{DR}(2)$ returns, by definition, only solutions where no local switch of class~$\langle 7\rangle$ can be applied.

Now we turn to  uniqueness.
\begin{proof}[Proof of Thm.~\ref{main:unique}] 
By Thm.~\ref{thm:main1}, either infeasibility is detected or a reduced solution of an instance is determined in polynomial time. We can thus assume that the problem instance has a solution. 

Clearly, a solution is unique if, and only if (Condition~1:) there is only one reduced solution, and (Condition~2:) every solution is a reduced solution.

The solution returned by the algorithm is reduced. Any two reduced solutions define the same instances of the subproblems $\textsc{DR}(\nu),$ $\nu\in[3],$ because the problem instances are defined by means of~\eqref{eq:intervals}, i.e., the definition depends only on the particular values of the $r_1,\dots,r_n,$ and $c_1,\dots,c_m$ (note that $\vja,$ $\vjb,$ $\vjc$ can be directly determined from the input). Hence, there is only one reduced solution if, and only if, the solution to each of the subproblems $\textsc{DR}(\nu),$ $\nu\in[3],$ is unique.

The conditions in Lem.~\ref{lemma:mono}(\ref{mono3}) and Cor.~\ref{cor:mono3}(\ref{cormono3}), and hence uniqueness for the instances of $\textsc{DR}(1)$ and $\textsc{DR}(3),$ can be checked in polynomial time. For $\textsc{DR}(2)$ we have by the linear program considered in Lem.~\ref{lem:unimod2}, a basic feasible solution $v^*$ of the linear program, i.e., a vertex of the feasible region, at our disposal. Now we minimize the linear objective function $f(x)=x^Tv^*$ over the same feasible region, which again can be done in polynomial time. The solution $v^*$ is unique if, and only if, $f(x^*)={v^*}^Tv^*.$ (Note that multiple solutions contain the same number of ones as this is given by the sum of the row and column sums, respectively.) Hence uniqueness for~$\textsc{DR}(2)$, and therefore Condition~1, can be checked in polynomial time. 

Suppose now that the reduced solution~$x^*$ returned by the algorithm is unique among all reduced solutions, i.e.,~Condition~1 is satisfied. We have shown in Lem.~\ref{lem:reduction} that every solution can be reduced by applying a sequence of local switches. Reversing the sequence and the local switches, we thus see that Condition~2 holds if, and only if, there is no reversed local switch that can be applied to $x^*.$ There are $O(m^2n^2)$ possible pairs of blocks that need to be checked to form a reversed local switch, hence Condition~2 can also be checked in polynomial time.
\end{proof}

\section{Data Uncertainty}\label{sect:4}
In the proof of Thm.~\ref{thm:main2} we use a transformation from the following $\mathbb{N}\mathbb{P}$-complete problem (see~\cite{cook1971}) \\

\begin{center}
\begin{minipage}{0.95\textwidth}
\textsc{1-In-3-SAT}\\[0.8ex]
\hspace*{1ex}\begin{tabular}{l@{  }l}
Instance: & \begin{minipage}[t]{0.82\textwidth} 
Positive integers $S,$ $T,$ and a set $\mathcal{C}$ of $S$ clauses over $T$ variables $\tau_1,\dots,\tau_T,$ where each clause consists of  three literals involving three different variables.
\end{minipage}\\ 
&\\[-1.5ex]
Task: &  \begin{minipage}[t]{0.82\textwidth} 
Decide whether there exists a satisfying truth assignment for $\mathcal{C}$ that sets exactly one literal true in each clause.  
\end{minipage}
\end{tabular}
\end{minipage}
\end{center}\vspace*{2ex}

For a given instance of \textsc{1-In-3-SAT} we will construct a circuit board that contains an initializer, several connectors, and clause chips. A truth assignment is transmitted through the circuit board, the clause chips ensure that the clauses are satisfied. 

The structure of our proof is in some ways similar to the proof from~\cite{ggp-99} that establishes the $\mathbb{N}\mathbb{P}$-hardness of the task of reconstructing lattice sets from X-rays taken in three or more directions.
 However, we need to deviate from \cite{ggp-99} in several key aspects, because the block constraints do not give a direct way of controlling points over large distances, and the elimination of ``unwanted solutions'' inside individual blocks seems also problematic. A fundamental difference to~\cite{ggp-99} is, for instance, that we encode the Boolean values for the variables by specific types of blocks, and the satisfiability of the clauses is verified via row and column sums. At first glance it seems that part of the variable assignment information is lost after verification, but it turns out that we can recover it from ``redundant'' information in our encoding.

\emph{Key ideas of the proof of Thm.~\ref{thm:main2}.}
Before we start with the detailed proof, we illustrate the general ideas by means of an example. Suppose, the instance $\mathcal{I}$ of \textsc{1-In-3-SAT} is given by $\mathcal{I}:=(S,T,\mathcal{C}):=(1,4,\{\tau_1\vee \lnot\tau_2 \vee \tau_3\}).$ 
We will define an instance $\mathcal{I}'$ of \textsc{nDR}$(\varepsilon)$ such that there is a solution for $\mathcal{I}'$ if, and only if, there is one for $\mathcal{I}.$ To this end, we construct a circuit board contained in the box $[34]\times[34].$ With each $(p,q)\in [34]\times[34]$ we associate a variable $\xi_{p,q}.$ Again, the figures will show the associated pixels $(p,q)+[0,1]^2;$ see Fig.~\ref{fig:gridimage}.

The circuit board consists of three major types of components: An \emph{initializer}, \emph{connectors}, and \emph{clause chips}. The initializer and the connectors are rather similar. The initializer contains for every variable $\tau_t,$ $t\in[T],$ a so-called \emph{$\tau_t$-chip}, the connectors contain for every variable a \emph{$\lnot\tau_t$-chip.} The clause chips are more complex as they consist of two \emph{collectors}, two \emph{verifiers}, and a \emph{transmitter}. 

Figure~\ref{fig:P2example}(a) shows the circuit board for our instance. 

\begin{figure}[htb] 
\centering
\subfigure[]{\includegraphics[width=0.47\textwidth]{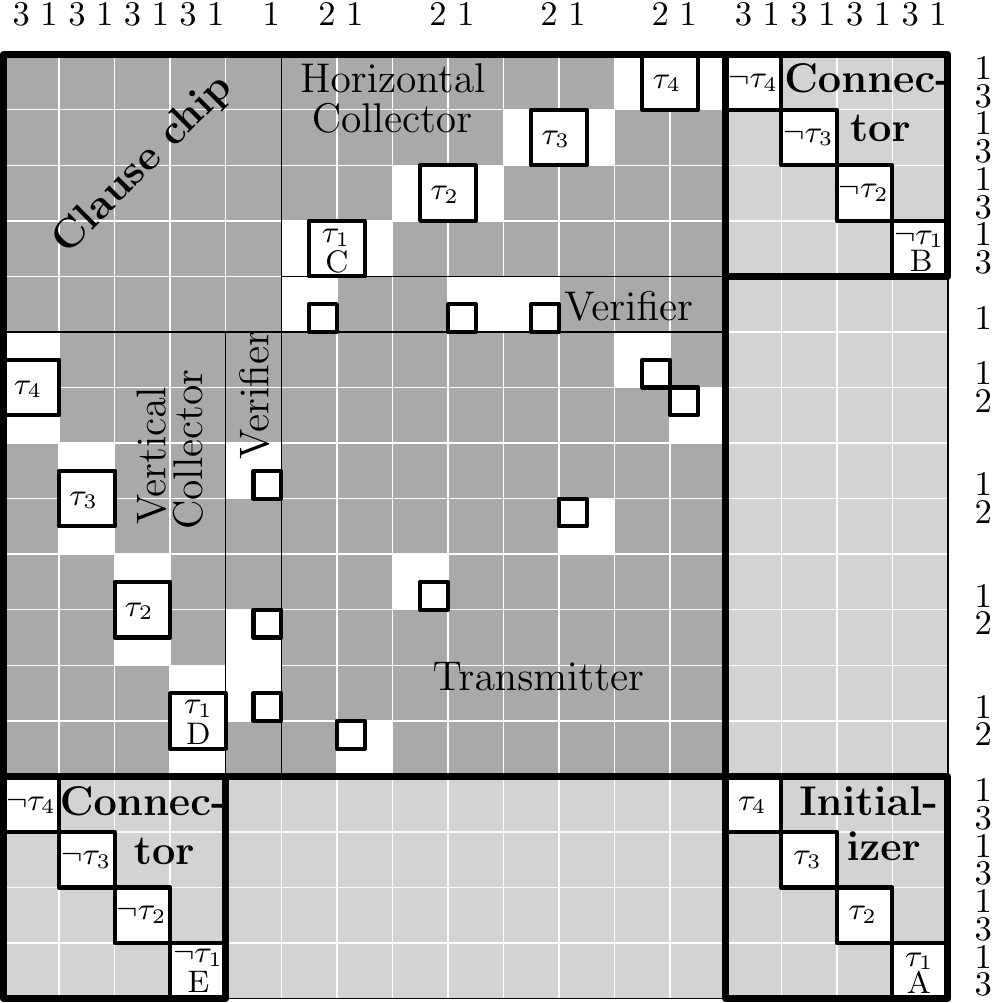}}\hfill
\subfigure[]{\includegraphics[width=0.47\textwidth]{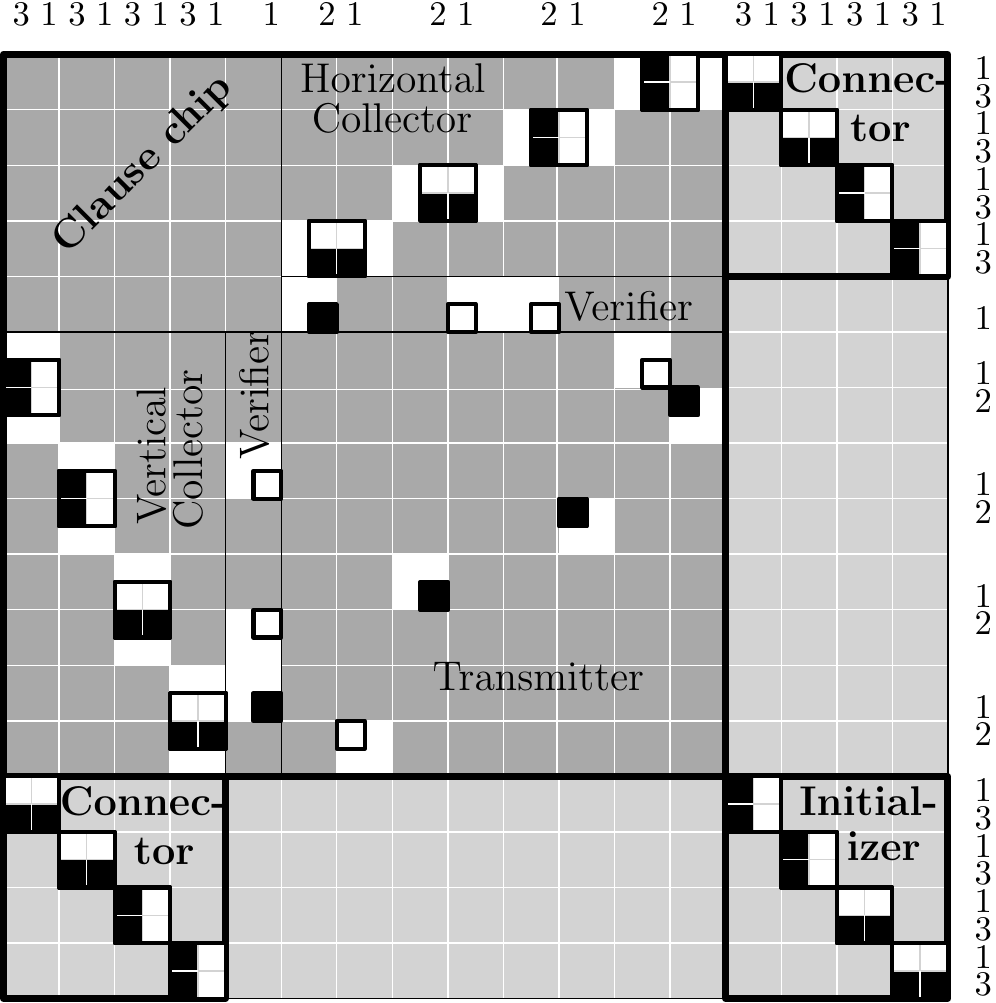}}
\caption{Transformation from \textsc{1-In-3-SAT} for the instance $\mathcal{I}:=(S,T,\mathcal{C}):=(1,4,\{\tau_1\vee \lnot\tau_2 \vee \tau_3\}).$  (a) The circuit board. By setting to zero suitable blocks and row and column sums we make sure that the non-zero components $\xi^*_{p,q}$ of a solution are only possible in the bold-framed boxes within the white blocks in the clause chip, connectors, and the initializer.  (b) A  solution $x^*$ (non-zero components~$\xi^*_{p,q}$ are depicted as black pixels), representing the solution $(\tau^*_1,\tau^*_2,\tau^*_3,\tau^*_4)=(\textsc{True},\textsc{True},\textsc{False},\textsc{False})$ of~$\mathcal{I}.$ }\label{fig:P2example}
\end{figure}

The gray areas indicate blocks that are set to zero via block constraints. By setting to zero suitable row and column sums we make sure that non-zero components $\xi^*_{p,q}$ are only possible in the white bold-framed boxes. The non-zero row and column sums are shown on the top and on the right of the figure. 

The bold-framed $2\times 2$ boxes are \emph{Boolean chips}, which are, according to their position on the circuit board, further classified into~$\tau_t$ and~$\lnot\tau_t$-chips, $t\in [T].$  In these chips we will encode the respective Boolean values for the variables $\tau_t$ and~$\lnot\tau_t$, $t\in [T].$ All blocks, except for those set to zero, are allowed to contain at most two ones. The blocks containing the $\lnot\tau_t$-chips are required to contain precisely two ones. 

In each clause chip there are two strips playing a special role. One is a horizontal strip containing a so-called \emph{horizontal verifier}, the other is a vertical strip containing a so-called \emph{vertical verifier} (see rows~25 and~26, and columns~9 and~10 in Fig.~\ref{fig:P2example}(a)). A key property of our construction is that the truth assignments will be in 1-to-1 correspondence with the solutions of the constructed \textsc{nDR}$(\varepsilon)$ instance if the row and column sum constraints related to the verifiers are left unspecified. The row and column sum constraints related to the verifiers will then ensure that the truth assignments are in fact satisfying truth assignments. 

We can already begin to see how a truth assignment for our particular instance $\mathcal{I}$ will provide a solution for $\mathcal{I}'.$ Utilizing the type~1 and type~2 blocks shown in Fig.~\ref{fig:P2types}, we set each $\tau_t$-chip, $t\in[T],$ of the initializer to type~1 if $\tau_t^*=\textsc{True},$ otherwise we set it to type~2. 

\begin{figure}[htb] 
\centering
\includegraphics[width=0.4\textwidth]{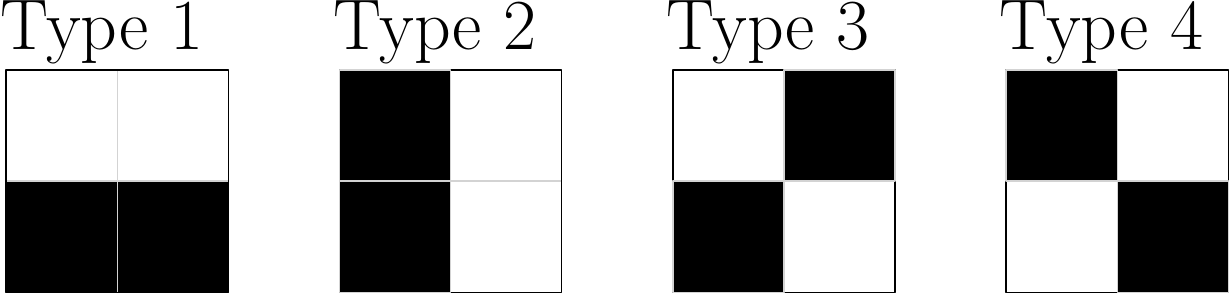}\vspace*{1ex}
\caption{The four possible types of blocks for $\tau_t$- and $\lnot \tau_t$-chips, $t\in[T].$  In $\tau_t$-chips, the types~1 and~2 represent the Boolean values \textsc{True} and \textsc{False}, respectively. (Note that the block types~1-4 correspond to the previously introduced types~$\mathfrak{B}_1,$ $\mathfrak{B}_{3,1},$ $\mathfrak{B}_{3,3},$ and $\mathfrak{B}_{3,4},$ respectively. The new labeling is more concise in the present context as no grouping of different blocks into a single type is required anymore.) }\label{fig:P2types}
\end{figure}

Let us, as an example, consider $t=1$ and the sequence of Boolean chips marked A, B, C, D, and E in Fig.~\ref{fig:P2example}(a); see also Fig.~\ref{fig:transmission}. The chips~A and~B lie in the same vertical strip, and since~A is of type~1,~B needs to be of type~2 in order to satisfy the column sums in the vertical strip. Chip~C must be of type~1 to satisfy the row sums in the horizontal strip. Then, the remaining points in the two vertical strips intersecting chip~C are uniquely determined by the column sums. In particular, the point of the transmitter in a row intersecting~D must be 0. Now, chip~D needs to contain two ones since the vertical strip containing chip~D contains four ones, two of which need to be contained in chip~E via block constraints. Hence, we conclude from the row sums that~D is of type~1. Considering now the vertical strip intersecting~D we see that chip~E is of type~2, and therefore, together with chip~A, satisfies the row sums in the horizontal strip. Thus, the $\tau_1$-chips are consistently of type~1, and the $\lnot\tau_1$-chips are of type~2. Note that we have satisfied all row and column sums for the strips intersecting one of the chips A,B,C,D, and E. Moreover, the block constraints in these chips are also satisfied. 

\begin{figure}[htb] 
\centering
\includegraphics[width=0.45\textwidth]{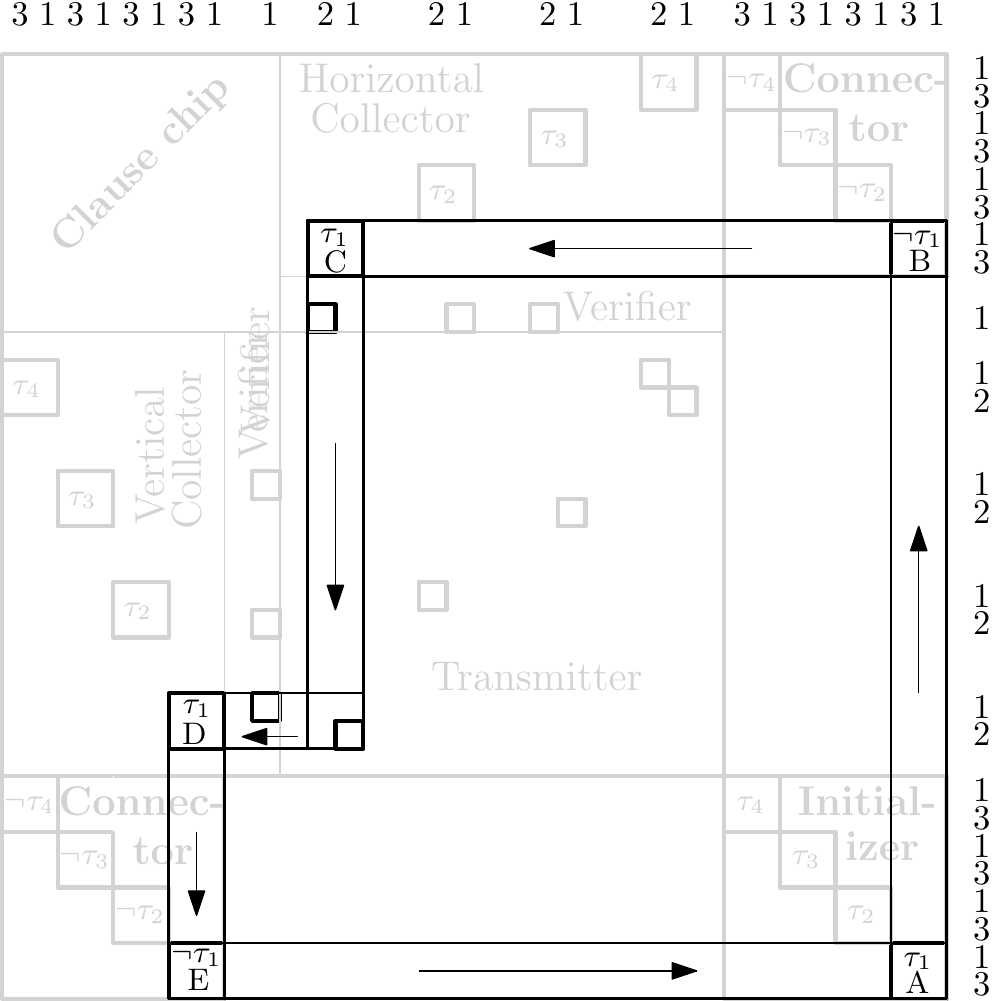}\vspace*{1ex}
\caption{The chain of reasoning (vertical transmission) for deducing the types of the Boolean chips marked A, B, C, D, and E from Fig.~\ref{fig:P2example}. }\label{fig:transmission}
\end{figure}

Turning to the Boolean chips for $t=2,$ we conclude with a similar reasoning that the $\tau_2$- and $\lnot\tau_2$-chips are consistently of type~1 and type~2, respectively.  We have now satisfied the block constraints for the $\tau_2$- and $\lnot\tau_2$-chips, Additionally, we satisfied the row and column sums for the strips intersecting one of these chips. Clearly, we have not violated any of the previously satisfied constraints as there is no row or column that intersects both a Boolean chip for $t=1$ and $t=2.$ 

We say that we have deduced the types of the Boolean chips for~$t\in\{1,2\}$ by \emph{vertical transmission}, because in our previous chain of arguments we started by considering the vertical strip containing the respective Boolean chip of the initializer (see also Fig.~\ref{fig:transmission}). 

Note that such a vertical transmission does not directly work for the assignment \textsc{False} of the variables~$\tau_3,$ $\tau_4$ since the column sums do not allow us to deduce a unique solution for the respective $\lnot\tau_t$-chip. Hence for the Boolean chips for $t\in\{3,4\}$ we resort to \emph{horizontal transmission} to deduce the types of the respective $\tau_t$- and $\lnot\tau_t$-chips. In this way we conclude that the $\tau_t$- and $\lnot\tau_t$-chips are consistently type~2 and type~1, respectively.

Figure~\ref{fig:P2example}(b) shows the solution that we obtain by the previous arguments. All constraints, in particular also the row and column sums in the verifiers, are satisfied. We will later see in general that there is a single one in each of the two verifiers of the $s$-clause chip, $s\in S,$ if, and only if, exactly one of the three literals appearing in the $s$-th clause is \textsc{True}. 

Now, consider the reverse direction for the proof, i.e., suppose a solution $x^*$ to our instance $\mathcal{I}'$ of Fig.~\ref{fig:P2example}(a) is given. Note that if we can ensure that the types of the Boolean chips in the initializer are either~1 or~2 then by the previous arguments we have a satisfying truth assignment for $\mathcal{I}$ by setting $\tau_t^*:=\textsc{True},$ $t\in [T],$ if, and only if, the corresponding $\tau_t$-chip of the initializer is of type~1.

The respective block, row and column sum constraints allow only the four possible types of blocks for the Boolean chips of the initializer shown in Fig.~\ref{fig:P2types}. The types~3 and~4, however, are ruled out by the following ``global'' argument: Suppose, for some~$t\in[T],$ the $\tau_t$-chip of the initializer is of type~$\ell\in\{3,4\}.$ By horizontal transmission, we then conclude that the $\lnot\tau_t$-chips need to be of type~1 (by transmitting, as before in a unique way, through each clause chip). By vertical transmission, however, we conclude that the $\lnot\tau_t$-chips are of type~2, which is a contradiction. 

We remark that effectively we utilize the noisy block constraints at three different places in our construction. They are needed, because we can neither prescribe the exact number of ones in the $\tau_t$-chips of the horizontal or the vertical collectors, nor in the $(s,t)$-configurations that are introduced later and which involve the transmitters (see also Figs.~\ref{fig:P2collectors},~\ref{fig:P2collectorfilled1}, and~\ref{fig:P2collectorfilledX}).

Having illustrated the key ideas behind the construction, we now turn to the formal proof. 

\begin{proof}[Proof of Thm.~\ref{thm:main2}]
Let in the following $\mathcal{I}:=(S,T,\mathcal{C})$ denote an instance of \textsc{1-In-3-SAT}. We will define an instance $\mathcal{I}'$ of \textsc{nDR}$(\varepsilon)$ such that there is a solution for $\mathcal{I}'$ if, and only if, there is one for $\mathcal{I}.$ The circuit board will be contained in the box
\[
[m]\times[n]:=[S(6T+2)+2T]^2.
\]
With each $(p,q)\in [m]\times[n]$ we associate a variable $\xi_{p,q}.$ Figure~\ref{fig:P2layout} illustrates the general layout of the construction. 

\begin{figure}[htb] 
\centering
\includegraphics[width=0.43\textwidth]{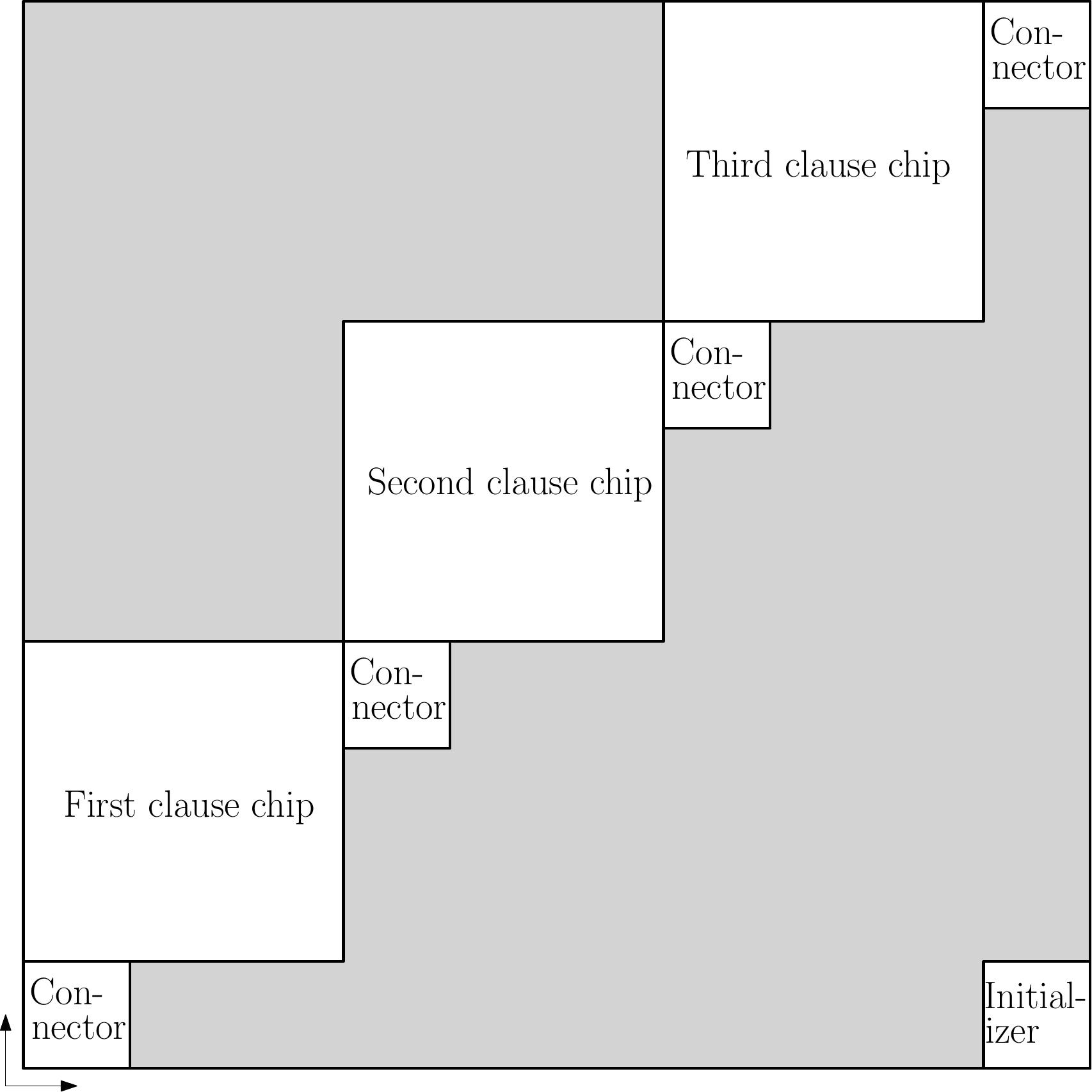}
\caption{The general layout of the circuit board (here for $S=3$). Non-zero components $\xi^*_{p,q}$ of a solution will be possible at most for~$(p,q)$ contained in the white area.}\label{fig:P2layout}
\end{figure}

We define the instance $\mathcal{I'}$ of \textsc{nDR}$(\varepsilon)$ by specifying the row sums $r_1,\dots,r_n,$ the column sums $c_1,\dots,c_m,$ and the block constraints. For notational convenience we define the block constraints via a function $f:C(m,n,2) \to \{(=,0),(=,2),(\approx, 1)\}.$ In this notation, $f(i,j)=(=,\nu),$ $\nu\in\{0,2\},$ denotes the block constraint  
\[
\sum_{(p,q)\in B(i,j)}\xi_{p,q}= \nu,
\]  while $f(i,j)=(\approx,1)$ signifies the block constraint 
\[
\sum_{(p,q)\in B(i,j)}\xi_{p,q}\in 1+([-\varepsilon,\varepsilon]\cap \mathbb{Z}).
\] The set $R$ of reliable block constraints for $\mathcal{I}'$ is then \[R:=\left\{(i,j)\in C(m,n,2)\::\: f(i,j)\neq(\approx,1)\right\}.\]

The different components of the circuit board are placed at specific positions. For a more compact definition of these positions we set \[a_s:=(6T+2)(s-1)+1, \qquad s\in[S+1].\]

Turning to the definition of the initializer, we remark that the initializer is contained inside the box $(a_{S+1},1)+[2T-1]_0^2.$ We will specify the row and column sums when the corresponding connectors are introduced. In terms of the block constraints we define for every $(u,v)\in[T-1]_0^2:$ 
\[
f(a_{S+1}+2u,1+2v) := \left\{\begin{array}{lll} (\approx,1) &:& u+v=T-1,\\ (=,0) &:& \textnormal{otherwise}.\end{array}\right. 
\]
The block $B(a_{S+1}+2(T-t),2t-1),$ $t\in[T],$ will be called  \emph{$\tau_{t}$-chip (of the initializer)}. For an illustration of an initializer, see the bottom right $[8]\times[8]$-box in Fig.~\ref{fig:P2example}(a).

We define $S+1$ connectors. Each connector is contained inside a box $(a_s,a_s)+[2T-1]_0^2$ with $s\in[S+1].$ In terms of block constrains we define for every $s\in [S+1]$ and $(u,v)\in [T-1]_0^2:$
\[
f(a_s+2u,a_s+2v) := \left\{\begin{array}{lll} (=,2) &:& u+v=T-1,\\ (=,0) &:& \textnormal{otherwise}.\end{array}\right. 
\] 
For the row and column sums we define for every $s\in [S+1]$ and $l\in[2T-1]_0:$ 
\[
r_{a_s+l} := c_{a_s+l}:= \left\{\begin{array}{lll} 3 &:& l \in 2\mathbb{N}_0,\\ 1 &:& \textnormal{otherwise}.\end{array}\right. 
\] 
The block $B(a_s+2(T-t),a_s+2(t-1)),$ $s\in [S+1],$ $t\in[T],$ will be called  \emph{$\lnot\tau_{t}$-chip (of the $s$-th connector)}.   For an illustration of a connector, see the bottom left $[8]\times[8]$-box in Fig.~\ref{fig:P2example}(a).

Next we define the $S$ clause chips. Each clause chip is contained in a box $(a_s,a_s+2T)+[6T+1]_0^2$ with $s\in[S].$ 
The box $(a_s,a_s+2T)+[2T-1]_0\times[4T-1]_0$ is a \emph{(vertical) collector}, the box $(a_s+2T,a_s+2T)+[1]_0\times[4T-1]_0$ is a \emph{(vertical) verifier}, the box $(a_s+2T+2,a_s+2T)+[4T-1]_0^2$ is a \emph{transmitter}, the box $(a_s+2T+2,a_s+6T)+[4T-1]_0\times[1]_0$ is a \emph{(horizontal) verifier}, and the box $(a_s+2T+2,a_s+6T+2)+[4T-1]_0\times[2T-1]_0$ is a \emph{(horizontal) collector}.   For an illustration of a clause chip, see the top left $[26]\times[26]$-box in Fig.~\ref{fig:P2example}(a) depicted in dark gray.

For $s\in [S]$ we define the row sums of the horizontal and the column sums of the vertical verifiers (we will call these row and column sums \emph{verifier sums}) by
\begin{alignat*}{5}
&r_{a_s+6T} &&:=\:\:&& c_{a_s+2T+1} &&:=1,\\
&r_{a_s+6T+1} &&:=&& c_{a_s+2T} &&:=0.
\end{alignat*}
The remaining row and columns sums for the transmitters (meeting also certain verifiers and collectors) are for $s\in [S]$ and $l\in[T-1]_0$ defined by
\begin{alignat*}{5}
&r_{a_s+2T+4l}  &&:=\:\:&&  c_{a_s+2T+4l+2} &&:=0,\\
&r_{a_s+2T+4l+1}&&:=\:\:&&  c_{a_s+2T+4l+3} &&:=2,\\
&r_{a_s+2T+4l+2}&&:=\:\:&&  c_{a_s+2T+4l+4} &&:=1,\\
&r_{a_s+2T+4l+3}&&:=\:\:&&  c_{a_s+2T+4l+5} &&:=0.
\end{alignat*}  
Note that sums for the rows and columns meeting the collectors are already defined as there is a transmitter or connector for each row and column meeting a collector (see also Fig.~\ref{fig:P2layout}). We have thus completed our definition of the row and column sums of our instance and continue now with the remaining block constraints.

For the vertical collector, we define for every $s\in [S]$ and $t\in[T],$ $v_t\in(2\mathbb{Z}+1)\cap\{3-4t,3-4t+1,\dots,4(T-t)+1\}:$
\[
f(a_s+2(T-t),a_s+2T+4t-3+v_t):=\left\{\begin{array}{lll}(\approx,1) &:& v_t\in\{-1,1\},\\(=,0)&:& \textnormal{otherwise}.\end{array}\right.
\]
The box $(a_s+2(T-t),a_s+2T+4t-3)+[1]_0\times[1]_0,$ $t\in[T],$ will be called \emph{$\tau_t$-chip (of the $s$-th vertical collector)}. 

Similarly, for the horizontal collector we define for every $s\in [S],$ $t\in[T],$ $u_t\in(2\mathbb{Z}+1)\cap\{3-4t,3-4t+1,\dots,4(T-t)+1\}:$ 
\[
f(a_s+2T+4t-1+u_t,a_s+6T+2t):=\left\{\begin{array}{lll}(\approx,1) &:& u_t\in \{-1,1\},\\(=,0)&:& \textnormal{otherwise}.\end{array}\right.
\] The box $(a_s+2T+4t-1,a_s+6T+2t)+[1]_0\times[1]_0,$ $t\in[T],$ will be called \emph{$\tau_t$-chip (of the $s$-th horizontal collector)}. 

Note that the $\tau_t$-chips, $t\in[T],$ of the collectors are not contained in blocks. In fact, each such $\tau_t$-chip intersects two blocks (which, in turn, are allowed to contain at most two ones). For any $t\in[T]$ we will also refer to the $\tau_t$- and $\lnot\tau_t$-chips on the circuit board as \emph{Boolean chips.}  

For the empty space above the vertical collector, i.e., for $s\in [S]$ and $(u,v)\in[T-1]_0^2,$  we define
\[f(a_s+2u,a_s+6T+2v):=(=,0).\]

Now we turn to the block constraints for the clause-dependent part of chip, i.e., the verifiers and the transmitters.
  
With $U_s$ and $N_s,$ $s\in [S],$ we denote the indices of the variables that appear unnegated and, respectively, negated in the $s$-th clause. For instance, for $\tau_1\vee\lnot\tau_2\vee\tau_3$ (regarded as the first clause) we have $U_1=\{1,3\}$ and $N_1=\{2\}.$ 

Depending on whether a given variable $\tau_t$ appears negated, unnegated or not at all in the $s$-th clause, we will distinguish three different types of verifier and transmission parts forming a so-called $(s,t)$-configuration. Formally, for $s\in [S]$ and $t \in [T]$  we refer to the sets of points
\[
(a_{s}+2T,a_{s}+2T)+
\left\{\begin{array}{lll}  \{(1,4t-2), (4t,4t-3), (4t-1,4T)\} &:& t \in U_{s},\\
\{(1,4t-3),(4t-1,4t-2),(4t,4T)\} &:& t\in N_{s},\\
\{(4t-1,4t-2),(4t,4t-3)\} &:& \textnormal{otherwise,}\\\end{array}\right.
\] as \emph{$(s,t)$-configuration}. Figure~\ref{fig:P2collectors} gives an illustration.

\begin{figure}[htb] 
\centering
\subfigure[]{\fbox{\includegraphics[width=.16\textwidth]{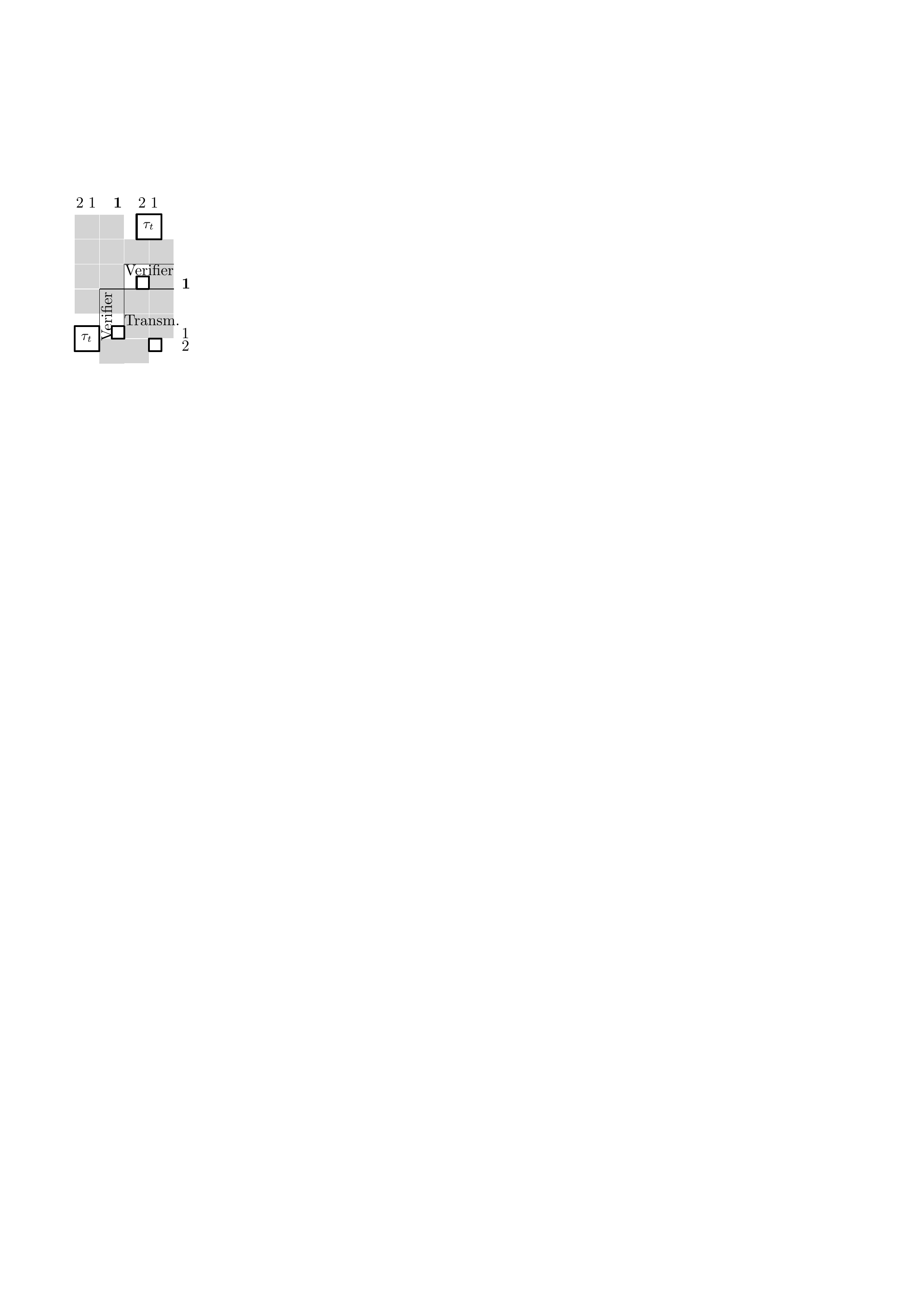}}}\hspace*{4ex}
\subfigure[]{\fbox{\includegraphics[width=.16\textwidth]{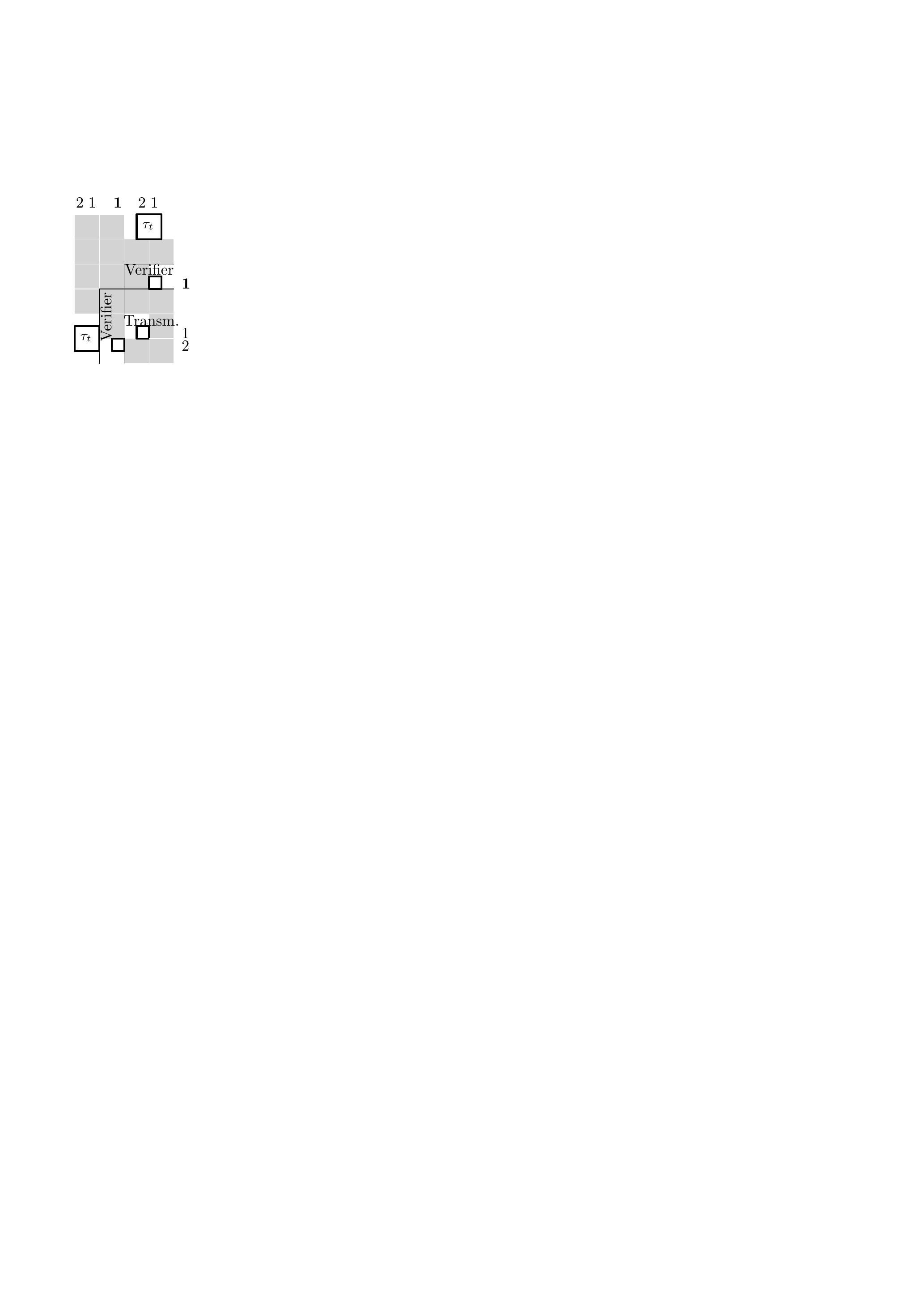}}}\hspace*{4ex}
\subfigure[]{\fbox{\includegraphics[width=.16\textwidth]{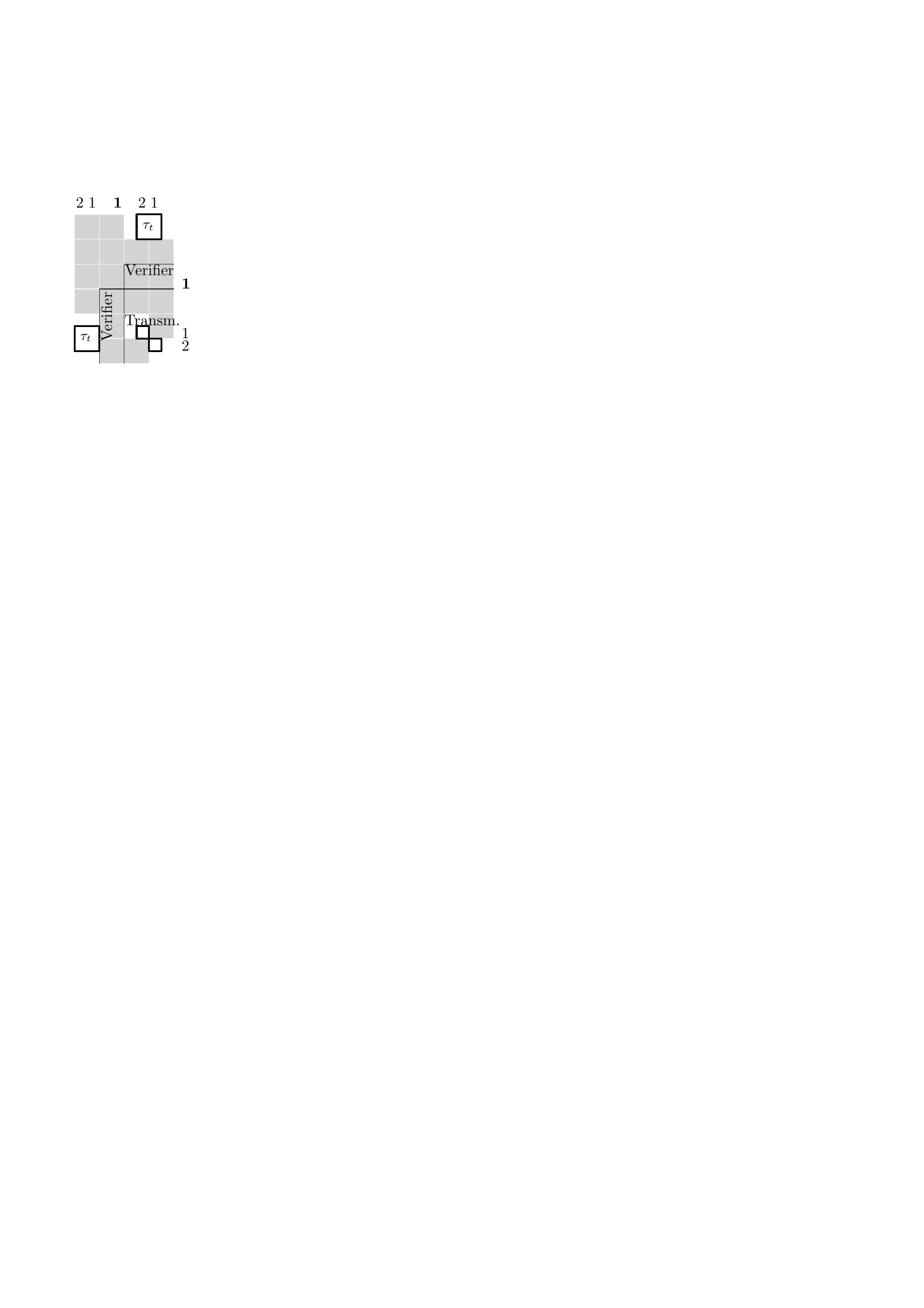}}}
\caption{The 3 possible $(s,t)$-configurations (depicted as bold-framed $1\times1$ boxes) for (a)~$t\in U_s,$ (b)~$t\in N_s,$ and (c)~$t\in[T]\setminus(U_s\cup N_s)$.}\label{fig:P2collectors}
\end{figure}

For the three different types of $(s,t)$-configurations, we need to set different blocks to zero. The formal definition is as follows.

For the vertical verifiers we define for every $s\in[S],$ $t\in [T], $ and $v\in[2]:$
{\small
\[
f(a_s+2T,a_s+2T+4t-2v):=\left\{\hspace*{-0.5ex}\begin{array}{lll}(\approx,1) &\hspace*{-1ex}:\hspace*{-1ex}& ((t\in U_s)\wedge (v=1))\vee((t
\in N_s)\wedge (v=2)),\\(=,0)&\hspace*{-1ex}:\hspace*{-1ex}& \textnormal{otherwise}.\end{array}\right.
\]}

Similarly, for the horizontal verifiers we define for every $s\in[S],$ $t\in [T],$ and $u\in[2]:$
{\small
\[
f(a_s+2T+2+4t-2u,a_s+6T):=\left\{\hspace*{-0.5ex}\begin{array}{lll}(\approx,1) &\hspace*{-1.7ex}:\hspace*{-1.7ex}& ((t\in U_s)\wedge (u=2))\vee((t
\in N_s)\wedge (u=1)),\\(=,0)&\hspace*{-1.7ex}:\hspace*{-1.7ex}& \textnormal{otherwise}.\end{array}\right.
\]}

For the transmitters we define for every $s\in[S]$ and $u\in [2T-1]:$  
{\small
\begin{align*}
f(a_s+2T+2+2u,a_s+2T+2(u-1)) &:=\left\{\hspace*{-1.5ex}\begin{array}{lll}(\approx,1) &\hspace*{-1.7ex}:\hspace*{-1.7ex}& (u\in2\mathbb{N}_0+1)\wedge((u+1)/2\in U_{s}),\\(=,0)&\hspace*{-1.7ex}:\hspace*{-1.7ex}& \textnormal{otherwise},\end{array}\right.\\
f(a_s+2T+2+2(u-1),a_s+2T+2u) &:=\left\{\hspace*{-1.5ex}\begin{array}{lll}(\approx,1) &\hspace*{-1.7ex}:\hspace*{-1.7ex}& (u\in2\mathbb{N}_0+1)\wedge ((u+1)/2\in N_{s}),\\(=,0)&\hspace*{-1.7ex}:\hspace*{-1.7ex}& \textnormal{otherwise}.\end{array}\right.
\end{align*}} The remaining block constraints in the transmitter are set to zero, i.e., for every $s\in[S],$ $(u,v)\in[2T-1]_0^2$ with $u-v\not\in\{-1,1\}$ we set
\[f(a_s+2T+2+2u,a_s+2T+2v) :=(=,0).\]

Outside the initializer, connectors, and clause chips we set everything to zero, i.e., for every $(i,j) \in ([m]\times[n])\cap C(m,n,2)$ with
\begin{multline}
(i,j)\not\in
((a_{S+1},1)+[2(T-1)]_0^2)\cup((a_{S+1},a_{S+1})+[2(T-1)]_0^2)\\ \cup\bigcup_{s\in[S]}\left(((a_s,a_s)+[2(T-1)]_0^2)\cup ((a_s,a_s+2T)+[6T+1]_0^2))\right)
\end{multline} we set $ f(i,j):=(=,0).$

This concludes the formal definition of the instance $\mathcal{I}'.$ We shall now show that~$\mathcal{I}'$ admits a solution if, and only if, the \textsc{1-In-3-SAT} instance $\mathcal{I}$  admits a solution.

Let $\mathcal{G}\subseteq [m]\times[n]$  denote the set of points of the Boolean chips and $(s,t)$-configurations, $(s,t)\in [S]\times[T].$ 

\textbf{Claim 1:} In any solution $x^*$ of $\mathcal{I}'$ we have $\xi^*_{p,q}=0$ for every $(p,q)\in ([m]\times[n])\setminus \mathcal{G}.$ 

To see this, we first remark that the claim holds for all points outside each clause chip since the only blocks there that are not set to zero are  the Boolean chips of the initializer and connectors. For the points inside a clause chip, we need to distinguish three cases since there are three differently structured $(s,t)$-configurations, in which different blocks are set to zero. They are shown in Fig.~\ref{fig:P2collectors}. In all three cases the points outside the corresponding Boolean chips in the two collectors and the $(s,t)$-configuration are either set to zero by block constraints or by zeros of suitable row or column sums. This shows the claim.

\textbf{Claim 2:} In any solution $x^*$ of $\mathcal{I}'$ every $\tau_t$- and $\lnot\tau_t$-chip, $t\in[T],$ is of one of the types from~$\{1,2,3,4\}$ shown in Fig.~\ref{fig:P2types}.

For this, consider a fixed index $t \in [T].$ Each $\tau_t$-chip is contained in a horizontal or vertical strip that, except for a $\lnot\tau_t$-chip, contains no other points of $\mathcal{G}.$ Since by the block constraints each $\lnot\tau_t$-chip contains two ones, and since the respective row or column sums in the strip sum up to four there need to be exactly two ones also in each $\tau_t$-chip.  The row and column sums enforce that no two ones are in the rightmost column and upper row of the box of each $\tau_t$-chip. Hence, there are at most the four possibilities shown in Fig.~\ref{fig:P2types}.

\textbf{Claim 3:} In any solution $x^*$ of $\mathcal{I}'$ all the $\tau_t$-chips (for a given $t\in[T]$) are either of type~1 or type~2.

To show this, consider again a fixed index $t \in [T].$ Let $V_0,\dots, V_{4S+2}\subseteq \mathcal{G}$ denote the sequence with (i)~$V_0=V_{4S+2}$ is the $\tau_t$-chip of the initializer, (ii)~$V_1$ is the $\lnot\tau_t$-chip of the $(S+1)$-th connector, (iii)~$V_{4s+1}$ is the $\lnot\tau_t$-chip of the $(S-s+1)$-th connector, (iv)~$V_{4s-2}$ is the $\tau_t$-chip of the $(S-s+1)$-th horizontal collector, (v)~$V_{4s-1}$ is the set of points in the transmitter of the $(s,t)$-configuration of the $(S-s+1)$-th clause, and (vi)~$V_{4s}$ is the $\tau_t$-chip of the $(S-s+1)$-th vertical collector, $s\in [S];$ see Fig.~\ref{fig:transmission} for the case $S=1$ and Fig.~\ref{fig:P2layout} again for the general layout.

By horizontal transmission (see Fig.~\ref{fig:P2collectorfilledX}) and Claim~2, we have the following implication: If a $\tau_t$-chip of $V_{2l},$ $l\in [2S]_0,$ is of type~2, 3 or $4,$ then the $\tau_t$-chip of $V_{2l+2}$ is of type~2, and the points of $V_{2l+1}$ are also uniquely determined.

By vertical transmission (see Fig.~\ref{fig:P2collectorfilled1}) and Claim~2, we have the following implication:
If a $\tau_t$-chip of $V_{2l},$ $l\in [2S+1],$ is of type~1, 3 or $4,$ then the $\tau_t$-chip of $V_{2l-2}$ is of type~1, and the points of $V_{2l-1}$ are also uniquely determined.

\begin{figure}[htb] 
\centering
\includegraphics[width=0.9\textwidth]{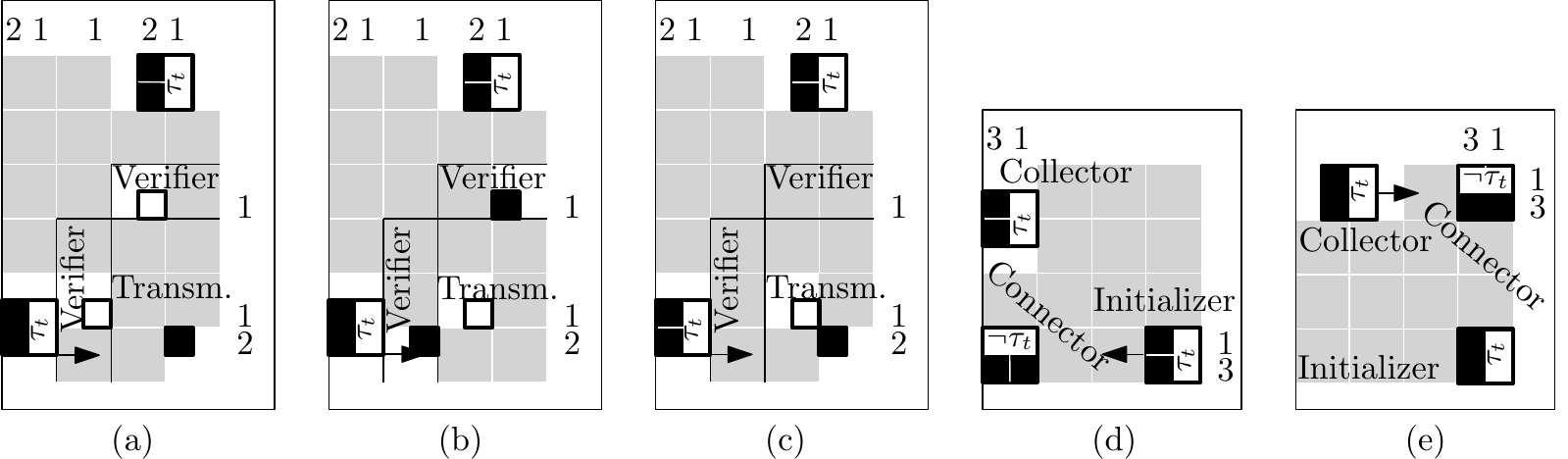}\vspace*{1ex}
\caption{Horizontal transmission of a type~2 block. (a,b,c) Transmission through the three differently structured $(s,t)$-configurations. (d,e) Transmission through a connector. Type~3 and~4 blocks are transmitted in the same way (only the initial block, indicated by the arrow tail, is replace by the respective type~3 or 4 block).}\label{fig:P2collectorfilledX}
\end{figure}

\begin{figure}[htb] 
\centering
\includegraphics[width=0.9\textwidth]{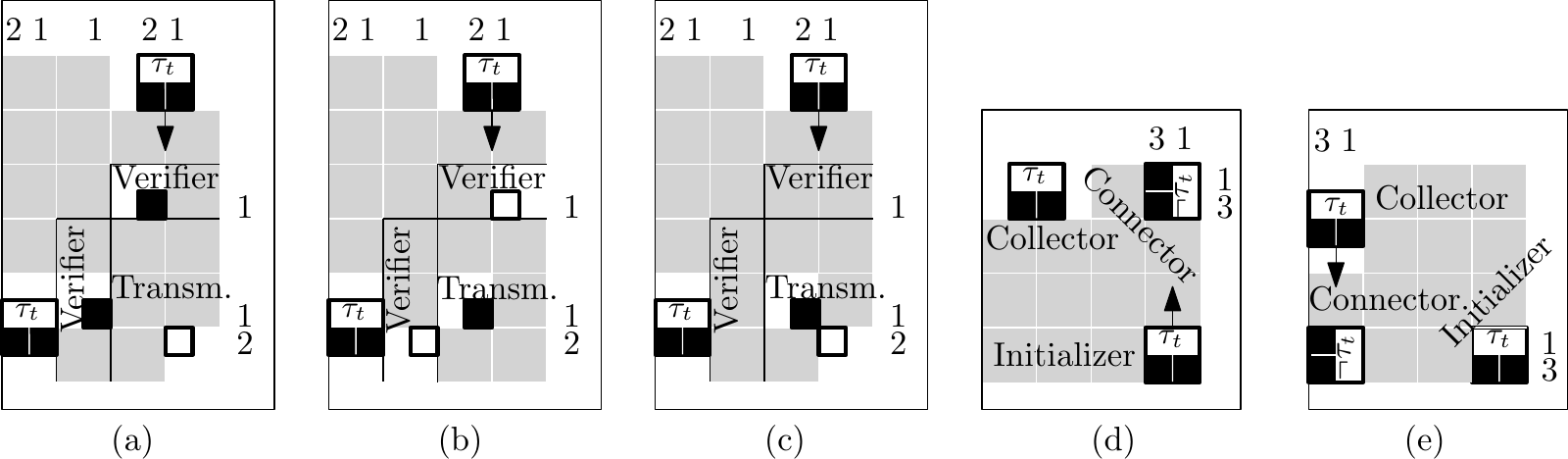}\vspace*{1ex}
\caption{Vertical transmission of a type~1 block. (a,b,c) Transmission through the three differently structured $(s,t)$-configurations. (d,e) Transmission through a connector. Type~3 and~4 blocks are transmitted in the same way (only the initial block, indicated by the arrow tail, is replace by the respective type~3 or 4 block).}\label{fig:P2collectorfilled1}
\end{figure}

Hence, if we start at an arbitrary $\tau_t$-chip and perform vertical and horizontal transmissions that meet at some Boolean chip we obtain a contradiction unless the $\tau_t$-chip was of type~1 or~2.

The types~1 and~2 can be associated with Boolean values. In $\tau_t$-chips, and in particular those of the initializer, we associate with type~1 the value \textsc{True} and with type~2 the value \textsc{False}.

\textbf{Claim 4:} There is a 1-to-1 correspondence between the (not-necessarily satisfying) truth-assignments for $\mathcal{I}$ and the solutions~$x^*$ to the $\mathcal{I}''$-variant of $\mathcal{I}'$ that allows violation of the verifier sums, i.e., of the column sum constraints $c_i=1$ for the vertical verifiers or the row sum constraints $r_i=1$ for the horizontal verifiers. 

By construction, there is neither a horizontal nor a vertical strip that intersects the Boolean chips for two variables $\tau_{i_1},$ $\tau_{i_2},$ with $i_1\neq i_2.$ The arguments from Claim~3 are thus independently valid for each $t\in[T].$ We can hence consider an arbitrary (not-necessarily satisfying) truth-assignment for $\mathcal{I},$ fill in the corresponding types in the initializer and extend them to yield a solution to the $\mathcal{I}''$-variant of $\mathcal{I}'$ with unspecified verifier sums. Vice versa, the chip types in the initializer of a solution to $\mathcal{I}''$ yield a truth-assignment for~$\mathcal{I}.$  We have already seen in the proof of Claim~3 that the type of each $\tau_t$-chip, $t\in[T],$ in the initializer determines all ones in the corresponding $\lnot\tau_t$-chips and $(s,t)$-configurations, $s\in[S].$ Hence, the correspondence is indeed 1-to-1.

\textbf{Claim 5:} The satisfying truth-assignments of $\mathcal{I}$ are in 1-to-1 correspondence with the solutions of~$\mathcal{I}'.$ 

This can now be seen by observing that the $\tau_t$-chip, $t\in[T],$ of the $s$-th horizontal and vertical collector, $s\in [S],$ contributes a one to the horizontal and also to the vertical verifier in the $s$-th clause chip if, and only if, the $\tau_t$-chip is \textsc{True} (is of type~1) with $t\in U_s$ or the $\tau_t$-chip is \textsc{False} (is of type~2) with $t\in N_s$; see Figs.~\ref{fig:P2collectors}, \ref{fig:P2collectorfilled1}(a,b,c), and~\ref{fig:P2collectorfilledX}(a,b,c).  The verifier sums thus ensure that exactly one literal in the $s$-th clause is \textsc{True}. This, together with Claim~4, proves the claim.

By Claim~5, $\mathcal{I}'$ admits a solution if, and only if, $\mathcal{I}$ admits a solution.

Finally, we note that the transformation runs in polynomial time.
\end{proof}

In section~\ref{sect:notation} we remarked that $n\textsc{DR}(\varepsilon)$ can be viewed as version of $\textsc{DR}$ where small ``occasional'' uncertainties in the gray levels are allowed. Reviewing the proof of Thm.~\ref{thm:main2}, the term ``occasional'' can be quantified as meaning ``of the order of the square root of the number of blocks.''  More precisely, of $(S(3T+1)+T)^2$ blocks on the circuit board, there are $S(6T+3)+T$ with uncertainty.

Let us further remark that the arguments in the above proof do not rely on the particular value of $\varepsilon>0$ in the noisy block constraints. These constraints can be replaced by any other types of block constraints, certainly as long as they allow for 0, 1, and~2 ones to be contained in each of the corresponding blocks. By this observation it is possible to adapt our proof to several other contexts where the reconstruction task involves other types or combinations of block constraints; see \cite{agwindowconstraints}.

Finally, note that the transformation in the proof of Thm.~\ref{thm:main2} is parsimonious (see Claim~5 in the above proof). Hence the problem of deciding whether a given solution of an instance of \textsc{nDR}$(\varepsilon)$ with $\varepsilon>0$ has a non-unique solution is $\mathbb{N}\mathbb{P}$-complete (which proves Cor.~\ref{cor:uniqueness}); for an $\mathbb{N}\mathbb{P}$-hardness proof of $\textsc{Unique-1-in-3-SAT}$ see \cite[Lem.~4.2]{ggp-99}. 

\begin{proof}[Proof of Cor.~\ref{cor:largebox}] We use a transformation from \textsc{nDR}$(\varepsilon).$ Let \[\mathcal{I}:=(m,n,r_1,\dots,r_n,c_1,\dots,c_m,R,v(1,1),\dots,v(m-1,n-1))\] denote an instance of \textsc{nDR}$(\varepsilon).$ Note that $n$ and $m$ are even.

We set $m':=mk/2$ and $n':=nk/2.$ For $(i,j)\in ([m]\times[n])\cap C(m,n,2)$ and $l\in[k]$ we define
\begin{align*}
r'_{\frac{k}{2}(j-1)+l}&:=\left\{\begin{array}{lll}r_{j+l}&:&l\in[2],\\ 0&:&\textnormal{otherwise,}\end{array}\right.\\
c'_{\frac{k}{2}(i-1)+l}&:=\left\{\begin{array}{lll}c_{i+l}&:&l\in[2],\\ 0&:&\textnormal{otherwise,}\end{array}\right.\\
v'(k(i-1)/2+1,k(j-1)/2+1))&:=v(i,j),
\end{align*} and
\[R':=\{(k(i-1)/2+1,k(j-1)/2+1) : (i,j)\in R\}.\]

This gives the instance \[\mathcal{I'}:=(m',n',r'_1,\dots,r'_{n'},c'_1,\dots,c'_{m'},R',v'(1,1),\dots,v'(m'-k+1,n'-k+1))\] of~\textsc{nSR}$(k,\varepsilon).$ Clearly, the instance~$\mathcal{I}$ of~\textsc{nDR}$(\varepsilon)$ admits a solution if, and only if, the instance~$\mathcal{I}'$ of \textsc{nSR}$(k,\varepsilon)$ admits a solution (by filling/extracting the~$2\times2$-blocks of $\mathcal{I}$ into/from the~$k\times k$-blocks of $\mathcal{I}'$). This is a polynomial-time transformation, proving the corollary.
\end{proof}

\section{Final Remarks}\label{sect:5}
Recall that in the proof of Thm.~\ref{thm:main1} we have made extensive use of the concept of local switches. Each such switch involves four points. 
Figure~\ref{fig:largeswitchingcomp} shows, on the other hand, two solutions of a \textsc{DR} instance that differ by~$20$ $1\times1$-pixels. It is easily checked that these are the only solutions for the given instance (Note that we used such a structure in the proof of Thm.~\ref{thm:main2}). Hence, there is no sequence of small switches transforming one solution into the other. The existence of such large irreducible switches is well-known for the $\mathbb{N}\mathbb{P}$-hard reconstruction problem from more than two directions \cite{kongherman98} (and would follow independently from $\mathbb{P}\neq\mathbb{N}\mathbb{P}$). The problem $\textsc{DR},$ however, admits large switches albeit is polynomial-time solvable. 

\begin{figure}[htb] 
\centering
\subfigure[]{\includegraphics[width=0.20\textwidth]{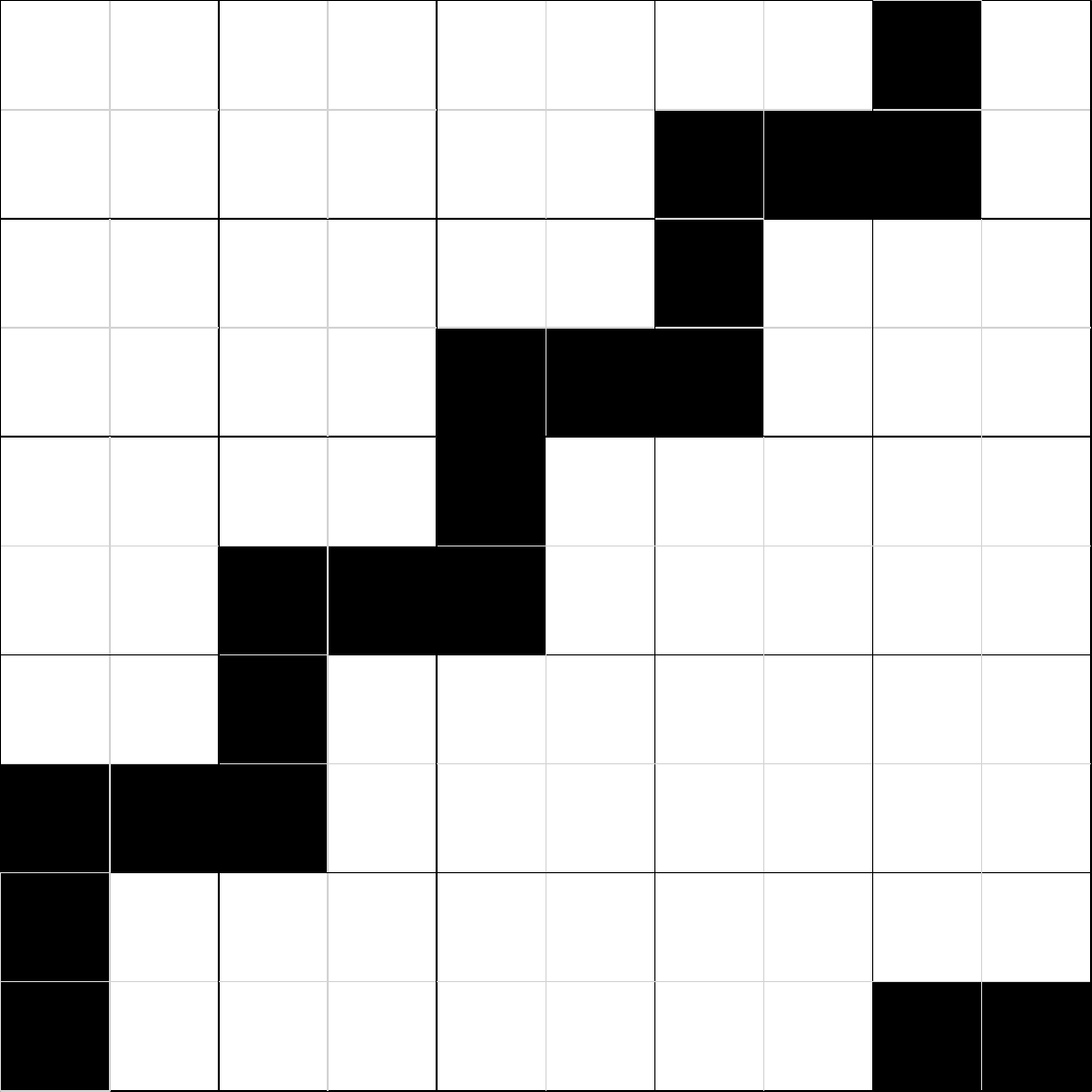}}\hspace*{4ex}
\subfigure[]{\includegraphics[width=0.20\textwidth]{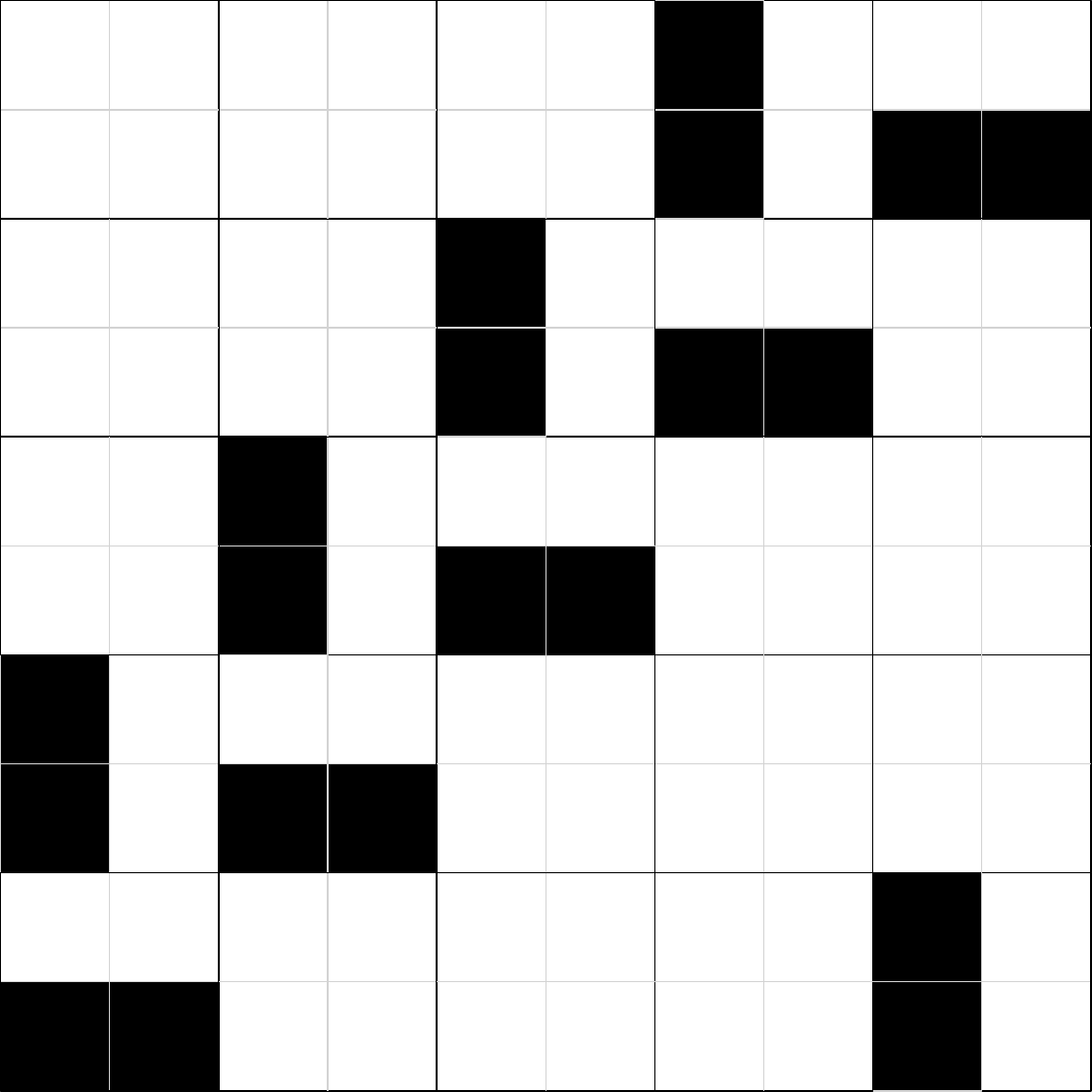}}
\caption{An example of a large switching component (in the sense of~\cite{kongherman98}) for an instance of~\textsc{DR}. The solutions shown in~(a) and (b) are the only two solutions for that instance. Both solutions are reduced.}\label{fig:largeswitchingcomp}
\end{figure}

In addition to the block constraints our computational tasks rely only on row and column sums, i.e., projection data from two viewing angles. For more than two viewing directions we obtain the $\mathbb{N}\mathbb{P}$-hardness of \textsc{nSR}$(k,\varepsilon)$ for any $k\geq2$ and $\varepsilon>0$ directly from the construction in~\cite{ggp-99}. The $\mathbb{N}\mathbb{P}$-hardness of the corresponding variants of \textsc{nSR}$(k,0)$ for $k\geq2,$ can also be derived from~\cite{ggp-99}. First the construction is thinned out so as to make sure that each block contains at most one element of the grid of candidate points. Then, a second copy with inverted colors is interwoven by applying a translation by $(1,1)^T$ or $(1,-1)^T.$ This ensures that each non-empty block contains exactly one one.

The complexity status of \textsc{nSR}$(k,0)$ remains open for~$k\not\in\{1,2\}.$ We conjecture that the problem is $\mathbb{N}\mathbb{P}$-hard for larger~$k.$ 

Also, it would be interesting to analyze the effect of the results of the present paper on the ``degree of ill-posedness'' of the underlying problem (see \cite{ag06}).

\section*{Acknowledgments}
We are grateful to the anonymous referees for their valuable comments on a previous version of the paper. This work was supported by the German Research Foundation Grant GR 993/10-2 and the European COST Network~MP1207.

\end{document}